\DeclareMathOperator{\supp}{supp}
\DeclareMathOperator{\spa}{span}
\DeclareMathOperator{\ran}{ran}
\DeclareMathOperator{\tr}{tr}
\DeclareMathOperator{\rk}{rk}
\DeclareMathOperator{\opm}{M}
\newcommand{\mat}{{\bf mat}}
\newcommand{\N}{\mathbb{N}}
\newcommand{\Z}{\mathbb{Z}}
\newcommand{\R}{\mathbb{R}}
\newcommand{\C}{\mathbb{C}}
\newcommand{\cA}{\mathcal{A}}
\newcommand{\cB}{\mathcal{B}}
\newcommand{\cC}{\mathcal{C}}
\newcommand{\cF}{\mathcal{F}}
\newcommand{\cH}{\mathcal{H}}
\newcommand{\cK}{\mathcal{K}}
\newcommand{\cN}{\mathcal{N}}
\newcommand{\ti}{{\rm t}}
\newcommand{\hA}{\hat{A}}
\newcommand{\hB}{\hat{B}}
\newcommand{\hC}{\hat{C}}
\newcommand{\uA}{\text{A}}
\newcommand{\povm}{{\bf povm}}
\newcommand{\bra}[1]{\mathinner{\langle #1|}}
\newcommand{\ket}[1]{\mathinner{|#1\rangle}}
\newcommand{\braket}[2]{\mathinner{\langle #1|#2\rangle}}
\newcommand{\dyad}[1]{| #1\rangle\!\langle #1|}
\newcommand{\ketbraq}[1]{\dyad{#1}}
\newcommand{\ketbra}[2]{| #1\rangle\!\langle #2|}
\newcommand*{\mtxc}[1]{\opm_{#1}(\C)}
\newcommand*{\mtxr}[1]{\opm_{#1}(\R)}
\newtheorem{thm}{Theorem}[section]
\newtheorem{lem}[thm]{Lemma}
\newtheorem{prop}[thm]{Proposition}
\theoremstyle{remark}
\newtheorem{exa}[thm]{Example}
\newtheorem{rem}[thm]{Remark}
\numberwithin{equation}{section}
\title{Maximal device-independent randomness in every dimension}
\author[Farkas]{M\'at\'e Farkas}
\address{Department of Mathematics, University of York, United Kingdom}
\email{mate.farkas@york.ac.uk}
\author[Vol\v{c}i\v{c}]{Jurij Vol\v{c}i\v{c}}
\address{Department of Mathematics, Drexel University, Pennsylvania}
\email{jurij.volcic@drexel.edu}
\author[Storgaard]{Sigurd A.~L.~Storgaard}
\address{Department of Mathematical Sciences, University of Copenhagen, Denmark}
\email{sals@math.ku.dk}
\author[Chen]{Ranyiliu Chen}
\address{Department of Mathematical Sciences, University of Copenhagen, Denmark}
\email{rc@math.ku.dk}
\author[Man\v{c}inska]{Laura Man\v{c}inska}
\address{Department of Mathematical Sciences, University of Copenhagen, Denmark}
\email{mancinska@math.ku.dk}
\thanks{RC, LM, SS were in part supported by Villum Fonden via Villum Young Investigator grant (No 37532). LM and SS additionally acknowledge support from ERC grant (QInteract, Grant Agreement No 101078107).
JV was supported by the NSF grant DMS-2348720}
\date{\today}
\keywords{Randomness, conditional quantum entropy, device-independent certification, Bell inequalities, informationally complete positive operator-valued measures}
\begin{document}

\begin{abstract}
Random numbers are used in a wide range of sciences. In many applications, generating unpredictable \textit{private} random numbers is indispensable. Device-independent quantum random number generation is a framework that makes use of the intrinsic randomness of quantum processes to generate numbers that are fundamentally unpredictable according to our current understanding of physics. While device-independent quantum random number generation is an exceptional theoretical feat, the difficulty of controlling quantum systems makes it challenging to carry out in practice. It is therefore desirable to harness the full power of the quantum degrees of freedom (the dimension) that one can control. 
It is known that no more than $2 \log(d)$ bits of private device-independent randomness can be extracted from a quantum system of local dimension  $d$. In this paper we demonstrate that this bound can be achieved for all dimensions $d$ by providing a family of explicit protocols.
In order to obtain our result, we develop new certification techniques that can be of wider interest in device-independent applications for scenarios in which complete certification (`self-testing') is impossible or impractical.

\end{abstract}

\maketitle

\begingroup
\setlength{\parskip}{0pt}
\tableofcontents
\endgroup

\section{Introduction}\label{sec:intro}

Randomness is an essential resource in many disciplines of science. It is useful for generating samples for simulation \cite{KGV83} or training data, and various computational models rely on randomized algorithms \cite{arora2009}. For some of these applications, the \textit{privacy} of the random numbers is not essential. That is, it is not a problem if third parties have access to the random numbers. In fact, in some cases deterministic \textit{pseudo-random} numbers are advantageous for reproducibility. In other applications, however, truly unpredictable \textit{private random numbers} are indispensable. This means that
no third party should be able to a priori guess the random numbers. A prominent application is cryptography, where random numbers are widely used in encryption and decryption schemes \cite{Sha48}. It is crucial for the security of such cryptographic protocols that these random numbers are private, and only
the encoder and the decoder have access to them. In particular, no
potential eavesdropper should be able to predict these numbers.

Any random number is ultimately generated by some physical process---a roll of a die, an output of a computer algorithm, atmospheric noise, etc. If this process is described by classical
physics, then the generated number is technically pseudo-random: perfect knowledge of the initial conditions of the system generating the
random numbers makes it possible to perfectly predict these numbers, due to the deterministic nature of classical physical laws. Therefore, to generate truly private random numbers, the underlying physical process must be quantum: the fundamentally probabilistic nature of quantum theory makes it impossible to predict the outcome of certain quantum mechanical experiments. Therefore, quantum measurements have the potential to generate private randomness.

Importantly, unpredictability for the user does not necessarily imply private randomness, even if quantum mechanical processes were used to generate the random numbers. Consider the case of generating randomness by measuring the quantum state $\frac{1}{\sqrt{2}}(\ket{0} + \ket{1})$ in the basis $\{ \ket{0}, \ket{1} \}$. The outcome of this measurement is a perfectly random bit. However, the privacy of this bit cannot be guaranteed unless the state and the measurement are characterized and trusted. Indeed, these measurement statistics are also compatible with measuring one half of a bipartite state $\frac{1}{\sqrt{2}}(\ket{0} \otimes \ket{0} + \ket{1} \otimes \ket{1})$ in the basis $\{ \ket{0}, \ket{1} \}$. An eavesdropper having access to the other half can then perfectly predict the outcome of this measurement by also measuring in the basis $\{ \ket{0}, \ket{1} \}$, and thus the randomness is not private.

The framework that allows for certifying private randomness only from observed data (without a priori characterization of states and measurements) is called \textit{device-independent quantum random number generation} (DIQRNG) \cite{Colbeck2007,Pironio_2010,VV12}. This framework requires the use of a bipartite quantum system, and the private randomness of measurements performed on one half of the system can be certified based on the full bipartite measurement statistics. Crucially, this certification is based only on the correctness of quantum theory, and on the assumption that the two laboratories measuring the two systems do not leak information (a necessary assumption in any private randomness certification protocol). Importantly, DIQRNG replaces unverifiable computational assumptions (often used in classical cryptography \cite{RSA78,Regev09}, and in some quantum protocols \cite{Metger_2021,mazzoncini2023}) with information-theoretic security.

Various DIQRNG protocols have been proposed and experimentally demonstrated \cite{LXW+21,Shalm2021,Liu2021}, and less secure, device-\textit{dependent}, variants are also available commercially \cite{IDQ,Toshiba}. One of the main challenges of true DIQRNG is its resource-intensity. In order to generate private random numbers, high-quality entangled quantum systems and quantum measurement devices need to be manufactured and operated reliably. We address the fundamental question of how much device-independent randomness can be generated using fixed quantum resources. Namely, we fix the controllable degrees of freedom---the \textit{dimension}---of an entangled quantum state. Scaling up fully controllable dimensions is a major practical challenge in quantum technologies \cite{Arute2019,Madsen2022,Acharya2023}, and therefore taking full advantage of the available resources is crucial.

It is known that if a bipartite quantum state is locally $d$-dimensional, a fundamental upper bound on the certifiable private randomness is $2 \log(d)$ bits. Apart from the case of $d=2$ \cite{APVW16}, however, it was previously unknown whether this fundamental limit can be achieved, and if so, what protocol should be used. In this work, we fully solve this problem. We show that the fundamental limit of $2 \log(d)$ bits of private (device-independent) randomness can be extracted from locally $d$-dimensional systems in every dimension $d$. Moreover, our proof is constructive, providing an explicit protocol that certifies the maximal randomness. The protocol uses a locally $d$-dimensional maximally entangled state, but there is significant freedom in the choice of measurement. Our proof techniques are similar to \textit{self-testing}, which is a powerful certification tool in quantum information theory \cite{Supic2020selftestingof}. While self-testing allows for essentially uniquely identifying quantum states and measurements from observed experimental statistics, our certification tools allow for more freedom, certifying only the properties essential for randomness certification. We believe that our framework opens up new possibilities for practical device-independent certification methods in scenarios where self-testing is too demanding or unnecessary.

\section{Preliminaries}

The basic setup of DIQRNG is a so-called bipartite \textit{Bell scenario} \cite{BCP+14}. Two experimenters, commonly referred to as Alice and Bob, share a bipartite physical system and perform various measurements on their respective parts. They repeat the experiment many times in order to be able to estimate the outcome probabilities of the measurements. All conclusions of the experiment are drawn from the measurement statistics, and from the assumptions that quantum theory is correct and Alice's and Bob's laboratories do not leak any information. Crucially, Alice and Bob do not need any prior knowledge of the physical system or their measurement devices.

In quantum theory, physical systems are associated with a complex Hilbert space $\cH$, which we assume to be finite-dimensional in this work. We will often refer to the set of matrices (linear operators) on a $d$-dimensional complex Hilbert space as $M_d(\C)$. A bipartite system is associated with a tensor product of two Hilbert spaces, $\cH_A \otimes \cH_B$, and the state of the system is described by a positive semidefinite operator $\rho$ on this Hilbert space with unit trace. Measurements on a Hilbert space are described by positive operator-valued measures (POVMs), which in our case correspond to a tuple of positive semidefinite operators that add up to the identity $I$. In a Bell scenario, Alice has various measurement settings labeled by $x$, and the possible outcomes of her measurements are labeled by $a$ (the number of possible outcomes may depend on $x$). Similarly, Bob's settings are labeled by $y$ and his outcomes by $b$. An experiment is characterized by the \textit{correlation}, that is, the probabilities of Alice observing outcome $a$ and Bob $b$, upon choosing measurement settings $x$ and $y$. According to quantum theory, these probabilities are given by
\begin{align}\label{eq:correlation}
p(a,b|x,y) = \tr[ \rho (A^x_a \otimes B^y_b) ],
\end{align}
where $\rho$ is a quantum state on $\cH_A \otimes \cH_B$, $(A^x_a)_a$ is a POVM on $\cH_A$ for all $x$ and $(B^y_b)_b$ is a POVM on $\cH_B$ for all $y$.

\begin{figure}
    \centering

\begin{tikzpicture}

\shadedraw[inner color=gray!45, outer color=white,draw=white]
(0,-.6) ellipse (2.9cm and .7cm);

\draw[black, very thick] (-3.5,-1) rectangle (-2.8,-.35)
node[below,xshift=-.35cm,yshift=0cm] {{\large A}};
\draw[->] (-3.15 , .2) -- (-3.15 , -.2) node[above,yshift=.4cm] {$x$};
\draw[->] (-3.15 , -1.15) -- (-3.15 , -1.55) node[below,xshift=0cm] {$a$};

\draw[black, very thick] (3.5,-1) rectangle (2.8,-.35)
node[below,xshift=.35cm,yshift=0cm] {{\large B}};
\draw[->] (3.15 , .2) -- (3.15 , -.2) node[above,yshift=.4cm] {$y$};
\draw[->] (3.15 , -1.15) -- (3.15 , -1.55) node[below,xshift=0cm] {$b$};

\draw[black, very thick] (-.35,1.35) rectangle (.35,2)
node[below,xshift=-.35cm,yshift=0cm] {{\large E}};
\def\science#1#2 {
\node[circle, draw, minimum size=2.7mm,color=gray] (#2) at #1 {};
\foreach \ang in {0,120,240}
\draw[rotate around={\ang:#1},color=gray] #1 ellipse (1.5mm and .5mm);
\fill #1 circle (0.15mm);
}
\science{(0,.3)}{C1};
\path[draw,-,decorate,decoration={snake}] (-2.5,-.65) -- (-.5,0);
\path[draw,-,decorate,decoration={snake}] (0,1.2) -- (0,.7)
node[right, xshift=.2cm,yshift=-.2cm] {\textcolor{gray}{\small{$\ket{\psi}$}}};

\path[draw,-,decorate,decoration={snake}] (2.5,-.65) -- (.5,0);
\node at (0,-0.9) {\small{$\rho= \tr_E\dyad{\psi}$}};
\end{tikzpicture}

\caption{Bell scenario including an eavesdropper.}
\label{fig:setup_main}

\end{figure}

The aim of DIQRNG is to lower bound the randomness of certain measurement outcomes from the point of view of a potential eavesdropper, Eve, who holds part of a tripartite purification of $\rho$ (see Figure \ref{fig:setup_main} for a schematic representation). Moreover, this bound must be solely based on the observed correlation.
In this work, we are interested in the asymptotic rate of randomness of the outcome of a single setting $x$ of Alice. This rate quantifies the randomness of a single measurement outcome of the measurement $x$, in the limit of Alice and Bob performing the experiment infinitely many times. Notice that while this setup is formulated under the assumption that the quantum state and measurements behave the same way in every experimental round (i.i.d.~assumption), this assumption can be lifted in the asymptotic limit due to the entropy accumulation theorem \cite{Arnon-Friedman2018,Dupuis2020}. It is then well-known \cite{TCR09} that the asymptotic rate of randomness is lower bounded by the conditional von Neumann entropy $H(\uA|E)_{\rho_{\uA E}}$ of the classical-quantum state
\begin{align}\label{eq:cqstate}
\rho_{\uA E} = \sum_a \ketbraq{a}_\uA \otimes \tr_{AB}[ \ketbraq{\psi}( A^x_a \otimes I_B \otimes I_E )],
\end{align}
where $(A^x_a)_a$ is the POVM describing the measurement for setting $x$ and $\ket{\psi}$ is a purification of the state $\rho$ from Eq.~\eqref{eq:correlation}. Note that we denote the classical register (post-measurement) of Alice by an upright $\uA$ and the quantum register (pre-measurement) by an italic $A$. Since in a device-independent setting the only thing we assume is the observed correlation, $H(\uA|E)_{\rho_{\uA E}}$ needs to be minimized over all possible physical realizations compatible with the observed correlation $p(a,b|x,y)$ or with some function $f[ p(a,b|x,y) ]$ of it. That is, the minimization is taken over all possible Hilbert spaces $\cH_A, \cH_B$ and $\cH_E$ and all states $\ket{\psi} \in \cH_A \otimes \cH_B \otimes \cH_E$ and POVMs $(A^x_a)_a$ on $\cH_A$ and $(B^y_b)_b$ on $\cH_B$ such that
\begin{align*}
    f[ p(a,b|x,y) ] = f[ \bra{\psi} (A^x_a \otimes B^y_b \otimes I_E) \ket{\psi} ].
\end{align*}
In general, this minimization is an extremely difficult task. In certain small and/or symmetric scenarios, analytic results exist, demonstrating that DIQRNG is possible in principle \cite{Colbeck2007,Pironio_2010,VV12,Arnon-Friedman2018}. General methods for bounding the rate of randomness usually rely on numerics \cite{Bancal_2014,Brown2021,Brown2024}, and scale rather badly with the number of measurement settings and outcomes.

In this work, we analytically carry out the above minimization for a class of correlations. In particular, we are interested in bounding the device-independent randomness from certain correlations that arise from measuring a quantum state that is locally $d$-dimensional, that is, $\cH_A = \cH_B = \C^d$ (note that while we are interested in correlations generated by locally $d$-dimensional states, we certify the randomness device-independently, that is, solely from the observed correlation). It is known that in this case the maximum possible certifiable device-independent randomness is upper bounded by $2 \log(d)$ (see \cite{APVW16} for a proof for the min-entropy), and we include a proof of this fact in Appendix \ref{app:bound} for completeness (for the von Neumann entropy).

We prove that this fundamental bound can be achieved in every dimension $d$ and we provide explicit protocols achieving this bound. This result significantly extends on the current state-of-the-art: to the best of our knowledge, up to now this result was only known for $d=2$ \cite{APVW16}. Furthermore, the techniques we develop have the potential to be widely useful across device-independent quantum information processing.

\section{Setup and results}

Our setup builds on the Bell scenario studied in \cite{tavakoli21}. The authors there propose a DIQRNG protocol and numerically prove that the protocol reaches close to $2$ bits of randomness with locally 2-dimensional systems. For locally 3-dimensional systems, the authors can numerically certify approximately $3.03 $ bits of randomness, falling somewhat short of the fundamental bound $2 \log(3) \approx 3.17$. Extending the numerical verification to larger dimensions is computationally extremely demanding. Moreover, extending the methods of \cite{tavakoli21} to arbitrary dimensions relies on the conjectured existence of symmetric informationally complete (SIC) POVMs in every dimension \cite{Zauner}.

In this work, we develop new certification techniques and design explicit protocols that certify $2 \log(d)$ bits of randomness for every $d$, overcoming previous limitations. Our results are completely analytical and use the family of measurements that we introduce in this work, called \textit{balanced informationally complete} (BIC) POVMs.

Our Bell scenario is parametrized by an integer $d \ge 2$ and a $d^2 \times d^2$ matrix $S = (s_{jk})_{j,k}$ that corresponds to a BIC-POVM. Formally, a BIC-POVM in dimension $d$ is a POVM of the form $\left(\frac1d P_j\right)_{j=1}^{d^2}$, where $P_j$ are rank-1 projections that form a basis for $M_d(\C)$. BIC-POVMs can be shown to exist in every dimension $d$ (see Appendix \ref{sec:icpovm}), and in our setup we let $s_{jk} = \tr(P_j P_k)$ where $\left(\frac1d P_j\right)_{j=1}^{d^2}$ is a BIC-POVM. In the Bell scenario, Alice has $d^2(d^2-1)/2 + 1$ measurement settings. The first $d^2(d^2-1)/2$ settings are labeled by pairs $jk$ from the set $\text{Pairs}(d^2):=\{(j,k) \in [d^2] \times [d^2] \colon j<k\}$, and these measurements have three outcomes each labeled by $a \in \{1,2,\perp\}$. The last measurement setting of Alice is labeled by $\povm$ and this has $d^2$ outcomes labeled by $a \in [d^2]$. This is the measurement that Alice uses to generate randomness. On the other side, Bob has $d^2$ settings with two outcomes each, labeled by $b\in\{1,\perp\}$.

We introduce a \textit{Bell function} (a linear function of correlations, equivalently referred to as a \textit{Bell inequality}) with the aim that reaching its maximal value certifies the maximally entangled two-qudit state $\ket{\varphi_d} := \frac{1}{\sqrt{d}} \sum_{j=1}^d \ket{j} \otimes \ket{j}$ and a BIC-POVM $A^{\povm}_j = \frac1d \ketbraq{e_j} $ such that $\abs{\braket{e_j}{e_k}}^2 = s_{jk}$ for all $j,k$. The function reads
\begin{equation}
\begin{split}
&  2\sum_{j<k}\sqrt{1-s_{jk}} \Big[ p(1,1|jk,j) + p(2,1|jk,k) - p(1,1|jk,k) - p(2,1|jk,j) \Big]  \\
& - \sum_{j<k}(1-s_{jk}) \big[ p_A(1|jk) + p_A(2|jk) \big]  - d(d-2) \sum_{j=1}^{d^2} p_B(1|j) - \sum_{j=1}^{d^2} p(j,\perp|\povm,j),
\end{split}
\end{equation}
where $\sum_{j<k}$ is short-hand for $\sum_{j=1}^{d^2-1} \sum_{k=j+1}^{d^2}$, and $p_A$ and $p_B$ are Alice's and Bob's marginal probabilities, respectively.

Every Bell function has a corresponding \textit{Bell operator}, which is an operator-valued function of POVM elements $A^x_a$ and $B^y_b$. In our case, the Bell operator reads
\begin{equation} \label{e:bell_operator}
\begin{split}
W_d :=\, & 2\sum_{j<k} \sqrt{1-s_{jk}}(A^{jk}_1 - A^{jk}_2) \otimes (B_j - B_k)  -  \sum_{j<k} (1-s_{jk}) (A^{jk}_1 + A^{jk}_2)\otimes I \\  
&-  d(d-2)\sum_{j=1}^{d^2} I\otimes B_j - \sum_{j=1}^{d^2} A^\povm_j \otimes (I - B_j),    
\end{split}
\end{equation}
where we used the notation $B_j := B^j_1$ (and therefore $B^j_\perp = I - B_j$). Note that in our notation we suppressed the dependence on the $S$ matrix, as well as the fact $W_d$ is a function of $A^x_a$ and $B^y_b$. For a given set of POVMs with elements $A^x_a$ and $B^y_b$ and a given bipartite state $\rho$, the value of the Bell inequality is given by $\tr(W_d \rho)$.

One common way to show that some $\beta \in \mathbb{R}$ is an upper bound on the value of a Bell inequality is through a sum-of-squares decomposition of the \textit{shifted} Bell operator $\beta I - W_d$. Specifically, if we can write
\begin{equation}\label{e:SOS_general}
\beta I - W_d = \sum_i P_i^* P_i,
\end{equation}
for some operator-valued functions $P_i$ of $A^x_a$ and $B^y_b$ (assuming that these form valid POVMs), then we have that for every set of POVMs with elements $A^x_a$ and $B^y_b$ and for every quantum state $\rho$
\begin{equation}\label{e:Bell_bound}
\tr(W_d \rho) \le \beta,
\end{equation}
which is equivalent to saying that $\beta$ is an upper bound on the value of the Bell inequality.
The inequality in Eq.~\eqref{e:Bell_bound} comes from the fact that $P_i^*P_i$ is positive semidefinite for every $A^x_a$ and $B^y_b$ and that $\tr(\rho) = 1$ for every valid quantum state.

If Eq.~\eqref{e:Bell_bound} can be saturated then $\beta$ is a tight bound. In this case, for every set of POVMs with elements $A^x_a$ and $B^y_b$ and every state $\rho$ such that $\tr(W_d \rho) = \beta$ it must hold that $P_i \rho = 0$ for all $i$. In some cases, these relations make it possible to essentially uniquely (up to local isometries) identify the quantum state and the POVMs that give rise to the maximal Bell violation. This is called \textit{self-testing}, and it is a powerful tool in device-independent quantum information processing \cite{Supic2020selftestingof}.

Importantly, in our DIQRNG protocol we would like to use a non-projective POVM (a BIC-POVM), and it is known that non-projective measurements cannot be self-tested due to Naimark's dilation theorem \cite{baptista23}. To certify our setup, we therefore develop new, weaker forms of self-testing. Based on the maximal value of our Bell inequalities we characterize the POVMs and the quantum state sufficiently so that we are able to bound the conditional von Neumann entropy of any state of the form in Eq.~\eqref{eq:cqstate} compatible with the maximal value. Similar weaker forms of self-testing have been studied recently in the context of device-independent quantum information processing \cite{kan20,Far24}. Our techniques differ from these by certifying measurements \emph{compressed onto the local support} of the state (see below), and by using the representation theory of measurement algebras. Crucially, we do not make any a priori assumptions on the state and measurements \cite{baptista23}. Our methods appear to be highly promising for further device-independent applications beyond this work, in scenarios in which a complete self-testing statement is impossible or impractical.

Crucial in our analysis are the local supports of the state $\rho \in \mtxc{d_A} \otimes \mtxc{d_B}$. We define $\supp_A \rho$ as the range of $\tr_B \rho \in \mtxc{d_A}$, and $\supp_B \rho$ in an analogous way. We then define \emph{compressions} onto the local supports. On Alice's Hilbert space, the compression of an operator $X$ is defined as $\hat X := U X U^*$, where $U: \supp_A \rho \to \C^{d_A}$ is the inclusion, that is, $U^*U$ is the identity on $\supp_A$ and $UU^*$ is the projection on $\C^{d_A}$ with range $\supp_A \rho$. Compressions on Bob's Hilbert space are defined analogously, also denoted by $\hat X$. The intuition behind considering compressed operators is that one cannot certify anything about the measurements outside the support of the state.

The following proposition informally summarizes our certification results based on the maximal value of our Bell inequalities.
\begin{prop} \label{prop_Bell_certificate}
The maximum quantum value of the Bell function $W_d$ in Eq.~\eqref{e:bell_operator} is~$d^2$ for every $S$ induced by a BIC-POVM. Furthermore, if $d^2$ is reached using the state and measurements $\rho,A_a^{jk},A_j^\povm,B_j$, then the compressed operators satisfy
\begin{align}
\label{e:Bproj_main}
&\hat B_j^2 = \hat B_j \quad \forall j, \\
\label{e:Bcomplete_main}
&\sum_j \hat B_j = d I, \\
\label{e:Aproj_main}
&(\hat A^{jk}_1)^2 = \hat A^{jk}_1,\ 
(\hat A^{jk}_2)^2 = \hat A^{jk}_2,\ 
\hat A^{jk}_1\hat A^{jk}_2 = 0 \quad \forall j<k, \\
\label{e:BSIC_main}
&\hat B_j \hat B_k \hat B_j = s_{jk}\hat B_j \quad \forall j \neq k.
\end{align}
Moreover, up to local isometries, the state $\rho$ is a mixture of states of the form
\begin{align}\label{e:state_cert}
\bigoplus_\alpha \ket{\chi_\alpha}\otimes 
\ket{\varphi_{r_\alpha d}},\qquad \ket{\chi_\alpha}\in\C^{e_\alpha}\otimes \C^{f_\alpha}
\end{align}
where $e_\alpha, f_\alpha, r_\alpha \in \N$, and $\alpha$ is an element of a discrete index set. Up to the same local isometry on Alice's Hilbert space, the compressed elements of her $\povm$ setting, acting on $\bigoplus_\alpha \C^{e_\alpha} \otimes \C^{r_\alpha d}$, are given by
\begin{equation}\label{e:POVM_cert}
\hat A^\povm_j = \cN_j+\bigoplus_\alpha\left(
I_{e_\alpha}\otimes \frac{1}{d}\hC_{j,\alpha}
+W_{j,\alpha}\right)
\end{equation}
where $\tr_{\C^{r_\alpha d}}(W_{j,\alpha})=0$, and all diagonal $(\alpha,\alpha)$-blocks of $\cN_j$ are zero. The $\{ \hC_{j,\alpha} \}_j$ operators satisfy the same relations as the $\{ \hat B_j \}_j$ operators for every $\alpha$.
\end{prop}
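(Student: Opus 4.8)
The plan is to establish the maximum value via a sum-of-squares (SOS) decomposition of the shifted Bell operator, and then extract the structural constraints from the resulting annihilation relations $P_i\rho = 0$. First I would guess the SOS decomposition of $d^2 I - W_d$. The natural candidates are terms of the form $\bigl(A^{jk}_1 \otimes I - \sqrt{1-s_{jk}}\,I\otimes(B_j - B_k) - \dots\bigr)^*(\cdots)$ and similar combinations pairing Alice's three-outcome measurements with Bob's projective-candidate operators $B_j$, together with terms involving $A^\povm_j \otimes (I-B_j)$. One would also need "completeness" type terms forcing $\sum_j B_j = dI$ on the support, coming from the $-d(d-2)\sum_j I\otimes B_j$ piece; writing $\sum_j B_j - dI$ as an SOS ingredient (using that $B_j \le I$ and $0\le B_j$) is the mechanism here. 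Verifying that the cross terms cancel and the diagonal terms reproduce $W_d$ up to $d^2 I$ is a bookkeeping computation using only $\sum_a A^x_a = I$, $\sum_b B^y_b = I$, positivity, and the combinatorial identity $\sum_{j<k}(1-s_{jk}) = $ (a constant depending on $d$ via $\sum_{j,k} s_{jk} = \sum_{j,k}\tr(P_jP_k) = \tr\bigl((\sum_j P_j)^2\bigr)$ and $\sum_j P_j = dI$ since the $P_j/d$ form a POVM that is a basis, so $\sum_j \tr(P_jP_k)=d\,\tr P_k = d$, giving $\sum_{j,k}s_{jk} = d^2$ and $\sum_{j<k}(1-s_{jk}) = \tfrac12(d^4 - d^2 - d^2\cdot\text{(diagonal)})$... this normalization is the delicate part).

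Once the SOS is in hand and shown to achieve $d^2$ (using $\ket{\varphi_d}$, the BIC-POVM on Alice, and $B_j = \ketbraq{e_j}$, the rank-one projections, on Bob, where $\braket{e_j}{e_k}$ realizes $|\braket{e_j}{e_k}|^2 = s_{jk}$), the relations $P_i\rho = 0$ must be translated into statements about the \emph{compressed} operators. The key point is that $P_i\rho=0$ implies $\tr_E(P_i \dyad{\psi} P_i^*) = 0$, and compressing onto $\supp_A\rho$ and $\supp_B\rho$ turns each such relation into an identity that holds \emph{as operators} on the supports (not merely against $\rho$), because the support is exactly the range of the reduced states. This is where Eqs.~\eqref{e:Bproj_main}--\eqref{e:BSIC_main} come from: the SOS term forcing $B_j$ idempotent gives $\hat B_j^2 = \hat B_j$; the completeness term gives $\sum_j\hat B_j = dI$; the Alice three-outcome terms give orthogonality and idempotency of $\hat A^{jk}_1, \hat A^{jk}_2$; and the cross relation between $A^{jk}_a$ and $B_j - B_k$, combined with $\hat A^{jk}_a$ being projections, forces $\hat B_j\hat B_k\hat B_j = s_{jk}\hat B_j$ — this last is essentially the statement that the rank-one projections $\hat B_j$ have pairwise overlaps $s_{jk}$, i.e., they form a configuration dual to the BIC-POVM.

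For the structural decomposition of $\rho$ and of $\hat A^\povm_j$, the plan is to invoke the representation theory of the $*$-algebra $\cB$ generated by $\{\hat B_j\}_j$ on $\supp_B\rho$. Relations \eqref{e:Bproj_main} and \eqref{e:BSIC_main} say $\cB$ is (a quotient of) the BIC-POVM measurement algebra; by finite-dimensional $C^*$-algebra theory $\cB \cong \bigoplus_\alpha M_{n_\alpha}(\C)$, and in each block the $\hat B_j$ act as rank-one projections with the prescribed Gram matrix, which (since a BIC basis spans $M_d(\C)$) forces the block dimension to be $d$ — hence Bob's compressed space decomposes as $\bigoplus_\alpha \C^{m_\alpha}\otimes\C^d$ with $\hat B_j = \bigoplus_\alpha I_{m_\alpha}\otimes(\text{fixed projection})$. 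Pulling this through the correlation constraints and the fact that the overall state achieves the maximum (so Alice's side is "pinned" to Bob's via the annihilation relations), one gets that $\rho$ is, up to local isometry, a mixture of $\bigoplus_\alpha \ket{\chi_\alpha}\otimes\ket{\varphi_{r_\alpha d}}$ as claimed, with the maximally entangled factor appearing because the SOS relations force a maximally entangled state between the "relevant" $\C^{r_\alpha d}$ factors. Finally, $\hat A^\povm_j$ is determined only up to the freedom orthogonal to the maximally entangled directions and off the diagonal blocks (nothing is certified there, since those directions are annihilated by $\rho$), which is exactly the form \eqref{e:POVM_cert}: the certified part is $I_{e_\alpha}\otimes\frac1d\hC_{j,\alpha}$ with $\hC_{j,\alpha}$ satisfying the same relations as $\hat B_j$ (forced by the cross-constraints linking the $\povm$ setting to Bob via the term $-\sum_j A^\povm_j\otimes(I-B_j)$), plus traceless and off-diagonal slack $W_{j,\alpha}, \cN_j$.

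The main obstacle I anticipate is twofold: (i) guessing the correct SOS decomposition — in particular getting the completeness/normalization terms and the constant $d^2$ exactly right, which requires careful use of the BIC-POVM identities $\sum_j P_j = dI$ and $\sum_{j,k}s_{jk} = d^2$; and (ii) the passage from algebraic relations on $\supp_B\rho$ to the global tensor decomposition of $\rho$, which needs a careful argument that the annihilation relations $P_i\rho=0$ genuinely transport Bob's block structure to Alice's side and force maximal entanglement on each block — this is the "weak self-testing" core of the paper and is where the representation-theoretic input does the real work.
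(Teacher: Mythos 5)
Your high-level strategy matches the paper's --- an SOS decomposition of $d^2 I - W_d$ (the paper's is Proposition~\ref{p:bellmax}), reading off annihilation relations $P_i\rho=0$, passing to compressed operators, and then invoking representation theory to get the block structure --- but several of the steps you gesture at are doing real work that your sketch doesn't account for, and one of your intermediate claims is actually false.

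The most serious conceptual issue: you claim that because the $\hat B_j$ are projections with Gram matrix $S$ and a BIC basis spans $M_d(\C)$, each irreducible block must be $d$-dimensional with the $\hat B_j$ rank one. This is not true. The C*-algebra $\cA_S$ generated by relations \eqref{e:Bproj_main}, \eqref{e:Bcomplete_main}, \eqref{e:BSIC_main} has irreducible representations of dimension $r d$ for $r\ge 1$ (Proposition~\ref{p:rk} of the paper proves only divisibility by $d$, and Example~\ref{ex:counterex} exhibits a $6$-dimensional irrep for $d=3$), which is precisely why the statement has $\ket{\varphi_{r_\alpha d}}$ rather than $\ket{\varphi_d}$. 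Your later sentence does write $r_\alpha d$, so you were aware of the target, but your argument for the block dimension as written would incorrectly yield $r_\alpha=1$. Second, you do not identify the mechanism that transports the block structure to Alice's side: the paper constructs the operators $C_j=\frac{1}{d^2}\bigl(dI+\sum_{k\ne j}\sqrt{1-s_{jk}}(A^{jk}_1-A^{jk}_2)\bigr)$ from Alice's three-outcome measurements, proves $(C_j\otimes I)\rho=(I\otimes B_j)\rho$, shows the $\hat C_j$ obey the same $\cA_S$ relations, and only then applies a maximally-entangled block decomposition (Proposition~\ref{p:maxent}) to the synchronicity pair $(C_j,B_j)$; without something like this, ``Alice's side is pinned to Bob's'' is hand-waving. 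Third, obtaining \eqref{e:BSIC_main} from the annihilation relations is not immediate: the SOS gives a linear relation between $A^{jk}_1-A^{jk}_2$ and $B_j-B_k$ against $\rho$, which must be iterated to produce the cubic identity $(1-s_{jk})(\hat B_j-\hat B_k)=(\hat B_j-\hat B_k)^3$, and an algebraic lemma (the paper's Lemma~\ref{l:alt}) is needed to upgrade the cubic relation to $\hat B_j\hat B_k\hat B_j=s_{jk}\hat B_j$. You also need to actually prove that $B_j$ and $A^{jk}_1-A^{jk}_2$ preserve the local supports before quadratic relations like $B_j-B_j^2=0$ can be compressed (this is Lemma~\ref{l:supp} plus a small argument); ``the support is the range of the reduced state'' is not enough on its own. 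Finally, a minor point: $\sum_{j,k}s_{jk}=d^3$ (not $d^2$), since $\sum_j s_{jk}=d$ for each of the $d^2$ values of $k$; the correct normalization gives $\sum_{j<k}(1-s_{jk})=\frac{d^4-d^3}{2}$.
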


The proof is based on a sum-of-squares decomposition of the shifted Bell operator (Appendix \ref{sec:bell}) and on the representation theory of an algebra related to the associated BIC-POVM (Appendix \ref{sec:repr}). The proof can be found in Appendix \ref{sec:entropy}.

Notice that we do not completely characterize the state up to local isometries. Instead, we certify the form in Eq.~\eqref{e:state_cert}. Here, the index $\alpha$ labels the irreducible representations of the C*-algebra generated by the relations \eqref{e:Bproj_main}, \eqref{e:Bcomplete_main} and \eqref{e:BSIC_main}. Both $\supp_A \rho$ and $\supp_B \rho$ decompose into a direct sum according to the irreducible representations, and we characterize the state on these subspaces. Notice in Eq.~\eqref{e:state_cert} that in every subspace we find a maximally entangled state of dimension $r_\alpha d$, tensored with some uncharacterized state $\ket{\chi_\alpha}$.

We also do not fully characterize the POVM $(A^\povm_j)_j$, as seen in Eq.~\eqref{e:POVM_cert}. The characterization is, however, sufficient for certifying randomness. This is because $\hat A^\povm_j$ acts trivially on the uncharacterized subspaces and subsystems of $\rho$, and acts like a BIC-POVM on the characterized part. Since on the characterized part the state is pure ($\ket{\varphi_{r_\alpha d}}$), no potential eavesdropper can be correlated to the state, since every purification of a pure state must be a product state. This intuition leads us to our main theorem:

\begin{thm}\label{t:main_main}
Suppose the state $\rho$ and measurement $(A^\povm_j)_j$ appear in an optimal quantum strategy for the Bell function \eqref{e:bell_operator}, and let $\ket{\psi}\in\cH_A\otimes\cH_B\otimes\cH_E$ be a purification of $\rho$. Then
\begin{equation}\label{e:mainthm}
\sum_{j=1}^{d^2} \ketbraq{j}_\uA 
\otimes \tr_{AB}\left[ \ketbraq{\psi} ( A^\povm_j \otimes I_B \otimes I_E)\right]=\left(\frac{1}{d^2}\sum_{j=1}^{d^2}\ketbraq{j}\right)\otimes
\sigma_E
\end{equation}
for some state $\sigma_E$ on $\cH_E$.

In particular, the maximal value of the Bell inequality \eqref{e:bell_operator} certifies $2 \log(d)$ bits of device-independent randomness from the outcome of the $\povm$ setting of Alice. Since the maximal violation can be achieved using a locally $d$-dimensional state, the theoretical maximum device-independent randomness, $2 \log(d)$ bits, can be achieved in every dimension~$d$.
\end{thm}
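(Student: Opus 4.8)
The plan is to feed the structural characterization of Proposition~\ref{prop_Bell_certificate} into a direct evaluation of the classical--quantum state on the left of \eqref{e:mainthm}, via two reductions followed by a computation. \emph{Reduction 1 (compression and normal form).} First note that the left-hand side of \eqref{e:mainthm} depends on $A^\povm_j$ only through its compression onto $\supp_A\rho$: since $\tr_{BE}\ketbraq{\psi}=\tr_B\rho$ has range $\supp_A\rho$, the vector $\ket{\psi}$ lies in $\supp_A\rho\otimes\cH_B\otimes\cH_E$ (and, symmetrically, in $\cH_A\otimes\supp_B\rho\otimes\cH_E$), so $A^\povm_j$ may be replaced by $Q_AA^\povm_jQ_A$ with $Q_A$ the projection onto $\supp_A\rho$, which under the identification of $\supp_A\rho$ with its image is $\hat A^\povm_j$. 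Applying the local isometries of Proposition~\ref{prop_Bell_certificate}, extended by the identity on $\cH_E$ (a routine check shows this does not change the left-hand side of \eqref{e:mainthm}, since $\cH_E$ is untouched and the isometries are unitaries on the supports), brings $\rho$ into the form $\sum_t p_t\ketbraq{\Phi_t}$ with $\ket{\Phi_t}=\bigoplus_\alpha\ket{\chi^t_\alpha}\otimes\ket{\varphi_{r_\alpha d}}$ as in \eqref{e:state_cert} and $\hat A^\povm_j$ into the form \eqref{e:POVM_cert}. \emph{Reduction 2 (choice of purification).} Any two purifications of $\rho$ are related by an isometry on $\cH_E$, under which the left-hand side of \eqref{e:mainthm} transforms by $I_\uA\otimes V$ and hence $\sigma_E$ by $V\sigma_E V^*$; so it suffices to prove \eqref{e:mainthm} for the explicit purification $\ket{\psi_0}:=\sum_t\sqrt{p_t}\,\ket{\Phi_t}\otimes\ket{t}_E$ with $\{\ket{t}\}$ orthonormal.

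With this purification the left-hand side of \eqref{e:mainthm} equals $\sum_j\ketbraq{j}_\uA\otimes\sum_{t,t'}\sqrt{p_t p_{t'}}\,\bra{\Phi_{t'}}(\hat A^\povm_j\otimes I_B)\ket{\Phi_t}\,\ket{t}\bra{t'}_E$, so everything comes down to the identity
\[
\bra{\Phi_{t'}}\bigl(\hat A^\povm_j\otimes I_B\bigr)\ket{\Phi_t}=\frac{1}{d^2}\braket{\Phi_{t'}}{\Phi_t}\qquad\text{for all }j,t,t'.
\]
I would split $\hat A^\povm_j$ as in \eqref{e:POVM_cert} and treat its three pieces. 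The off-diagonal part $\cN_j$ contributes nothing: the $\ket{\Phi_t}$ are block-diagonal across the $\alpha$-decomposition of $\supp_A\rho\otimes\supp_B\rho$, $\cN_j$ has vanishing $(\alpha,\alpha)$-blocks, and $I_B$ preserves Bob's block, so $(\cN_j\otimes I_B)\ket{\Phi_t}$ is orthogonal to $\ket{\Phi_{t'}}$. For the block-diagonal part I would work inside a fixed block $\alpha$, where $\ket{\Phi^\alpha_t}=\ket{\chi^t_\alpha}\otimes\ket{\varphi_{r_\alpha d}}$ with the maximally entangled factor carried by the ``$\varphi$'' subsystems of Alice and Bob. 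Using $\bra{\varphi_n}(X\otimes I)\ket{\varphi_n}=\frac1n\tr X$, the term $W_{j,\alpha}$ contributes $\frac{1}{r_\alpha d}\,\bra{\chi^{t'}_\alpha}\bigl((\tr_{\C^{r_\alpha d}}W_{j,\alpha})\otimes I\bigr)\ket{\chi^t_\alpha}=0$, and the term $I_{e_\alpha}\otimes\frac1d\hC_{j,\alpha}$ contributes $\frac{1}{r_\alpha d^2}\,(\tr\hC_{j,\alpha})\,\braket{\chi^{t'}_\alpha}{\chi^t_\alpha}$.

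It remains to see that $\tr\hC_{j,\alpha}=r_\alpha$, which is where I would invoke the one external input: the relations \eqref{e:Bproj_main}, \eqref{e:Bcomplete_main} and \eqref{e:BSIC_main} inherited by the $\hC_{j,\alpha}$ force each to be a rank-$r_\alpha$ projection --- either via the representation theory of Appendix~\ref{sec:repr} (every irreducible representation of these relations is the $d$-dimensional defining one), or directly, since $\tr(\hC_{j,\alpha}\hC_{k,\alpha})=s_{jk}\tr\hC_{j,\alpha}$ is symmetric in $j,k$, the ``$s_{jk}\neq0$'' graph is connected (a disconnection would split $\{P_j\}$ into mutually orthogonal families spanning fewer than $d^2$ dimensions, contradicting that they span $M_d(\C)$), and $\sum_j\tr\hC_{j,\alpha}=\tr(dI_{\C^{r_\alpha d}})=r_\alpha d^2$ then forces each to equal $r_\alpha$. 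Summing the block-$\alpha$ contribution $\frac1{d^2}\braket{\chi^{t'}_\alpha}{\chi^t_\alpha}$ over $\alpha$ (using $\braket{\varphi_{r_\alpha d}}{\varphi_{r_\alpha d}}=1$ so that $\sum_\alpha\braket{\chi^{t'}_\alpha}{\chi^t_\alpha}=\braket{\Phi_{t'}}{\Phi_t}$) establishes the displayed identity. Substituting it in, the left-hand side of \eqref{e:mainthm} becomes $\bigl(\frac1{d^2}\sum_j\ketbraq{j}_\uA\bigr)\otimes\bigl(\sum_{t,t'}\sqrt{p_t p_{t'}}\braket{\Phi_{t'}}{\Phi_t}\ket{t}\bra{t'}_E\bigr)$, whose second factor is exactly $\tr_{AB}\ketbraq{\psi_0}=:\sigma_E$; this is \eqref{e:mainthm}. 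The randomness claim then follows at once: $\rho_{\uA E}$ is a product state with $\uA$-marginal maximally mixed on $\C^{d^2}$, so $H(\uA|E)_{\rho_{\uA E}}=H(\uA)=2\log d$, matching the upper bound of Appendix~\ref{app:bound}; and since, by Proposition~\ref{prop_Bell_certificate} together with the existence of BIC-POVMs in every dimension (Appendix~\ref{sec:icpovm}), the value $d^2$ is attained with the locally $d$-dimensional state $\ket{\varphi_d}$, the bound $2\log d$ is achievable in every $d$.

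The main obstacle I anticipate is twofold. First, the bookkeeping in the two reductions must be done carefully: one must check that compressing the measurement, applying both local isometries, and swapping in the purification $\ket{\psi_0}$ all genuinely leave the classical--quantum state \eqref{e:mainthm} invariant, despite the non-uniqueness of purifications and the nested direct-sum/tensor structure (the $\alpha$-blocks, and within each block the ``$\chi$'' and ``$\varphi$'' subsystems). More essentially, the whole computation hinges on $\tr\hC_{j,\alpha}=r_\alpha$ --- equivalently, on the ``linear'' BIC-POVM relations pinning down $M_d(\C)$ --- which is really supplied by the representation-theoretic content behind Proposition~\ref{prop_Bell_certificate}. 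Granted that, the conceptual crux is the single identity $\bra{\Phi_{t'}}(\hat A^\povm_j\otimes I_B)\ket{\Phi_t}=\frac1{d^2}\braket{\Phi_{t'}}{\Phi_t}$: that $\hat A^\povm_j$ ``looks maximally mixed'' when tested against the maximally entangled blocks of the state while its off-block part $\cN_j$ and block-traceless part $W_{j,\alpha}$ wash out; the rest is assembly.
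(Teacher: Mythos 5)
Your proof is correct and follows essentially the same route as the paper: feed the structural characterization (block form of $\rho$ from Proposition~\ref{p:maxent}, block form of $\hat A^\povm_j$ from Proposition~\ref{p:Apovm}) into a direct evaluation of the cq-state, observing that $\cN_j$ dies by block-orthogonality, $W_{j,\alpha}$ dies because $\bra{\varphi_n}(X\otimes I)\ket{\varphi_n}=\tfrac1n\tr X$ leaves only its (vanishing) partial trace, and the surviving $\hC_{j,\alpha}$ term carries the uniform factor $\tfrac1{d^2}$. The one presentational variant is that you reduce to an explicit purification $\ket{\psi_0}=\sum_t\sqrt{p_t}\,\ket{\Phi_t}\otimes\ket{t}_E$ via uniqueness-up-to-isometry on $\cH_E$, whereas the paper instead invokes Lemma~\ref{l:todo} to pin down the block-diagonal structure of an arbitrary purification directly; both are valid and do the same work.

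One parenthetical in your sketch is wrong and should be dropped: it is \emph{not} true that ``every irreducible representation of these relations is the $d$-dimensional defining one''; Example~\ref{ex:counterex} exhibits a $6$-dimensional irreducible representation of $\cA_S$ for a SIC matrix in dimension $d=3$. What is actually true, and all you need, is Proposition~\ref{p:rk}: every finite-dimensional representation $\pi$ of $\cA_S$ has $\dim\pi$ divisible by $d$ and $\tr\pi(x_j)=\dim\pi/d$, so $\tr\hC_{j,\alpha}=r_\alpha$. Your alternative ``direct'' argument (symmetry of $\tr(\hC_{j,\alpha}\hC_{k,\alpha})$, connectedness of the $s_{jk}\neq 0$ graph from Lemma~\ref{l:matrix}, and $\sum_j\tr\hC_{j,\alpha}=r_\alpha d^2$) is precisely the proof of Proposition~\ref{p:rk} and is sufficient, so this slip is harmless.
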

The proof of this theorem can be found in Appendix \ref{sec:entropy}. The value $2 \log(d)$ comes from computing the conditional von Neumann entropy of the state in Eq.~\eqref{e:mainthm}.
A locally $d$-dimensional realization leading to the maximal value $d^2$ is given by $\rho = \ketbraq{\varphi_d}$, $B_j = \ketbraq{e_j}$, where $\{ \frac1d \ketbraq{e_j} \}_{j=1}^{d^2}$ form a BIC-POVM such that $\abs{ \braket{e_j}{e_k} }^2 = s_{jk}$, $A^{jk}_{1(2)}$ being the transposition of the projection onto the eigenstate of $B_j-B_k$ with positive (negative) eigenvalue, and $A^{\povm}_j = \frac1d \ketbraq{e_j}^{\ti}$ (where all the transpositions are in the computational basis). In Proposition \ref{prop_Bell_certificate}, this realization corresponds to a single value for the index $\alpha$ (let us denote this value by $\tilde{\alpha}$), and $e_{\tilde{\alpha}} = f_{\tilde{\alpha}} = r_{\tilde{\alpha}} = 1$, $\cN_j = 0$, $ W_{j,\tilde{\alpha}} = 0$, and $C_{j,\tilde{\alpha}} = \ketbraq{e_j}^{\ti}$ for all~$j$.

\section{Discussion}

Our main result establishes that the previously known fundamental bound on device-independent randomness can be achieved in every dimension. To prove this, we designed explicit protocols that reach the fundamental bound. Since the maximal randomness requires (by definition) non-projective POVMs, standard self-testing arguments do not suffice for proving our result. As such, we developed new, weaker forms of self-testing, relying on the certification of compressed operators and the representation theory of measurement algebras. Our techniques are broadly applicable and we expect them to be highly useful in scenarios in which a full self-testing statement is not necessary, impossible, or impractical.

Let us address the practicality of physical implementations of non-projective measurements. Canonically, these are implemented by enlarging the Hilbert space with an ancillary system, and performing a joint projective measurement on the system and ancilla. As such, one might question whether our protocol is truly locally $d$-dimensional. Notice that importantly, the ancillary system is not shared between Alice and Bob but held only by Alice. Therefore, the entangled state is only locally $d$-dimensional. Moreover, in various standard physical implementations (e.g.~spatial modes of photons) the ancillary system is purely a mathematical model, and there is no need to introduce an actual new physical system (e.g.~a new photon) to implement a non-projective measurement \cite{GTZ+23,Goel2024}. Therefore, in the sense of `controllable degrees of freedom', our protocol can indeed be implemented using locally $d$ dimensions.

Further regarding the practicality of our setup, one might notice that while the protocol generates $2 \log(d)$ bits every time Alice measures $\povm$, she has in total $d^2(d^2-1)/2+1$ measurement settings, which need to be chosen randomly. Notice, however, that these settings do not need to be chosen \emph{uniformly} randomly. If Alice selects the setting $\povm$ most of the time, she can generate arbitrarily close to $2 \log(d)$ bits of randomness per measurement round. That is, she does not consume more randomness than what is generated. In fact, Alice only needs to consume a vanishing amount of randomness per measurement round. This approach is usually referred to as a spot-checking protocol \cite{spotchecking,spotchecking2,Liu2021}.

Our work opens up a couple of important future research directions, on top of the wider application of our certification methods. Following the discussion about the practicality, it would be desirable to work out the robustness of our protocol to experimental noise. The large freedom in the measurements (any BIC-POVM provides maximal randomness) signifies that implementations with a large noise tolerance are likely to exist. Since our certification is based on a Bell inequality (with quantifiable classical-quantum separation, see Appendix \ref{app:QCgap}), one possible avenue is to adapt robust self-testing techniques \cite{BLM+09,yang14,Kan16,Mancinska2024} to our certification methods. While it is plausible that this avenue is successful, quantifying the effect of noise on the representation-theoretic certification likely requires significant technical effort. As such, we leave the noise robustness quantification for a future study.

It would also be an interesting follow-up direction to figure out whether our protocols can be extended to certify maximal \emph{global} device-independent randomness. That is, $2 \log(d)$ bits on Alice's side and $2 \log(d)$ bits on Bob's side. Proving this would require extending our certification methods to a new setting on Bob's side, corresponding to a BIC-POVM.

Last, the algebraic characterization of BIC-POVMs in Appendix \ref{sec:icpovm} and \ref{a:icpovm} could provide new mathematical insights into the structure and famous existence problem of SIC-POVMs \cite{Zauner}, as SIC-POVMs are a class of BIC-POVMs. Further research into this algebraic characterization could also lead to other device-independent applications, similarly to what has been achieved with \emph{mutually unbiased measurements} \cite{tavakoli21,Colomer2022,farkas23}.

\newpage

\renewcommand{\appendixpagename}{\normalsize\centering  APPENDICES}
\begin{appendices}

\section{Fundamental bound on the device-independent randomness}
\label{app:bound}

The fundamental bound of $2 \log(d)$ bits of device-independent randomness from locally $d$-dimensional states can be argued by considering \textit{extremal} POVMs. In a fixed dimension, the set of POVMs forms a convex set. That is, one can take a convex combination of a POVM $(M_1, \ldots, M_m)$ and $(N_1, \ldots, N_m)$, given by $(\lambda M_1 + (1-\lambda)N_1, \ldots, \lambda M_m + (1-\lambda) N_m)$, which is again a POVM for every $\lambda \in [0,1]$. Furthermore, one can take convex combinations of POVMs with different numbers of outcomes, by appropriately padding the POVMs with zero operators. The set of POVMs endowed with this convex structure thus has extremal elements---POVMs that cannot be written as a non-trivial convex combination of two other POVMs. It is known from \cite{dariano_2005} that extremal POVMs in dimension $d$ can have at most $d^2$ non-zero elements. Consider then a correlation $p(a,b|x,y) = \tr[ (A^x_a \otimes B^y_b) \rho]$ that can be realized in dimension $d$. In particular, by some locally $d$-dimensional entangled state $\rho$, and $d$-dimensional POVMs $\{ (A^x_a)_a \}_x$ for Alice and $\{ (B^y_b)_b \}_y$ for Bob. All POVMs decompose into extremal ones. In particular, for the POVM $(A^x_a)_a$ which we use for randomness extraction, we have $A^x_a = \sum_{e=0}^{k-1} p_e A^{x,e}_a$, where $\{p_e\}_e$ is a probability distribution and $\{(A^{x,e}_a)_a\}_e$ are extremal POVMs and therefore have at most $d^2$ non-zero elements. Another realization of $p(a,b|x,y)$ is then given on the Hilbert space $ \bigoplus_{e=0}^{k-1} \C^d \otimes \C^d$. The realization consists of a locally $kd$-dimensional state $\rho' = \bigoplus_{e=0}^{k-1} p_e \rho$ and $kd$-dimensional POVMs $(\tilde{A}^x_a)_a = \left( \bigoplus_{e=0}^{k-1}A^{x,e}_a \right)_a$ and $(\tilde{B}^y_b)_b = \left( \bigoplus_{e=0}^{k-1} B^y_b \right)_b$. A tripartite extension of $\rho'$ is given by $\tilde{\rho} = \sum_{e=0}^{k-1} p_e \rho_e \otimes \ketbraq{e}$, where $\rho_e$ equals $\rho$ on the $e$-th copy of $\C^d \otimes \C^d$ and zero elsewhere, and $\{ \ket{e} \}_{e=0}^{k-1}$ is an orthonormal basis on Eve's Hilbert space. While we could in principle consider a purification of $\tilde{\rho}$ to be fully consistent with \eqref{eq:cqstate}, we will keep this mixed state for simplicity. Given this quantum realization of our observed correlation and the tripartite extension, the minimization of $H(\uA|E)_{\rho_{\uA E}}$ is upper bounded by $H(\uA|E)_{\tilde{\rho}_{\uA E}}$, where
\begin{align*}
\tilde{\rho}_{\uA E} &= \sum_a \ketbraq{a}_\uA \otimes \tr_{AB}[ \tilde{\rho} ( \tilde{A}^x_a \otimes I_B \otimes I_E )] \\
&= \sum_a \ketbraq{a}_\uA \otimes \left( \sum_{e=0}^{k-1} p_e \tr \left( \rho(A^{x,e}_a \otimes I_B) \right) \ketbraq{e} \right).
\end{align*}
The conditional entropy of this state is given by the conditional entropy of the classical distribution $q(a,e) = p_e \tr \left( \rho(A^{x,e}_a \otimes I_B) \right)$. That is,
\begin{align*}
H(\uA|E)_{\tilde{\rho}_{\uA E}} & = \sum_{e=0}^{k-1} p_e H(A|E = e) = \sum_{e=0}^{k-1} p_e H\left( \left\{ \tr \left( \rho(A^{x,e}_a \otimes I_B) \right) \right\}_a \right) \le \sum_{e=0}^{k-1} p_e \log(d^2) \\
& = 2 \log(d),
\end{align*}
where (with a slight abuse of notation) $H\left( \left\{ \tr \left( \rho(A^{x,e}_a \otimes I_B) \right) \right\}_a \right)$ is the Shannon entropy of the distribution $\left\{ \tr \left( \rho(A^{x,e}_a \otimes I_B) \right) \right\}_a $, which is upper bounded by $\log(d^2)$ due to the fact that at most $d^2$ elements of $(A^{x,e}_a)_a$ are non-zero. The theoretical maximum device-independent randomness certifiable using locally $d$-dimensional states is therefore $2\log(d)$.

\section{Balanced informationally complete POVMs}\label{sec:icpovm}

In this section we review the notion of (finite-dimensional) informationally complete POVMs and their construction following \cite{dariano_2004,dariano_2005}, and identify a sub-family of them that is the pillar of the scenario designed in this paper.
Throughout the paper, given a matrix $a$ we write $a^*$, $a^\ti$ and $\overline{a}$ to denote its conjugate transpose, transpose and complex conjugate, respectively.

If elements of a POVM, $M_1,\dots,M_m$ on $\C^d$, span $\mtxc{d}$, then $(M_j)_{j=1}^m$ is an \emph{informationally complete POVM (IC-POVM)} on $\C^d$ (note that $m\ge d^2$ in this case). By \cite[Corollary 6]{dariano_2005}, extremal $d^2$-outcome POVMs on $\C^d$ are IC-POVMs and consist of rank-one matrices.
In this paper we focus on a special family of $d^2$-outcome rank-one IC-POVMs on $\C^d$.
A \emph{balanced IC-POVM (BIC-POVM)} on $\C^d$ is $(\frac1d P_j)_{j=1}^{d^2}$ where $P_1,\dots,P_{d^2}$ are rank-one projections that form a basis for $\mtxc{d}$, and satisfy $\sum_{j=1}^{d^2}P_j=dI$. 
In the language of harmonic analysis, rank-one projections adding to a scalar multiple of the identity correspond to unit-norm tight frames \cite[Section 2]{waldron}.

\begin{lem}\label{l:matrix}
Let $(\frac1d P_j)_j$ be a BIC-POVM on $\C^d$.
Then the $d^2\times d^2$ matrix $S=(\tr(P_jP_k))_{j,k}$ is positive definite, $s_{jj}=1$ for all $j$, $0\le s_{jk}<1$ for all $j\neq k$, and $\sum_js_{jk}=d$ for all $k$.

Moreover, for every pair $(j,k)$ there exists a sequence $j=i_1,\dots,i_N=k$ in $[d^2]$ such that $s_{i_ni_{n+1}}\neq0$ for all $n=1,\dots,N-1$.
\end{lem}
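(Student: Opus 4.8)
The plan is to dispatch the four numerical assertions about $S$ by direct computation from the defining properties of a BIC-POVM, and then to prove the connectivity claim by a dimension count.

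First I would write $P_j=\dyad{v_j}$ for unit vectors $v_j\in\C^d$, which is possible since each $P_j$ is a rank-one projection; then $s_{jk}=\tr(P_jP_k)=\abs{\braket{v_j}{v_k}}^2$. From this, $s_{jj}=1$ and $s_{jk}\ge 0$ are immediate, and Cauchy--Schwarz gives $s_{jk}\le 1$ with equality only when $v_j,v_k$ are proportional, i.e.\ $P_j=P_k$; since the $P_j$ are distinct (they form a basis), $s_{jk}<1$ for $j\ne k$. The row sums follow from $\sum_j P_j=dI$: indeed $\sum_j s_{jk}=\tr\bigl((\textstyle\sum_jP_j)P_k\bigr)=d\tr(P_k)=d$. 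For positive definiteness, note that $s_{jk}=\tr(P_j^*P_k)$ is the Hilbert--Schmidt inner product of $P_j$ and $P_k$, so $S$ is the Gram matrix of $P_1,\dots,P_{d^2}$; since these form a basis of the $d^2$-dimensional space $\mtxc{d}$, their Gram matrix is positive definite.

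For the last statement, observe that it says exactly that the graph $G$ on vertex set $[d^2]$, with an edge $jk$ whenever $s_{jk}\ne 0$, is connected. Suppose not, and pick a partition $[d^2]=A\sqcup B$ into nonempty sets with $s_{jk}=0$, equivalently $v_j\perp v_k$, for all $j\in A$, $k\in B$. Set $V_A=\spa\{v_j:j\in A\}$ and $V_B=\spa\{v_k:k\in B\}$; then $V_A\perp V_B$, so $a+b\le d$ where $a=\dim V_A\ge 1$ and $b=\dim V_B\ge 1$. Each $P_j$ with $j\in A$ is supported on $V_A$, so $\spa\{P_j:j\in A\}$ sits inside the $a^2$-dimensional space of operators supported on $V_A$, and likewise $\spa\{P_k:k\in B\}$ has dimension at most $b^2$. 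Hence $\mtxc{d}=\spa\{P_j:j\in[d^2]\}$ would have dimension at most $a^2+b^2=(a+b)^2-2ab\le d^2-2ab\le d^2-2$, contradicting $\dim\mtxc{d}=d^2$.

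The numerical claims are routine; the only real content is the dimension count in the connectivity step, and even there the work is minimal once one uses the orthogonal splitting $V_A\perp V_B$: the point is simply that operators supported on an $a$-dimensional subspace span only an $a^2$-dimensional space, so a disconnection would force $d^2\le a^2+b^2$ with $a+b\le d$, which is impossible when $a,b\ge 1$.
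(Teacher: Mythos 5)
Your proof is correct and follows essentially the same route as the paper's: Gram-matrix positive definiteness, Cauchy--Schwarz for the off-diagonal bound, and a dimension count for connectivity. Your version of the connectivity step is a bit more explicit (the paper phrases the contradiction in terms of a proper subspace $V$ and its orthogonal complement rather than writing out $a^2+b^2\le d^2-2ab$), but the underlying argument is identical.
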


\begin{proof}
The matrix $S$ is the Gram matrix of the basis $P_1,\dots,P_{d^2}$ with respect to the Frobenius inner product on $\mtxc{d}$, and therefore positive definite.
Since the $P_j$ are rank-one projections, the diagonal entries of $S$ equal 1. The off-diagonal entries of $S$ lie in $[0,1)$ by the Cauchy-Schwarz inequality and linear independence of the $P_j$. Furthermore, for every $k$ we have
\begin{equation*}
\sum_{j=1}^{d^2} s_{jk}
=\sum_{j=1}^{d^2} \tr(P_jP_k)
=\tr\left(\left(\sum_{j=1}^{d^2} P_j\right)P_k\right)=d.
\end{equation*}
To prove the second part of the statement, it suffices to show that the underlying graph of $S$ (i.e., the graph with vertices $[d^2]$, and an edge between $j$ and $k$ if and only if $s_{jk}\neq 0$) is connected. Suppose this is not the case. Then one can reorder the $P_j$ so that there exists $1\le m<d^2$ such that $P_jP_k=0$ for all $j\le m$ and $k>m$. Then there is a proper subspace $V$ of $\C^d$ such that $\ran P_j\subseteq V$ for all $j\le m$ and $\ran P_k\subseteq V^\perp$ for all $k>m$. But this contradicts the assumption that the $P_j$ span the whole $\mtxc{d}$.
\end{proof}

The matrix $S$ in Lemma \ref{l:matrix} is said to be \emph{induced} by the BIC-POVM. Its properties from Lemma \ref{l:matrix} are utilized frequently throughout the paper.

To show the existence of IC-POVMs it is common to use group-theoretic tools \cite{dariano_2004}. Consider the unitary projective representation of $\Z_d\times \Z_d$ on $\mathbb{P}(\C^d)$ given by the Weyl operators, {\it i.e.},~
\begin{align*}
U_{p,q}=\sum_{j=0}^{d-1}\omega^{jq} \ketbra{j\oplus p}{j}
\qquad\text{ for }(p,q)\in\Z_d\times\Z_d
\end{align*}
where  $\omega:= e^{\frac{2\pi i}{d}}$ is the principal $d$\textsuperscript{th} root of unity and $\oplus$ is addition modulo $d$. 
If $\ket{\psi}\in\C^d$ is a unit vector then
\begin{align}\label{eq:ICPOVM}
(M_{p,q})_{(p,q)\in\Z_d\times\Z_d},\quad\text{where}\quad    M_{p,q}:=\tfrac{1}{d} U_{p,q}\dyad{\psi} U_{p,q}^*
\end{align}
is a rank-one POVM. If furthermore $\ket{\psi}$
satisfies
\begin{align}\label{eq:req}
 \bra{\psi}U_{p,q}\ket{\psi}\neq 0   \quad \text{for all } p,q \in \Z_d,
\end{align}
then $(M_{p,q})_{(p,q)}$ is a rank-one IC-POVM by \cite[Section 3]{dariano_2004}, and furthermore a BIC-POVM.
Notice that if we consider an arbitrary state $\ket{\psi}=\sum_{j=0}^{d-1} \psi_j \ket{j}$, where $\{\ket{j}\}_{j=0}^{d-1}$ is the standard orthonormal basis on $\C^d$, and we define the subnormalized vector $\ket{\psi_p}:=\sum_{j=0}^{d-1}\overline{\psi_{j\oplus p}}\psi_j\ket{j}$, then
\begin{equation*}
\bra{\psi}U_{p,q}\ket{\psi} 
= \sum_{i,j,k=0}^{d-1} \omega^{jq} \psi_k\overline{\psi_i}   \braket{i}{j\oplus p}\braket{j}{k}
=\sum_{j=0}^{d-1} \omega^{jq} \overline{\psi_{j\oplus p}} \psi_{j}
=\bra{q}\mathcal{F}_d\ket{\psi_p}
\end{equation*}
where $\mathcal{F}_d$ denotes the quantum Fourier transform. The requirement in \eqref{eq:req} thus becomes
\begin{equation*}
    \bra{q} \cF_d \ket{\psi_p}\neq0 \quad \text{for all } p,q \in \Z_d.
\end{equation*}
To construct BIC-POVMs on $\C^d$ for every $d\in\N$, we thus need to identify some unit vectors $\ket{\psi}\in\C^d$ satisfying \eqref{eq:req}.
In \cite{dariano_2004,dariano_2005} it is asserted that states of the form $\ket{\psi}\propto \sum_k \alpha^{k}\ket{k}$,
where $\alpha\in\C$ and $|\alpha|<1$, fulfill the requirement \eqref{eq:req}. 
This assertion, however, needs some amendments:
if $d$ is even and $\alpha\in\R$, then $\ket{\psi}=\sum_k \alpha^{k}\ket{k}$ satisfies
$\bra{\psi}U_{d/2,1}\ket{\psi}=0$ by a direct calculation.
Thus we present the following sufficient criterion for $\ket{\psi}$ to satisfy \eqref{eq:req}.

\begin{prop}\label{p:goodvec}
Let $d\in\N$, $r\in (0,\frac12)$, and $t\in\R$ if $d$ is odd and $t\in\R\setminus \frac{1}{2d}\Z$ if $d$ is even.
Then $\ket{\psi}= \sum_{k=0}^{d-1} \alpha^{k}\ket{k}$ with $\alpha = re^{2\pi i t}$ satisfies
$\bra{\psi}U_{p,q}\ket{\psi}\neq0$ for all $p,q\in\Z_d$.
\end{prop}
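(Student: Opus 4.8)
The plan is to reduce the claim to a statement about the non-vanishing of a single-variable exponential polynomial and then analyze its zeros. Recall from the computation just before the statement that $\bra{\psi}U_{p,q}\ket{\psi} = \bra{q}\cF_d\ket{\psi_p}$ where $\ket{\psi_p} = \sum_{j=0}^{d-1}\overline{\psi_{j\oplus p}}\psi_j\ket{j}$; with $\psi_j = \alpha^j$ this gives $\overline{\psi_{j\oplus p}}\psi_j = \overline{\alpha}^{\,j\oplus p}\alpha^j$. First I would split into the two cases $j+p<d$ and $j+p\ge d$ (where $\oplus$ wraps around), so that $\overline{\alpha}^{\,j\oplus p} = \overline{\alpha}^{\,j+p}$ or $\overline{\alpha}^{\,j+p-d}$ respectively, and write $\bra{\psi}U_{p,q}\ket{\psi}$ explicitly as $\sum_{j=0}^{d-1}\omega^{jq}\overline{\psi_{j\oplus p}}\psi_j$. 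Pulling out the common factor $\overline{\alpha}^{\,p}$ from the first block and $\overline{\alpha}^{\,p-d}$ from the second, this becomes (up to a nonzero scalar) a sum of the form $\sum_{j=0}^{d-1-p}(\omega^q|\alpha|^2)^j + \overline{\alpha}^{-d}\sum_{j=d-p}^{d-1}(\omega^q|\alpha|^2)^j$, i.e. essentially a geometric-type sum in the variable $z := \omega^q |\alpha|^2 = \omega^q r^2$.

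Next I would treat the degenerate case $p=0$ separately: there $\bra{\psi}U_{0,q}\ket{\psi} = \sum_{j=0}^{d-1}(\omega^q r^2)^j$, which is a partial geometric series that vanishes iff $(\omega^q r^2)^d = 1$ with $\omega^q r^2 \ne 1$; since $0<r<\tfrac12$ forces $r^{2d}<1$, this can never happen, so $p=0$ is fine for all $q$ and no condition on $t$ is needed. For $p\ge 1$, after summing the two geometric pieces one obtains an expression of the shape $\dfrac{1 - z^{d-p}}{1-z} + \overline{\alpha}^{-d}\,z^{d-p}\dfrac{1-z^{p}}{1-z}$ (using $z^d \ne 1$ so that $1-z \ne 0$, which again holds because $r^{2d}<1$). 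Clearing the denominator $1-z$, vanishing of $\bra{\psi}U_{p,q}\ket{\psi}$ is equivalent to
\[
1 - z^{d-p} + \overline{\alpha}^{-d}\bigl(z^{d-p} - z^{d}\bigr) = 0,
\]
i.e. $1 - z^{d-p}(1 - \overline{\alpha}^{-d}) - \overline{\alpha}^{-d} z^{d} = 0$. Here $\overline{\alpha}^{-d} = r^{-d}e^{-2\pi i d t}$, and $z^{d-p} = \omega^{q(d-p)} r^{2(d-p)} = \omega^{-qp} r^{2(d-p)}$, $z^d = r^{2d}$.

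The main work — and the step I expect to be the real obstacle — is then to rule out solutions of this last equation under the hypotheses $r\in(0,\tfrac12)$ and $t\notin\frac{1}{2d}\Z$ (when $d$ even). The strategy is to bound moduli: multiply through by $\overline{\alpha}^{\,d}=r^d e^{2\pi i dt}$ to get $r^d e^{2\pi i dt} - \omega^{-qp}r^{2(d-p)}(r^d e^{2\pi i dt} - 1) - r^{2d} = 0$, and then isolate the term with modulus $r^{2(d-p)}$ against the remaining terms, which have moduli $r^d$, $r^{2d}$, and $r^{2d-p}$. Using $r<\tfrac12$ and $1\le p\le d-1$ one shows the "middle" term strictly dominates (or is strictly dominated, depending on arrangement), a contradiction — except possibly in boundary configurations of $p$, which is exactly where the extra arithmetic constraint $t\notin\frac{1}{2d}\Z$ for even $d$ enters: when the modulus estimates become tight, one examines the argument of the would-be identity and shows that equality of phases forces $e^{4\pi i dt}=1$ or $e^{2\pi i dt}=\pm 1$, i.e. $2dt\in\Z$, contradicting the hypothesis. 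I would organize the modulus bookkeeping by cases on the size of $p$ relative to $d/2$ (so that one can decide which of $r^{2(d-p)}$, $r^d$, $r^{2d-p}$, $r^{2d}$ is largest), keeping the $d$ odd case cleaner since there no phase-coincidence obstruction survives, and reserving the delicate phase argument for the even-$d$, balanced-exponent situation. The geometric-series manipulations and the final triangle-inequality estimates are routine; pinning down precisely which $(p,q)$ and which $t$ produce a genuine zero — and verifying the stated exceptional set $\frac{1}{2d}\Z$ is exactly right — is the delicate part.
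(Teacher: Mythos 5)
Your proof is correct (up to a harmless sign typo: $\overline{\alpha}^{-d}=r^{-d}e^{+2\pi i dt}$, not $e^{-2\pi i dt}$, which does not affect any modulus or phase conclusion), and it takes a genuinely different route from the paper. The paper works with the raw $d$-term sum $\sum_{\ell\le d-1-p}r^{2\ell+p}e^{2\pi i\ell q/d}+\sum_{\ell\ge d-p}r^{2\ell+p-d}e^{2\pi i(\ell q/d+dt)}$ and argues by identifying the dominant summand (of modulus $r^{\min\{p,d-p\}}$) and bounding the geometric tail of the remaining moduli; when $d$ is even it must additionally track which moduli appear twice and observe that paired terms have ratio $e^{2\pi i(q/2+dt)}\neq -1$, using $t\notin\frac{1}{2d}\Z$. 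You instead sum the two geometric series in closed form, clear the denominator $1-z$ (legitimate since $|z|=r^2<1$), and reduce to the three-term identity $\overline{\alpha}^d - z^{d-p}(\overline{\alpha}^d-1)-z^d=0$ with $z=\omega^q r^2$. This buys you a cleaner case split: for $p\neq d/2$ a triangle-inequality estimate with $r<\tfrac12$ (and $d\ge2$) already rules out the identity by comparing the single dominant modulus $r^{\min\{d,\,2(d-p)\}}$ against the other two terms, with no need to discuss repeated moduli; and the only remaining case $p=d/2$ ($d$ even) collapses to the exact phase equation $e^{-2\pi i dt}(1-(-1)^q r^d)=r^d-(-1)^q$, which immediately forces $2dt\in\Z$ and hence contradicts $t\notin\frac{1}{2d}\Z$. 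Your version pinpoints the exceptional set more transparently than the paper's factor-out-$(1+z)$ argument, at the mild cost of one extra algebraic manipulation (the geometric-series closure). I verified the modulus bookkeeping you left as a sketch: both sub-cases $p<d/2$ and $p>d/2$ go through with $r<\tfrac12$ alone, so you in fact do not need the paper's intermediate $r<1/\sqrt{3}$.
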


\begin{proof}
We expand
\begin{align*}
\bra{\psi}U_{p,q}\ket{\psi}
&=\sum_{k,\ell=0}^{d-1}\sum_{j\in\Z_d}\omega^{jq}\overline{\alpha}^k\alpha^\ell
\braket{k}{j\oplus p}\!\braket{j}{\ell}\\
&=\sum_{\ell\le d-1-p}\omega^{\ell q}\overline{\alpha}^{\ell+p}\alpha^{\ell}
+\sum_{\ell\ge d-p}\omega^{\ell q}\overline{\alpha}^{\ell+p-d}\alpha^{\ell}\\
&=e^{-2\pi i pt}\left(
\sum_{\ell\le d-1-p}r^{2l+p}e^{2\pi i\frac{\ell q}{d}}
+\sum_{\ell \ge d-p}r^{2l+p-d}e^{2\pi i(\frac{\ell q}{d}+dt)}
\right).
\end{align*}
Therefore it suffices to see that
\begin{equation}\label{e:nonzero}
\sum_{\ell\le d-1-p}r^{2l+p}e^{2\pi i\frac{\ell q}{d}}
+\sum_{\ell \ge d-p}r^{2l+p-d}e^{2\pi i(\frac{\ell q}{d}+dt)}
\end{equation}
is nonzero for every $p,q\in\Z_d$. 
We shall reach this conclusion by observing that the largest absolute value of a term in \eqref{e:nonzero} strictly dominates the sum of the other absolute values.
Note that the largest absolute value equals $r^m$ where $m=\min\{p,d-p\}$, and the smaller absolute values belong to $\{r^{m+1},r^{m+2},\dots\}$ if $d$ is odd and to $\{r^{m+2},r^{m+4},\dots\}$ if $d$ is even.

First assume that $d$ is odd. 
Observe that the absolute values of terms in \eqref{e:nonzero} are pairwise distinct. Since
$r<\frac12$ implies
$$r^{m+1}+r^{m+2}+\cdots< \frac{r}{1-r}r^m < r^m,$$
we see that \eqref{e:nonzero} is nonzero.
Now assume that $d$ is even. Then the absolute values of terms in \eqref{e:nonzero} are not all pair-wise distinct; rather, some appear once and others appear twice.
If two terms in \eqref{e:nonzero} have the same absolute value, then one appears in the first sum and the other appears in the second sum, and the difference of their arguments is $2\pi(\frac{q}{2}+d t)$. 
Observe that $\frac{q}{2}+d t\notin \frac12\Z$ for every $q\in\{0,\dots,d-1\}$ by the assumption on~$t$. 
In particular, $z=e^{2\pi i(\frac{q}{2}+d t)}$ satisfies $|z|=1$, $z\neq -1$ and $z$ is a ratio of any distinct two terms in \eqref{e:nonzero} with the same absolute value.
Since $r<\frac12<\frac{1}{\sqrt{3}}$, we have
\begin{equation}\label{e:est}
2(r^{m+2}+r^{m+4}+\cdots) \le \frac{2r^2}{1-r^2}r^m < r^m.
\end{equation}
If $d\neq 2p$, then $r^m$ appears once as an absolute value in \eqref{e:nonzero}, and the sum of other absolute values in \eqref{e:nonzero} is strictly smaller than $r^m$ by \eqref{e:est}, so \eqref{e:nonzero} is nonzero.
If $d=2p$, then all the absolute values of terms in \eqref{e:nonzero} appear precisely twice, with the constant ratio $z\neq-1$. Thus, \eqref{e:nonzero} becomes $(1+z)(r^m e^{i\alpha_0} + r^{m+2} e^{i\alpha_2} + r^{m+4} e^{i\alpha_4} + \cdots)$ for some angles $\alpha_j$, and this expression is nonzero by \eqref{e:est}.

\end{proof}

In particular, by Lemma \ref{l:matrix} there exists a state in $\C^d$ satisfying \eqref{eq:req} for every $d\in\N$. Since \eqref{eq:req} is a topologically open condition, in fact a generic state in $\C^d$ satisfies \eqref{eq:req}.
Thus \eqref{eq:ICPOVM} can be used to produce a BIC-POVM from almost any state, which can then be employed in the scenario presented in Section \ref{sec:bell}.

Finally, for the sake of completeness, let us comment on the connection to a famous open problem in quantum information theory, {\it i.e.}, the existence problem of \emph{symmetric} IC-POVMs (SIC-POVMs). It is straightforward to show that if
\begin{align} \label{eq:cov_sic}
 \abs{\bra{q} \cF_d \ket{\psi_p}}^2=\frac{1}{d+1}   \quad\text{for all }(p,q)\neq(0,0),
\end{align}
then \eqref{eq:ICPOVM} is a SIC-POVM. Zauner famously conjectured in \cite{Zauner} that SIC-POVMs exist in every finite dimension $d$.
A state that fulfills \eqref{eq:cov_sic} is called a \textit{fiducial vector} and the SIC-POVM it induces is called covariant with respect to $\Z_d\times \Z_d$. Hence, one way to confirm Zauner's conjecture is to show the existence of fiducial vectors in every dimension. Fiducial vectors have been obtained in every dimension up to 67 \cite{Scott_2010}. 
Due to their uniform properties,
SIC-POVMs have been widely used to design protocols in quantum information theory, for example for quantum key distribution \cite{ren05},
quantum state tomography \cite{bent15},
entanglement detection \cite{shang18},
certification of non-projective measurements \cite{tav19}, and random number generation \cite{tavakoli21}.
On the other hand, the protocol of this paper relies on more general BIC-POVMs, whose existence is unproblematic (in contrast to SIC-POVMs).

\section{A Bell inequality}\label{sec:bell}

\begin{figure}
    \centering

\begin{tikzpicture}

\shadedraw[inner color=gray!45, outer color=white,draw=white]
(0,-.6) ellipse (3.3cm and .7cm);

\draw[black, very thick] (-3.5-.5,-1) rectangle (-2.8-.5,-.35) node[below,xshift=-.35cm,yshift=0cm] {{\large A}};
\draw[->] (-4.15-.5 , .2) -- (-3.75-.5 , -.2) node[above,yshift=.3cm,xshift=-.5cm] {\tiny{$\povm$}};
\draw[->] (-3.75-.5 , -1.15) -- (-4.15-.5 , -1.55) node[below,xshift=-.3cm] {\tiny{$a'\in [d^2]$}};
\draw[->] (-2.15-.5 , .2) -- (-2.75-.5 , -.2) node[above,yshift=.3cm,xshift=.7cm] {\tiny{$(j,k)\in \text{Pairs}(d^2)$}};
\draw[->] (-2.75-.5 , -1.15) -- (-2.15-.5 , -1.55) node[below,xshift=0cm] {\tiny{$a\in \{1,2,\perp \}$}};

\draw[black, very thick] (3.5+.5,-1) rectangle (2.8+.5,-.35) node[below,xshift=.35cm,yshift=0cm] {{\large B}};
\draw[->] (3.15+.5 , .2) -- (3.15+.5 , -.2) node[above,yshift=.4cm] {\tiny{$y\in [d^2]$}};
\draw[->] (3.15+.5 , -1.15) -- (3.15+.5 , -1.55) node[below,xshift=0cm] {\tiny{$b\in \{1,\perp \}$}};

\draw[black, very thick] (-.35,1.35) rectangle (.35,2) node[below,xshift=-.35cm,yshift=0cm] {{\large E}};

\def\science#1#2 {
\node[circle, draw, minimum size=2.7mm,color=gray] (#2) at #1 {};
\foreach \ang in {0,120,240}
  \draw[rotate around={\ang:#1},color=gray] #1 ellipse (1.5mm and .5mm);
\fill #1 circle (0.15mm);
}
\science{(0,.3)}{C1};

\path[draw,-,decorate,decoration={snake}] (-2.5,-.65) -- (-.5, 0);
\path[draw,-,decorate,decoration={snake}] (0,1.2) -- (0,.7)
node[right, xshift=.2cm,yshift=-.3cm] {\textcolor{gray}{\small{$\ket{\psi}$}}};
\path[draw,-,decorate,decoration={snake}] (2.5,-.65) -- (.5, 0) node at (0,-0.9) {\small{$\rho= \tr_E\dyad{\psi}$}};

\end{tikzpicture}

\label{fig:setup2_new}
\caption{The Bell scenario with malicious eavesdropper Eve who is trying to guess Alice's outcome of her $\povm$ setting.}
\end{figure}

In this paper we investigate a variation of the scenario presented in \cite{tavakoli21}. For any integer $d \ge 2$, we consider the following Bell scenario (see Figure \ref{fig:setup2_new}). Alice has $d^2(d^2-1)/2 + 1$ measurement settings, where the first $d^2(d^2-1)/2$ settings have 3 outcomes, and the last one has $d^2$ outcomes. The first $d^2(d^2-1)/2$ settings are labeled by pairs $(j,k)$ coming from the set
\begin{equation}
\text{Pairs}(d^2):=\{(j,k) \in [d^2] \times [d^2] \colon j<k\}
\end{equation}
The last of Alice's settings is labeled $\povm$. We label the outcomes of the 3-outcome measurements by $a \in \{1,2,\perp\}$ and the outcome of the $\povm$ measurement by $j \in [d^2]$. On the other side, Bob has $d^2$ measurement settings with two outcomes each, where the outcomes are labeled by $b \in \{1,\perp\}$.

The overall goal in this paper is to show that the outcome of Alice's $\povm$ setting is unpredictable for any eavesdropper Eve. We reach this conclusion in the device-independent setting, relying only on the observed correlation in the Bell scenario. We do this by constructing a Bell inequality whose maximal violation certifies certain desired properties of the measurements and the shared state. Since we aim to certify randomness coming from non-projective measurements, a fully assumption-free self-testing statement is impossible \cite[Theorem C]{baptista23}. As we will see, however, the maximal violation of our Bell inequality will certify enough about the state and the measurements to conclude that Eve cannot predict the outcome of Alice's $\povm$ measurement better than a random guess. 

We first simply state the Bell inequality, then the corresponding Bell operator and some intuition for why we chose this Bell inequality, in terms of the Bell operator. The inequality depends on $d$ and on a fixed BIC-POVM coming from unit vectors $\{\ket{e_j}\}_{j=1}^{d^2}$. The inequality depends on the parameters $s_{jk}=\abs{\braket{e_j}{e_k}}^2$ ({\it i.e.}, the entries of the matrix $S$ from Lemma \ref{l:matrix}), and is given by
\begin{equation}\label{e:bellfun}
\begin{split}
&  2\sum_{j<k}\sqrt{1-s_{jk}} \Big[ p(1,1|(j,k),j) + p(2,1|(j,k),k) - p(1,1|(j,k),k) - p(2,1|(j,k),j) \Big]  \\
& - \sum_{j<k}(1-s_{jk}) \big[ p_A(1|(j,k)) + p_A(2|(j,k)) \big]  - d(d-2) \sum_{j=1}^{d^2} p_B(1|j) - \sum_{j=1}^{d^2} p(j,\perp|\povm,j)
\end{split}
\end{equation}
where we use the shorthand notation $\sum_{j<k} = \sum_{j=1}^{d^2-1} \sum_{k = j+1}^{d^2}$. 
A quantum strategy for the above game consists of two finite-dimensional Hilbert spaces $\cH_A$ and $\cH_B$, a bipartite state $\rho$ on $\cH_A\otimes \cH_B$ given as a density matrix, POVMs $(A^{jk}_1,A^{jk}_2,I-A^{jk}_1-A^{jk}_2)$ for $j<k\in[d^2]$ and $(A^\povm_1,\dots,A^\povm_{d^2})$ on $\cH_A$ (Alice's measurements), and POVMs $(B_j,I-B_j)$ for $j\in[d^2]$ on $\cH_B$ (Bob's measurements).
The corresponding Bell operator is given by 
\begin{equation} \label{bell_ineq}
\begin{split}
W_d :=\, & 2\sum_{j<k} \sqrt{1-s_{jk}}(A^{jk}_1 - A^{jk}_2) \otimes (B_j - B_k)  -  \sum_{j<k} (1-s_{jk}) (A^{jk}_1 + A^{jk}_2)\otimes I \\  
&-  d(d-2)\sum_{j=1}^{d^2} I\otimes B_j - \sum_{j=1}^{d^2} A^\povm_j \otimes (I - B_j),    
\end{split}
\end{equation}
and the value attained by this quantum strategy is $\tr(W_d\rho)$.
Note that $W_d$ depends on the choice of a $d$-dimensional BIC-POVM, and not just on $d$; however, we index it with $d$ because its maximal quantum value depends on $d$ only, as seen in the remainder of the section.
In the following, we will often omit the tensor products with the identity operator in the notation if it does not cause confusion. That is, we write $A^{jk}_a$ instead of $A^{jk}_a \otimes I$ and $B_j$ instead of $I \otimes B_j$. 

The intuition for this Bell inequality comes from the aim of certifying a $d$-dimensional reference strategy (see section \ref{subsec:ref_strategy}). In this strategy, we use the maximally entangled state $\ket{\varphi_d}$, and the $B_j$ operators are rank-1 projections, and therefore $B_j - B_k$ is rank-2. In order to certify this operator, we set $A^{jk}_{1(2)}$ to be the transposition of the projection onto the positive (negative) eigenvalue of $(B_j - B_k)$. Since the dimension is in general greater than 2, Alice's $jk$ measurement must have a third outcome, occupying the remaining $d-2$ dimensions of her Hilbert space. That this third outcome occurs is enforced by the second term in $W_d$, penalizing Alice for outputting 1 or 2. The last term in $W_d$ ensures a high level of correlation between Alice's $\povm$ setting and Bob's measurements. The overlaps between Bob's measurements are certified using the $s_{jk}$ dependence in the coefficients in $W_d$. The precise form of these coefficients is a result of a simple sum-of-squares decomposition (see section \ref{subsec:SOS}).

\subsection{Reference strategy}\label{subsec:ref_strategy}
We now consider a reference strategy in the quantum model reaching a score of $d^2$. Afterwards we proceed to show that $d^2$ is indeed optimal. Let Alice and Bob share the canonical maximally entangled state
\begin{equation*}
    \ket{\varphi_d}:=\frac{1}{\sqrt{d}}\sum_{k=0}^{d-1} \ket{kk}.
\end{equation*}
Bob's measurements are given by $\Tilde{B}_j:=\dyad{\Tilde{\psi}_j}$ and $\Tilde{B}_j^{\perp}:=I-\dyad{\Tilde{\psi}_j}$, where $\{\Tilde{\psi}_j\}_j$ are unit vectors inducing the BIC-POVM. Now, the operator $\Tilde{B}_j-\Tilde{B}_k$ for $j\neq k$ is traceless, rank 2 and Hermitian. Hence, it has a spectral decomposition of the form 
\begin{equation*}
    \Tilde{B}_j-\Tilde{B}_k=\gamma (\dyad{a^{jk}_1}-\dyad{a^{jk}_2})
\end{equation*}
where $\ket{a_1^{jk}}, \ket{a_2^{jk}}$ are orthogonal unit vectors and $\gamma \in (0,2)$. Furthermore,
$$2\gamma^2
=\tr\left((\Tilde{B}_j-\Tilde{B}_k)^2\right)
=\tr\left(\Tilde{B}_j+\Tilde{B}_k-\Tilde{B}_j\Tilde{B}_k-\Tilde{B}_k\Tilde{B}_j\right)
=2-2s_{jk}$$
so $\gamma=\sqrt{1-s_{jk}}$.
Let Alice's POVM associated with her 3-output measurement be given by $\Tilde{A}_1^{jk}:=(\dyad{a_1^{jk}})^{\ti}$, $\Tilde{A}_2^{jk}:=(\dyad{a_2^{jk}})^{\ti}$ and $\Tilde{A}_{\perp}^{jk}:=(I-\dyad{a_1^{jk}}-\dyad{a_2^{jk}})^{\ti}$. Notice in particular that we have the relations
\begin{equation} \label{int_sec31}
\begin{split}
    \Tilde{B}_{j}-\Tilde{B}_k&=\sqrt{1-s_{jk}}(\Tilde{A}^{jk}_1- \Tilde{A}^{jk}_2)^{\ti}
    \\
    (\Tilde{B}_{j}-\Tilde{B}_k)^2&=(1-s_{jk})(\Tilde{A}^{jk}_1+ \Tilde{A}^{jk}_2)^{\ti}
\end{split}
\end{equation}
Moreover, let $\Tilde{A}^{\povm}_j:=\tfrac{1}{d} \Tilde{B}_j^{\ti}$.

Recall that since $\ket{\varphi_d}$ is the canonical maximally entangled state we have for any $X \in M_d(\C)$ that $X\otimes I \ket{\varphi_d}=I \otimes X^\ti \ket{\varphi_d}$. Then $\bra{\varphi_d}W_d\ket{\varphi_d}$, 
the value of the Bell function \eqref{e:bellfun} at this strategy,
\begin{equation*}
\begin{split}
&2\sum_{j<k} \sqrt{1-s_{jk}} \bra{\varphi_d}(\Tilde{A}^{jk}_1 - \Tilde{A}^{jk}_2) \otimes (\Tilde{B}_j - \Tilde{B}_k)\ket{\varphi_d}  -  \sum_{j<k} (1-s_{jk}) \bra{\varphi_d}(\Tilde{A}^{jk}_1 + \Tilde{A}^{jk}_2)\ket{\varphi_d} \\  
&-  d(d-2)\sum_{j=1}^{d^2} \bra{\varphi_d}I\otimes\Tilde{B}_j\ket{\varphi_d} - \sum_{j=1}^{d^2} \bra{\varphi_d}\Tilde{A}^\povm_j \otimes (I - \Tilde{B}_j) \ket{\varphi_d},
\end{split}
\end{equation*}
reduces to
\begin{equation*}
\begin{split}
&\sum_{j<k}  \bra{\varphi_d}I \otimes (\Tilde{B}_j - \Tilde{B}_k)^2\ket{\varphi_d}  -  d^2(d-2) \bra{\varphi_d}I \otimes I\ket{\varphi_d}
\\
=\, & \tfrac{1}{d}\sum_{j<k}  \tr[(\Tilde{B}_j - \Tilde{B}_k)^2]  -  d^2(d-2)
=\frac{2}{d}\sum_{j<k}  (1-s_{jk})  -  d^2(d-2)
\end{split}
\end{equation*}
using the relations \eqref{int_sec31}, $\sum_j \Tilde{B}_j=dI$ and $(\Tilde{A}^{\povm}_j)^{\ti}(I-\Tilde{B}_j)=0$.
The sum $\sum_{j<k}s_{jk}$ is given by the sum of the upper triangle of the matrix $S$ induced by the 
initial BIC-POVM. By Lemma \ref{l:matrix} we have $\sum_{j<k}s_{jk}=\tfrac{d^3-d^2}{2}$ 
so $\bra{\varphi_d}W_d\ket{\varphi_d}$ further simplifies to
\begin{equation*}
    \frac{2}{d} \left(\frac{d^4-d^2}{2} - \frac{d^3-d^2}{2}\right)   -  d^2(d-2)=d^2,
\end{equation*}
as desired.  

\subsection{Sum-of-squares decomposition of the Bell inequality}\label{subsec:SOS}

In order to show that $d^2$ is the maximal quantum value of \eqref{e:bellfun}, or equivalently, a tight upper bound on the eigenvalues of the Bell operator $W_d$, we will need the following technical lemma.

\begin{lem}\label{l:little}
Let $X,Y\succeq0$ satisfy $X+Y\preceq I$. Then $X+Y-(X-Y)^2\succeq0$. Furthermore, equality holds if and only if $X,Y$ are projections orthogonal to each other.
\end{lem}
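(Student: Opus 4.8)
The plan is to prove both the inequality $X+Y-(X-Y)^2\succeq 0$ and the equality characterization by reducing everything to a statement about a single operator via functional calculus. First I would set $Z:=X-Y$, which is Hermitian, and observe that $X+Y-(X-Y)^2 = X+Y-Z^2$. The key algebraic identity I would exploit is that $X+Y-Z^2 = (X+Y) - (X-Y)^2$ can be rewritten, using $4XY+4YX = (X+Y)^2 - (X-Y)^2$ when $X,Y$ commute — but since $X,Y$ need not commute, I would instead look for a direct sum-of-squares-type certificate. The cleanest route: write $X+Y - (X-Y)^2 = X - X^2 + Y - Y^2 + XY + YX$ (expanding $(X-Y)^2 = X^2 - XY - YX + Y^2$). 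Now $X+Y\preceq I$ with $X,Y\succeq 0$ gives $X\preceq I$ and $Y\preceq I$, hence $X-X^2 = X^{1/2}(I-X)X^{1/2}\succeq 0$ and similarly $Y-Y^2\succeq 0$; the remaining piece $XY+YX$ is the obstacle since it is not positive in general.

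To handle $XY+YX$ together with the leftover positivity, I would use the hypothesis more carefully. From $X+Y\preceq I$ we get $Y\preceq I-X$, so $X^{1/2}YX^{1/2}\preceq X^{1/2}(I-X)X^{1/2} = X-X^2$. This suggests conjugating cleverly. Here is the approach I would actually carry out: since $I-X-Y\succeq 0$, write $I-X-Y = C^*C$ for some $C$ (e.g. $C=(I-X-Y)^{1/2}$). Then consider the vector-valued sum of squares
\begin{equation*}
X - (X-Y)^2 + Y = \big(\text{something}\big)^*\big(\text{something}\big).
\end{equation*}
Concretely, I expect the identity
\begin{equation*}
X+Y-(X-Y)^2 = \big(X^{1/2} - X^{1/2}(X-Y)\big)^*\!\big(\cdots\big) + \cdots
\end{equation*}
to be messy, so instead I would try the substitution trick: note $X+Y-(X-Y)^2 \succeq 0$ is equivalent (multiplying is not allowed for non-invertible operators, but we can restrict to ranges) to a two-by-two block positivity statement. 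The slickest proof I know: for unit vector $v$, let $a = \langle v, Xv\rangle$, $b=\langle v, Yv\rangle$, and bound $\langle v, (X-Y)^2 v\rangle = \|(X-Y)v\|^2$. Writing $(X-Y)v = Xv - Yv$ and using $\|Xv\|^2 = \langle v, X^2 v\rangle \le \langle v, Xv\rangle = a$ (since $X^2\preceq X$), similarly $\|Yv\|^2\le b$, and the cross term $-2\langle Xv, Yv\rangle$, I would get $\|(X-Y)v\|^2 \le a+b - 2\langle Xv,Yv\rangle$. Then I need $\langle Xv, Yv\rangle \ge 0$, which follows from... hmm, this need not hold. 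So I would fall back to the operator identity approach: verify directly that
\begin{equation*}
X+Y-(X-Y)^2 = X^{1/2}(I-X-Y)X^{1/2} + Y^{1/2}(I-X-Y)Y^{1/2} + \big(X^{1/2}Y^{1/2}-Y^{1/2}X^{1/2}\big)\big(X^{1/2}Y^{1/2}-Y^{1/2}X^{1/2}\big)^* \cdot(\text{coefficient})
\end{equation*}
— i.e., find an exact decomposition into manifestly positive pieces. I expect the correct identity to be $X+Y-(X-Y)^2 = X^{1/2}(I-X-Y)X^{1/2}+Y^{1/2}(I-X-Y)Y^{1/2} + |[X^{1/2},Y^{1/2}]|^2$-type terms up to checking; finding the precise coefficients is the main technical obstacle.

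For the equality case, once a decomposition $X+Y-(X-Y)^2 = \sum_i Z_i^* Z_i$ into positive summands is in hand, equality $X+Y-(X-Y)^2 = 0$ forces each $Z_i=0$. From $X^{1/2}(I-X-Y)X^{1/2}=0$ and $Y^{1/2}(I-X-Y)Y^{1/2}=0$ I would conclude $(I-X-Y)$ vanishes on $\ran X$ and on $\ran Y$; combined with the commutator term vanishing ($[X^{1/2},Y^{1/2}]=0$, so $X,Y$ commute), I can simultaneously diagonalize $X,Y$ and reduce to the scalar case: for commuting $X,Y$ the condition $x+y-(x-y)^2=0$ with $0\le x,y$, $x+y\le 1$ forces, on each joint eigenspace, $x+y=(x-y)^2$; together with $x-x^2\ge 0$, $y-y^2\ge 0$ and the expansion $x+y-(x-y)^2 = (x-x^2)+(y-y^2)+2xy$, all three nonnegative terms must vanish, giving $xy=0$ and $x,y\in\{0,1\}$, i.e. $X,Y$ are orthogonal projections. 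The converse (orthogonal projections give equality) is an immediate computation: if $X^2=X$, $Y^2=Y$, $XY=YX=0$ then $(X-Y)^2 = X+Y$. The main obstacle throughout is pinning down the explicit positive decomposition in the non-commuting case; if that proves awkward I would instead argue the inequality by the block-matrix/Schur-complement method — positivity of $\begin{pmatrix} I & X-Y \\ X-Y & X+Y\end{pmatrix}$ restricted appropriately — deducing it from $X+Y\preceq I$ and $X,Y\succeq 0$, which sidesteps commutators entirely.
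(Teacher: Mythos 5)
Your proposal does not reach a complete proof: every route you sketch either fails or is left unverified at the decisive step. The candidate decomposition you settle on, namely $X+Y-(X-Y)^2 = X^{1/2}(I-X-Y)X^{1/2}+Y^{1/2}(I-X-Y)Y^{1/2}+c\,\abs{[X^{1/2},Y^{1/2}]}^2$, cannot hold with a nonnegative coefficient $c$: the exact remainder after subtracting the first two terms is $XY+YX+X^{1/2}YX^{1/2}+Y^{1/2}XY^{1/2}$, and this is not positive semidefinite even under the hypotheses. For instance, with $X=\frac12\begin{pmatrix}1&0\\0&0\end{pmatrix}$ and $Y=\frac{\ve}{2}\begin{pmatrix}1&1\\1&1\end{pmatrix}$ (so $X,Y\succeq0$, $X+Y\preceq I$ for small $\ve>0$) the remainder equals $\ve\begin{pmatrix}7/8&3/8\\3/8&1/8\end{pmatrix}$, which has negative determinant; hence it cannot equal a positive multiple of $\abs{[X^{1/2},Y^{1/2}]}^2$, and your certificate does not exist in the proposed form. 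Since your entire equality analysis (vanishing of $X^{1/2}(I-X-Y)X^{1/2}$, of $Y^{1/2}(I-X-Y)Y^{1/2}$, and of the commutator term, followed by joint diagonalization) is conditioned on that decomposition, the characterization of the equality case is also unproven. The Schur-complement fallback is only gestured at: positivity of $\begin{pmatrix}I & X-Y\\ X-Y & X+Y\end{pmatrix}$ is indeed equivalent to the claimed inequality and can be established by writing the block matrix as $\begin{pmatrix}I-X-Y&0\\0&0\end{pmatrix}+\begin{pmatrix}X&X\\X&X\end{pmatrix}+\begin{pmatrix}Y&-Y\\-Y&Y\end{pmatrix}$, but you do not supply this (or any) argument, and you say nothing about how the equality case would follow from it, which is the harder half since one must first force $X$ and $Y$ to commute.

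For comparison, the paper's proof runs on a different explicit three-term identity,
\begin{equation*}
X+Y-(X-Y)^2=(I-X+Y)X(I-X+Y)+(I+X-Y)Y(I+X-Y)+(X-Y)(I-X-Y)(X-Y),
\end{equation*}
whose summands are manifestly positive semidefinite under the hypotheses. Equality then forces $(I-X+Y)X=0$ and $(I+X-Y)Y=0$, which makes $XY$ hermitian, hence $X$ and $Y$ commute; the problem then reduces to the scalar system, exactly the reduction you carry out correctly in your last step. So the missing ingredient in your write-up is precisely a valid positive decomposition (or a completed block-matrix argument together with an equality analysis); the scalar endgame and the converse direction are fine.
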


\begin{proof}
The inequality follows from rewriting $X+Y-(X-Y)^2$ as
$$(I-X+Y)X(I-X+Y)+(I+X-Y)Y(I+X-Y)+(X-Y)(I-X-Y)(X-Y),$$
where all three terms are positive semidefinite by construction.

Now suppose $X,Y\succeq0$, $X+Y\preceq I$ and $X+Y-(X-Y)^2=0$. The above certificate for positive semidefinitenss of $X+Y-(X-Y)^2$ implies
$$(I-X+Y)X=0,\qquad (I+X-Y)Y=0.$$
In particular, $XY$ is hermitian, so $X$ and $Y$ commute. Therefore they are jointly diagonalizable, with corresponding diagonal entries $x_i$ and $y_i$, which satisfy for all $i$
\begin{equation}\label{e:eig_sys}
x_i,y_i\ge0,\quad x_i+y_i\le 1,\quad (1-x_i+y_i)x_i=0,\quad (1+x_i-y_i)y_i=0.
\end{equation}
A direct calculation shows that the solutions of \eqref{e:eig_sys} for a fixed $i$ are $(0,0),(1,0),(0,1)$. Therefore, $X$ and $Y$ are projections and $XY=0$.
\end{proof}

\begin{prop}\label{p:bellmax}
For any integer $d \ge 2$ and any choice of POVMs $\{A^{jk}_a\}_{jk}$, $\{A^{\povm}_j\}$ and $\{B_j\}$ in the above scenario, we have
\begin{equation*}
    d^2 I - W_d \succeq 0.
\end{equation*}
\end{prop}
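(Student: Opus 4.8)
The plan is to prove the stronger statement that $d^2 I - W_d$ equals an explicit sum of positive semidefinite operators (a sum-of-squares decomposition of the shifted Bell operator); positivity of $d^2 I - W_d$ then follows immediately. This is also the decomposition whose saturation later drives the certification relations of Proposition~\ref{prop_Bell_certificate}, so it is worth writing out the summands explicitly rather than just chaining inequalities. Throughout I use that operators on $\cH_A$ and operators on $\cH_B$ commute once tensored with the identity on the other factor.

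\textbf{The $A^{jk}$--$B_j$ cross terms.} First I would absorb the cross terms $2\sqrt{1-s_{jk}}\,(A^{jk}_1 - A^{jk}_2)\otimes(B_j-B_k)$ of $W_d$ by completing a square. For each $j<k$ set
\[
P_{jk} := \sqrt{1-s_{jk}}\,(A^{jk}_1 - A^{jk}_2)\otimes I \;-\; I\otimes(B_j - B_k),
\]
which is Hermitian. Expanding $P_{jk}^2$ and solving for the cross term rewrites it as $(1-s_{jk})(A^{jk}_1 - A^{jk}_2)^2\otimes I + I\otimes(B_j-B_k)^2 - P_{jk}^2$. The first summand combines with the $-(1-s_{jk})(A^{jk}_1+A^{jk}_2)\otimes I$ already present in $W_d$ to produce a term $-(1-s_{jk})\bigl[(A^{jk}_1+A^{jk}_2) - (A^{jk}_1-A^{jk}_2)^2\bigr]\otimes I$ in $W_d$. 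Now $1-s_{jk}\ge 0$ by Lemma~\ref{l:matrix}, and the bracket is $\succeq 0$ by Lemma~\ref{l:little} applied to $X=A^{jk}_1$, $Y=A^{jk}_2$ (legitimate since $A^{jk}_1+A^{jk}_2\preceq I$ because $I-A^{jk}_1-A^{jk}_2\succeq0$); hence this term of $W_d$ is $\preceq 0$, i.e.\ it moves onto the positive side of $d^2 I - W_d$.

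\textbf{The Bob-only terms.} Next, the pieces $I\otimes(B_j-B_k)^2$ produced above, together with the $-d(d-2)\sum_j I\otimes B_j$ already in $W_d$, combine—via the elementary identity $\sum_{j<k}(B_j-B_k)^2 = d^2\sum_j B_j^2 - T^2$ (obtained by counting how often each $B_j^2$ and each product $B_jB_k$ occurs, with $T:=\sum_j B_j$), followed by the completion of the square $2dT - T^2 = d^2 I - (T-dI)^2$—into $d^2 I - d^2\sum_j I\otimes(B_j-B_j^2) - I\otimes(T-dI)^2$, in which $B_j - B_j^2 = B_j(I-B_j)\succeq0$ since $B_j$ and $I-B_j$ are commuting positive operators. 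The only remaining term of $W_d$ is $-\sum_j A^\povm_j\otimes(I-B_j)$, which is $\preceq 0$ as each $A^\povm_j\otimes(I-B_j)$ is a tensor product of positive operators. Collecting everything, the $d^2 I$ cancels in $d^2 I - W_d$, leaving
\begin{align*}
d^2 I - W_d = {}& \sum_{j<k} P_{jk}^2 \;+\; \sum_{j<k}(1-s_{jk})\bigl[(A^{jk}_1+A^{jk}_2) - (A^{jk}_1-A^{jk}_2)^2\bigr]\otimes I \\
& {}+\; d^2\sum_j I\otimes\bigl(B_j - B_j^2\bigr) \;+\; I\otimes(T-dI)^2 \;+\; \sum_j A^\povm_j\otimes(I-B_j),
\end{align*}
visibly a sum of positive semidefinite operators, which proves the proposition.

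\textbf{Obstacle and sanity check.} I do not anticipate a genuine obstacle: with Lemmas~\ref{l:matrix} and~\ref{l:little} in hand, the argument amounts to bookkeeping of cancellations, the only mildly non-obvious ingredients being the identity $\sum_{j<k}(B_j-B_k)^2 = d^2\sum_j B_j^2 - T^2$ and the completion of the square $2dT - T^2 = d^2 I - (T-dI)^2$. As a consistency check, in the reference strategy of Section~\ref{subsec:ref_strategy} every summand above annihilates $\ket{\varphi_d}$: $P_{jk}\ket{\varphi_d}=0$ by the transpose relations~\eqref{int_sec31}; the $A$-bracket vanishes since $\tilde A^{jk}_1,\tilde A^{jk}_2$ are orthogonal projections; $B_j-B_j^2=0$ since $\tilde B_j$ is a projection; $T-dI=0$ by the normalisation $\sum_j\tilde B_j=dI$; and $\tilde A^\povm_j\otimes(I-\tilde B_j)\ket{\varphi_d}=0$ since $\tilde A^\povm_j\propto\tilde B_j^{\ti}$. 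Thus $d^2 I - W_d$ is singular there, so the bound $d^2$ is attained, consistent with Section~\ref{subsec:ref_strategy}.
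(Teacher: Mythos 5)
Your decomposition is exactly the paper's sum-of-squares certificate $\Theta_d$ from Eq.~\eqref{e:SOS_new_finish} (your $P_{jk}^2$, $(T-dI)^2$, and the remaining three summands coincide term by term), verified by the same bookkeeping identity $\sum_{j<k}(B_j-B_k)^2=d^2\sum_j B_j^2-T^2$ and the same appeal to Lemma~\ref{l:little}; the argument is correct. This is essentially identical to the paper's proof of Proposition~\ref{p:bellmax}, with the sanity check on the reference strategy as a pleasant extra.
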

\begin{proof}
    Let 
\begin{equation}\label{e:SOS_new_finish}
\begin{split}
\Theta_d :=\, &  \sum_{j<k}  \left[\sqrt{1-s_{jk}} (A^{jk}_1 - A^{jk}_2) - ( B_j- B_k ) \right]^2 + \left( d I - \sum_{j=1}^{d^2} B_j \right)^2 \\
& \left. + \sum_{j=1}^{d^2} A^\povm_j \otimes (I - B_j) + d^2 \sum_{j=1}^{d^2} \left( B_j - B_j^2 \right) \right. \\
& \left. +  \sum_{j<k}(1-s_{jk}) \left[ A^{jk}_1 + A^{jk}_2 - \left( A^{jk}_1 - A^{jk}_2 \right)^2 \right].
\right.
\end{split}
\end{equation}
The goal is to show that $W_d+\Theta_d=d^2I$. First notice that if we expand the square in the first term of $\Theta_d$ we get
\begin{equation} \label{int_s1}
\begin{split}
    &\left[\sqrt{1-s_{jk}} (A^{jk}_1 - A^{jk}_2) - ( B_j- B_k ) \right]^2\\ 
    = \, & (1-s_{jk}) (A^{jk}_1 - A^{jk}_2)^2 - 2\sqrt{1-s_{jk}} (A^{jk}_1 - A^{jk}_2)\otimes (B_j - B_k) 
    \\ &+(B_j^2+B_k^2-\{ B_j,B_k\}).
\end{split}
\end{equation}
The second term on the right hand side in \eqref{int_s1} cancels with the first term of $W_d$ in \eqref{bell_ineq}. The first term on the right hand side of \eqref{int_s1} cancels with the last part of the last term of~$\Theta_d$. Notice also that the second term of $W_d$ in \eqref{bell_ineq} cancels with the first part of the last term of~$\Theta_d$, and that the terms involving $A^{\povm}_j$ also cancel. Altogether we have reduced $W_d+\Theta_d$ to 
\begin{equation*}
\begin{split}
    \sum_{j<k} (B_j^2+B_k^2-\{ B_j,B_k\}) +  \left( d I - \sum_{j=1}^{d^2} B_j \right)^2  + d^2 \sum_{j=1}^{d^2} \left( B_j - B_j^2 \right)-  d(d-2)\sum_{j=1}^{d^2} B_j.
\end{split}
\end{equation*}
Using the fact that
\begin{align*}
    \sum_{j<k} (B_j^2+B_k^2)=\tfrac{1}{2}\left(\sum_{j,k=1}^{d^2}(B_j^2+B_k^2)-2\sum_{\ell=1}^{d^2}B_{\ell}^2 \right)=(d^2-1)\sum_{j=1}^{d^2} B_j^2,
\end{align*}
we get 
\begin{equation}\label{int_s2}
    \begin{split}
      &(d^2-1)\sum_{j=1}^{d^2} B_j^2-\sum_{j<k}\{ B_j,B_k\}+\left( d I - \sum_{j=1}^{d^2} B_j \right)^2 \\ 
      &+ d^2 \sum_{j=1}^{d^2} \left( B_j - B_j^2 \right) -  d(d-2)\sum_{j=1}^{d^2} B_j   .
    \end{split}
\end{equation}
Expanding the third term leads to
\begin{equation}\label{int_s3}
    \begin{split}
     \left( d I - \sum_{j=1}^{d^2} B_j \right)^2&= d^2I+ \sum_{j,k=1}^{d^2}B_jB_k -2d\sum_{j=1}^{d^2} B_j \\
    &=d^2I+ \sum_{j=1}^{d^2} B_j^2 + \sum_{j<k} \{B_j,B_k \}-2d\sum_{j=1}^{d^2} B_j ,
    \end{split}
\end{equation}
where we have used
\begin{align*}
\sum_{j,k=1}^{d^2}B_jB_k=\sum_{j}B_j^2+\sum_{j<k} \{B_j,B_k \}.
\end{align*}
Upon inserting \eqref{int_s3} in \eqref{int_s2} one obtains $d^2I$ after a straightforward simplification. We have thus shown $d^2 I - W_d=\Theta_d$. 

Notice that the first two terms of $\Theta_d$ are squares of Hermitian operators which means that they are positive semidefinite. The third and the fourth terms are also positive semidefinite which follows from the fact that the operators form POVMs. It follows by Lemma \ref{l:little} that the last term of $\Theta_d$ is positive semidefinite as well. We conclude that $\Theta_d$ is positive semidefinite and therefore $d^2 I - W_d \succeq 0$, as desired.
\end{proof}

\section{Representation-theoretic auxiliaries}\label{sec:repr}

Before analyzing strategies where the Bell function \eqref{e:bellfun} attains the maximal quantum value $d^2$, we require a few intermediate results on operators satisfying the fundamental relations of BIC-POVMs, and a decomposition of states into maximally entangled ones. These results are obtained using techniques from C*-algebras and representation theory (for the general theory, see \cite{Tak02} and \cite{procesi}), and might be of independent interest.

\subsection{A relevant C*-algebra}\label{sec:cstar}

This subsection introduces a C*-algebra whose representations lurk behind optimal strategies for the scenario in Section \ref{sec:bell}. 
Let $S\in\mtxr{d^2}$ be a matrix induced by a BIC-POVM. With it we associate the universal unital C*-algebra
$$\cA_S={\rm C}^*\bigg\langle x_1,\dots,x_{d^2}\colon 
x_j=x_j^*=x_j^2\ \forall j,\ \sum_{j=1}^{d^2}x_j=d,\ 
x_jx_kx_j=s_{jk}x_j\ \forall j, k\bigg\rangle.
$$
For a finite-dimensional representation $\pi:\cA_S\to\mtxc{D}$ denote $\dim\pi=D$.

\begin{prop}\label{p:rk}
Let $\pi$ be a finite-dimensional representation of $\cA_S$. Then $\dim \pi$ is divisible by $d$, $\tr \pi(x_j)=\frac{\dim\pi}{d}$ for all $j$, and $\pi(x_1),\dots,\pi(x_{d^2})$ are linearly independent.
\end{prop}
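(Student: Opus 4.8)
The plan is to establish the three claims in turn, using the defining relations of $\cA_S$ and the properties of the induced matrix $S$ from Lemma \ref{l:matrix}. First I would observe that since $\pi$ is a $*$-representation, each $p_j := \pi(x_j)$ is an orthogonal projection on $\C^D$ (where $D=\dim\pi$), the relation $\sum_j p_j = dI$ holds, and $p_jp_kp_j = s_{jk}p_j$ for all $j,k$. Taking traces in $\sum_j p_j = dI$ gives $\sum_j \tr(p_j) = dD$; the goal for the second claim is to show each individual trace equals $D/d$, i.e. all the projections have the same rank. For this I would exploit the connectedness from Lemma \ref{l:matrix}: for any $j\neq k$ with $s_{jk}\neq 0$, the relation $p_jp_kp_j = s_{jk}p_j$ (and its symmetric counterpart $p_kp_jp_k = s_{jk}p_k$) shows that $p_jp_k$, viewed as a map $\ran p_k \to \ran p_j$, has a left inverse up to the nonzero scalar $s_{jk}$ (namely $\tfrac{1}{s_{jk}}p_kp_j$ composed the other way), forcing $\rk p_j \le \rk p_k$ and, symmetrically, $\rk p_k \le \rk p_j$. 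Hence $\rk p_j = \rk p_k$ whenever $j,k$ are adjacent in the underlying graph of $S$; since that graph is connected, all ranks are equal, say to $r$. Then $d^2 r = dD$, so $D = dr$, giving both divisibility by $d$ and $\tr(p_j) = r = D/d$.

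For linear independence of $\pi(x_1),\dots,\pi(x_{d^2})$, the natural approach is to use the fact that $S$ is the Gram matrix realizing a genuine linear independence in $\mtxc{d}$, and transport this via traces. Suppose $\sum_j c_j p_j = 0$ for scalars $c_j$. Multiplying on both sides by $p_k$ and taking the trace gives $\sum_j c_j \tr(p_k p_j p_k) = 0$, and since $\tr(p_kp_jp_k) = \tr(p_jp_k^2) = \tr(p_jp_k) = s_{jk}\,\tr(p_k)/\,$—wait, more carefully: $p_jp_kp_j = s_{jk}p_j$ implies $\tr(p_jp_kp_j) = s_{jk}\tr(p_j) = s_{jk}r$, and $\tr(p_jp_kp_j) = \tr(p_kp_j)$, so $\tr(p_jp_k) = s_{jk}r$ for $j\neq k$ while $\tr(p_j^2) = \tr(p_j) = r = s_{jj}r$. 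Thus multiplying $\sum_j c_j p_j = 0$ by $p_k$ and tracing yields $r\sum_j s_{kj} c_j = 0$ for every $k$, i.e. $Sc = 0$; since $S$ is positive definite (Lemma \ref{l:matrix}) and $r\neq 0$, we get $c=0$. This proves linear independence.

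\textbf{Main obstacle.} The step I expect to require the most care is the rank-equality argument — pinning down precisely why $p_jp_kp_j = s_{jk}p_j$ with $s_{jk}\neq 0$ forces $\rk p_j \le \rk p_k$. The clean way is: if $v\in\ran p_j$ then $p_kp_j v = p_k v$ lies in $\ran p_k$, and $p_j(p_k v) = p_jp_kp_j v = s_{jk}v$, so the composition $\ran p_j \xrightarrow{p_k} \ran p_k \xrightarrow{p_j} \ran p_j$ is $s_{jk}\cdot\id$, which is invertible; hence the first map $p_k|_{\ran p_j}$ is injective, giving $\rk p_j = \dim\ran p_j \le \dim\ran p_k = \rk p_k$. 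Swapping $j$ and $k$ gives the reverse inequality. Everything else — the trace computations and invoking positive-definiteness and connectedness of $S$ — is routine given Lemma \ref{l:matrix}, so the write-up is short once this lemma-level observation is stated carefully.
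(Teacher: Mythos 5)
Your proof is correct and follows essentially the same route as the paper: equal traces/ranks of the $\pi(x_j)$ via the relation $p_jp_kp_j=s_{jk}p_j$ combined with connectedness of the graph of $S$ (Lemma \ref{l:matrix}), the value $\tr p_j=\dim\pi/d$ from $\sum_j p_j=dI$, and linear independence from positive definiteness of $S$ (your multiply-by-$p_k$-and-trace step is exactly the paper's Gram-matrix argument). The only cosmetic difference is that the paper obtains equal traces in one line from $s_{jk}\tr(p_j)=\tr(p_jp_k)=s_{jk}\tr(p_k)$, whereas you argue equal ranks via injectivity of $p_k|_{\ran p_j}$; so the step you flagged as the main obstacle can be bypassed, but your version is also valid.
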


\begin{proof}
Denote $X_j=\pi(x_j)$. For all $j\neq k$ we have $s_{jk}\tr(X_j)=\tr(X_jX_k)=s_{jk}\tr(X_k)$, and thus $\tr(X_j)=\tr(X_k)$ whenever $s_{jk}\neq0$. By Lemma \ref{l:matrix} it follows that $\tr(X_1)=\cdots=\tr(X_{d^2})$. Then for all $j$,
$$d^2\tr(X_j)=\sum_{k=1}^{d^2}\tr(X_k)=d\tr I$$
and so $d\tr (X_j)=\tr I=\dim\pi$.
The Gram matrix of $\pi(x_1),\dots,\pi(x_{d^2})$ with respect to the Frobenius inner product on $\mtxc{\dim\pi}$ equals 
$(\tr(X_jX_k))_{j,k}=(\frac{\dim\pi}{d}s_{jk})_{j,k}=\frac{\dim\pi}{d}S$, which is invertible by Lemma \ref{l:matrix}. Therefore $\pi(x_1),\dots,\pi(x_{d^2})$ are linearly independent.
\end{proof}

Representations of $\cA_S$ of minimal dimension mimic the properties of BIC-POVMs in the following sense.

\begin{prop}\label{p:lowdim}
BIC-POVMs that induce $S$ correspond to the $d$-dimensional representations of $\cA_S$.
\end{prop}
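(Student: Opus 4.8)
The plan is to establish a bijective correspondence between BIC-POVMs on $\C^d$ inducing $S$ and (unitary equivalence classes of) $d$-dimensional representations of $\cA_S$. One direction is essentially immediate: if $(\frac1d P_j)_j$ is a BIC-POVM inducing $S$, then the rank-one projections $P_j \in \mtxc d$ satisfy $P_j = P_j^* = P_j^2$, $\sum_j P_j = dI$, and $P_j P_k P_j = \tr(P_k P_j) P_j = s_{jk} P_j$ (using that $P_k$ is rank one, so $P_j P_k P_j = \braket{e_j}{e_k}\overline{\braket{e_j}{e_k}} P_j$ when $P_j = \ketbraq{e_j}$, and this scalar is exactly $s_{jk}$). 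Hence $x_j \mapsto P_j$ extends to a $d$-dimensional representation of $\cA_S$ by the universal property.

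For the converse, suppose $\pi : \cA_S \to \mtxc d$ is a $d$-dimensional representation. By Proposition \ref{p:rk} we have $\tr \pi(x_j) = \frac{d}{d} = 1$ for all $j$. Since $\pi(x_j)$ is a projection (as $x_j = x_j^* = x_j^2$) with trace $1$, it is a rank-one projection, say $\pi(x_j) = P_j$. The relation $\sum_j x_j = d$ gives $\sum_j P_j = dI$, so $(\frac1d P_j)_j$ is a POVM; Proposition \ref{p:rk} also tells us the $P_j$ are linearly independent, hence a basis of $\mtxc d$ (there are $d^2$ of them). Therefore $(\frac1d P_j)_j$ is a BIC-POVM, and it induces $S$ because $\tr(P_j P_k) = \tr(\pi(x_j)\pi(x_k))$, and from $x_j x_k x_j = s_{jk} x_j$ we get $P_j P_k P_j = s_{jk} P_j$, so taking traces $\tr(P_k P_j) \cdot 1 = \tr(P_j P_k P_j) = s_{jk} \tr(P_j) = s_{jk}$; more directly one reads off $\tr(P_jP_k) = s_{jk}$ from the Gram matrix computation $(\tr(P_jP_k))_{j,k} = \frac{d}{d} S = S$ in the proof of Proposition \ref{p:rk}. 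The two constructions are manifestly inverse to each other up to the natural equivalences (unitary conjugation of the BIC-POVM corresponds to unitary equivalence of representations).

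I do not anticipate a serious obstacle here; the statement is really a bookkeeping consequence of Proposition \ref{p:rk}, whose content (projections of the right trace, linear independence) is exactly what is needed. The one point requiring a little care is the precise meaning of "correspond" — one should state it as a bijection between BIC-POVMs inducing $S$ (up to unitary equivalence) and $d$-dimensional representations of $\cA_S$ (up to unitary equivalence), and check that a representation automatically lands in the minimal dimension, i.e.\ that no $d$-dimensional representation can fail to have all $\pi(x_j)$ of rank one — but this is forced by $\tr\pi(x_j)=1$ together with $\pi(x_j)$ being an idempotent self-adjoint operator. So the proof is short: invoke the universal property for one direction and Proposition \ref{p:rk} for the other.
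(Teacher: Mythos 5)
Your proposal is correct and follows essentially the same route as the paper: the forward direction is the universal property applied to the defining relations of a BIC-POVM, and the converse uses Proposition \ref{p:rk} to get $\rk\pi(x_j)=\tr\pi(x_j)=1$, hence rank-one projections summing to $dI$ with $\tr(\pi(x_j)\pi(x_k))=s_{jk}$. Your extra remarks (spelling out linear independence giving a basis, and reading ``correspond'' up to unitary equivalence) only make explicit what the paper leaves implicit.
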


\begin{proof}
If a BIC-POVM $\frac{1}{d}X_1,\dots,\frac{1}{d}X_{d^2}$ induces $S$, then the tuple $(X_j)_j$ gives rise to a $d$-dimensional representation $\pi$ of $\cA_S$.
On the other hand, if $\pi$ is a $d$-dimensional representation of $\cA_S$, then $\rk\pi(x_j)=\tr\pi(x_j)=1$ by Proposition \ref{p:rk}. Consequently $\tr(\pi(x_j)\pi(x_k))=s_{jk}$ for all $j,k$, and so $\frac1d\pi(x_1),\dots,\frac1d\pi(x_{d^2})$ is a BIC-POVM inducing $S$.
\end{proof}

By Proposition \ref{p:lowdim}, every BIC-POVM inducing $S$ gives rise to an irreducible representation of $\cA_S$. On the other hand, $\cA_S$ might have other irreducible representations (cf. Section \ref{a:icpovm}).
There is also an alternative definition of $\cA_S$.

\begin{lem}\label{l:alt}
The C*-algebra $\cA_S$ equals
$${\rm C}^*\bigg\langle x_1,\dots,x_{d^2}\colon 
x_j=x_j^*=x_j^2\ \forall j,\ \sum_{j=1}^{d^2}x_j=d,\ 
(1-s_{jk})(x_j-x_k)=(x_j-x_k)^3
\ \forall j,k\bigg\rangle.$$
\end{lem}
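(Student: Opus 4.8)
The plan is to show that the two sets of relations generate the same two-sided $*$-ideal in the free unital $*$-algebra on self-adjoint generators $x_1,\dots,x_{d^2}$ subject to $x_j=x_j^2$ and $\sum_j x_j = d$; equivalently, that each family of relations is a consequence of the other, given the common relations $x_j=x_j^*=x_j^2$ and $\sum_j x_j=d$. Write $\cA$ for the algebra defined by the alternative presentation. There is an obvious surjection $\cA_S\to\cA$ provided one checks that the defining relations of $\cA$ hold in $\cA_S$, and a surjection $\cA\to\cA_S$ provided the relations $x_jx_kx_j=s_{jk}x_j$ hold in $\cA$; since both are identity on generators, their composites are the identity and the algebras coincide.

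\textbf{From $\cA_S$ to $\cA$.} I would first verify the easy direction: assuming $x_j=x_j^*=x_j^2$ and $x_jx_kx_j=s_{jk}x_j$ (and symmetrically $x_kx_jx_k=s_{jk}x_k$, which follows since $s_{jk}=s_{kj}$), compute $(x_j-x_k)^3$ directly. Expanding, $(x_j-x_k)^2 = x_j + x_k - x_jx_k - x_kx_j$ using idempotency, and then $(x_j-x_k)^3 = (x_j-x_k)^2(x_j-x_k)$. Multiplying out and repeatedly applying $x_j^2=x_j$, $x_jx_kx_j=s_{jk}x_j$, $x_kx_jx_k=s_{jk}x_k$ collapses all length-three words: e.g. $x_jx_kx_j = s_{jk}x_j$, $x_kx_jx_j=x_kx_j$, etc. The cross terms $x_jx_k$ and $x_kx_j$ that are not of "sandwich" form should cancel in the antisymmetrized combination, leaving $(x_j-x_k)^3 = (1-s_{jk})(x_j-x_k)$. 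This is a routine but slightly lengthy word computation; I would organize it by grouping terms by their "shape". Note the relation $\sum_j x_j = d$ is not even needed here.

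\textbf{From $\cA$ to $\cA_S$.} This is the substantive direction and the main obstacle. Here I only have $x_j=x_j^2$, $\sum_j x_j = d$, and $(x_j-x_k)^3=(1-s_{jk})(x_j-x_k)$, and I must recover $x_jx_kx_j=s_{jk}x_j$. The idea is: for fixed $j\neq k$, set $e=x_j$, $f=x_k$, both projections, and let $h=e-f$, a self-adjoint element satisfying $h^3 = (1-s_{jk})h$. Since $1-s_{jk}\in[0,1)$ by Lemma~\ref{l:matrix} (and in a $*$-algebra/$C^*$-algebra we may pass to $C^*$-completions, where $h$ has real spectrum contained in the roots of $\lambda^3=(1-s_{jk})\lambda$, i.e. $\{0,\pm\gamma\}$ with $\gamma=\sqrt{1-s_{jk}}$), the element $h$ has spectrum in $\{0,\pm\gamma\}$. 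Now use the classical fact that two projections $e,f$ are, up to unitary equivalence, in "generic position" --- more precisely, the $C^*$-algebra generated by two projections decomposes via the Halmos two-projections theorem into a direct sum/integral of parts where $e,f$ are simultaneously $1\times1$ (four scalar cases $e,f\in\{0,1\}$) or $2\times2$ of the form $e=\begin{pmatrix}1&0\\0&0\end{pmatrix}$, $f=\begin{pmatrix}c^2 & cs\\ cs & s^2\end{pmatrix}$ for an angle parameter. In each block one checks directly that $h^3=(1-s_{jk})h$ forces $c^2 = s_{jk}$ in the $2\times2$ blocks (and in the $1\times1$ blocks $ef e = efe$ trivially equals $s_{jk}e$ only when the scalar overlap is right, but the $1\times1$ cases with $e=1,f=1$ or $e=1,f=0$ give $efe\in\{1,0\}$, which must be reconciled --- this is where the global constraint $\sum_j x_j=d$, forcing $\tr x_j$ to be equal across $j$ via the argument in Proposition~\ref{p:rk}, rules out the degenerate blocks or pins the overlap). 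Then $efe = x_jx_kx_j$ computed blockwise equals $c^2 e = s_{jk}x_j$, as desired. I would present this via the two-projection structure theorem rather than ad hoc manipulation.

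\textbf{Alternative for the hard direction.} If invoking Halmos feels heavy, an alternative is purely algebraic: from $h^3=(1-s_{jk})h$ one gets that $p:=\frac{1}{1-s_{jk}}h^2$ is an idempotent (check $p^2 = \frac{1}{(1-s_{jk})^2}h^4 = \frac{1}{(1-s_{jk})^2}(1-s_{jk})h^2 = p$), it is self-adjoint, and $hp=ph=h$. Then $x_jx_kx_j = e f e$; write $f = e - h$, so $efe = e(e-h)e = e - ehe$. One needs $ehe = (1-s_{jk})e$, equivalently $e h^2 \cdot(\text{something})$... concretely $ehe = e(e-f)e = e - efe$, so the claim $efe = s_{jk}e$ is equivalent to $ehe = (1-s_{jk})e$. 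Now $h^2 = ef + fe - e - f + \dots$; more usefully $h^2 = (e-f)^2$, and $e h^2 e = e(e-f)^2 e$. Expanding $(e-f)^2 = e + f - ef - fe$ (using $e^2=e$, $f^2=f$), so $eh^2e = e(e+f-ef-fe)e = e + efe - efe - efe = e - efe$ (since $e\cdot e\cdot e = e$, $e f e = efe$, $e\cdot ef\cdot e = efe$, $e\cdot fe\cdot e = efe$). Hence $eh^2e = e - efe = ehe$. On the other hand $eh^2e = e\cdot (1-s_{jk})p \cdot$... no: $h^2 = (1-s_{jk})p$ is false; rather $h^3=(1-s_{jk})h$ gives $h\cdot h^2 = (1-s_{jk})h$, so $h^2$ acts as $(1-s_{jk})$ on the range of $h$ but we only get $h^2 h = (1-s_{jk})h$, i.e. $(h^2 - (1-s_{jk}))h = 0$. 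Multiplying $eh^2e = ehe$ on suitable sides and using $(h^2-(1-s_{jk}))h=0$ together with $h = e-f = $ (has the same range as relevant), one deduces $ehe = (1-s_{jk})e$. I expect one of these two routes closes cleanly; the $C^*$-passage to bounded operators (so that $h^3=(1-s_{jk})h$ genuinely constrains the spectrum) is the conceptual crux, and I would make sure the argument only uses properties available in a $C^*$-algebra, which is fine since $\cA$ and $\cA_S$ are $C^*$-algebras by definition.

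Once both inclusions of ideals are established, the two universal $C^*$-algebras are canonically isomorphic via the map fixing the generators, completing the proof.
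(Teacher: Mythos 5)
Your easy direction is correct and is essentially the paper's (for projections, $(x_j-x_k)^3=(x_j-x_k)-(x_jx_kx_j-x_kx_jx_k)$, so the cubic relation follows from the sandwich relations). The genuine gap is in the hard direction, and it is exactly the point you flag but do not resolve: the pairwise relation $(1-s_{jk})(x_j-x_k)=(x_j-x_k)^3$ says nothing on the part of the space where $x_j$ and $x_k$ agree. In the two-projection (Halmos) decomposition the block $\ran x_j\cap\ran x_k$ has $x_j-x_k=0$, so the cubic relation holds there vacuously, while $x_jx_kx_j=x_j\neq s_{jk}x_j$ there (since $s_{jk}<1$ for $j\neq k$). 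Concretely, two equal nonzero projections satisfy every hypothesis you use in a single pair but violate the conclusion; hence no argument confined to the pair $(j,k)$ can work. Your Route 2 uses only the pair and therefore cannot close ("one deduces $ehe=(1-s_{jk})e$" is false in this generality), and in Route 1 the exclusion of the intersection block is precisely what is missing: appealing to Proposition \ref{p:rk} is circular, because its proof already uses $x_jx_kx_j=s_{jk}x_j$ (and a finite-dimensional trace, which a representation of the alternative presentation need not admit).

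The missing idea is a global positivity-plus-summation argument, which is how the paper proceeds. Given projectivity, the cubic relation is equivalent to the symmetric identity $x_jx_kx_j-s_{jk}x_j=x_kx_jx_k-s_{jk}x_k$; using it one computes $(x_jx_kx_j-s_{jk}x_j)^2=(1-s_{jk})(x_jx_kx_j-s_{jk}x_j)$, so each element $x_jx_kx_j-s_{jk}x_j$ is a hermitian square up to the positive factor $1-s_{jk}$, hence positive in the C*-algebra. Then, for fixed $j$, $\sum_k\left(x_jx_kx_j-s_{jk}x_j\right)=x_j\left(\sum_k x_k\right)x_j-\left(\sum_k s_{jk}\right)x_j=d\,x_j-d\,x_j=0$, using the completeness relation and the row sums $\sum_k s_{jk}=d$ from Lemma \ref{l:matrix}; a vanishing sum of positive elements forces each summand to vanish, giving $x_jx_kx_j=s_{jk}x_j$. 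This is where $\sum_j x_j=d$ enters, and it is exactly what kills the degenerate blocks your per-pair analysis cannot exclude; any repair of your approach would need an equivalent global step.
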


\begin{proof}
Let $\widetilde{\cA}_S$ denote the new C*-algebra from the statement of Lemma \ref{l:alt}.
Since
$$(x_j-x_k)^3=(x_j-x_k)-(x_jx_kx_j-x_kx_jx_k)$$
for projections $x_j,x_k$, it follows that the relations $(1-s_{jk})(x_j-x_k)=(x_j-x_k)^3$ in $\widetilde{\cA}_S$ can be replaced with
\begin{equation}\label{e:expanded}
x_jx_kx_j-s_{jk}x_j=x_kx_jx_k-s_{jk}x_k.
\end{equation}
Thus $\cA_S$ is clearly a quotient of $\widetilde{\cA}_S$. It now suffices to see that $x_jx_kx_j-s_{jk}x_j=0$ in $\widetilde{\cA}_S$.
Firstly, observe that $x_jx_kx_j-s_{jk}x_j$ is positive semidefinite 
in $\widetilde{\cA}_S$ for $j\neq k$ since
$$x_jx_kx_j-s_{jk}x_j=\left(
\tfrac{1}{\sqrt{1-s_{jk}}}x_jx_kx_j-\tfrac{s_{jk}}{\sqrt{1-s_{jk}}}x_j
\right)^2$$
holds by \eqref{e:expanded}.
Next, for every $j$ we have
$$
\sum_{k}\left(x_jx_kx_j-s_{jk}x_j\right)
=x_j\left(\sum_kx_k\right)x_j-\left(\sum_{k}s_{jk}\right)x_j
=x_j\cdot d\cdot x_j-dx_j=0.
$$
Therefore $x_jx_kx_j-s_{jk}x_j=0$ for all $j,k$ by semidefiniteness, as desired.
\end{proof}

\subsection{Local support of a mixed bipartite state}

In this subsection we recall the definition of the local support of a mixed bipartite state, and highlight some of its features.

Given a mixed bipartite state $\rho\in\mtxc{d_A}\otimes\mtxc{d_B}$, its \emph{local support} on Alice's side $\supp_A\rho\subseteq\C^{d_A}$ is the range of $\tr_B\rho\in\mtxc{d_A}$; here, $\tr_B$ denotes the partial trace over $\C^{d_B}$.
The local support of $\rho$ on Bob's side 
$\supp_B\rho\subseteq\C^{d_B}$ is defined analogously.
Given an operator $X\colon\C^{d_A}\to\C^{d_A}$, its \emph{compression} onto the local support of $\rho$ is the operator $\hat X\colon \supp_A\rho\to \supp_A\rho$ given by $\hat X=U^*XU$ where $U\colon \supp_A\rho\to\C^{d_A}$ is the inclusion
(that is, $U^*U$ is the identity on $\supp_A\rho$, and $UU^*$ is the projection acting on $\C^{d_A}$ whose range is $\supp_A\rho$).
Analogously we define compressions onto $\supp_B\rho$ for operators on $\C^{d_B}$.
The following lemma might be folklore (especially (i)), but we record it for the sake of completeness.

\begin{lem}\label{l:supp}
Let $\rho\in\mtxc{d_A}\otimes\mtxc{d_B}$ be a mixed bipartite state.
\begin{enumerate}[(i)]
\item If $U\colon \supp_A\rho\to\C^{d_A}$ is the inclusion then $(UU^*\otimes I)\rho=\rho$.
\item If $Y\in\mtxc{d_B}$ then $\ran \tr_B\big((I\otimes Y)\rho\big)\subseteq \supp_A \rho$.
\end{enumerate}
\end{lem}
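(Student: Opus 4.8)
The plan is to treat the two parts separately, both by reducing to a spectral/convex decomposition of $\rho$. Write $\rho=\sum_i \lambda_i \dyad{\psi_i}$ with $\lambda_i>0$ and $\ket{\psi_i}\in\C^{d_A}\otimes\C^{d_B}$, and for each $i$ take a Schmidt decomposition $\ket{\psi_i}=\sum_s c_{i,s}\,\ket{u_{i,s}}\otimes\ket{v_{i,s}}$. Then $\tr_B\rho=\sum_i\lambda_i\tr_B\dyad{\psi_i}=\sum_{i,s}\lambda_i|c_{i,s}|^2\dyad{u_{i,s}}$, so $\supp_A\rho$ is precisely the span of all the left Schmidt vectors $\ket{u_{i,s}}$ occurring with nonzero coefficient, and in particular each $\ket{\psi_i}$ lies in $(\supp_A\rho\otimes\C^{d_B})$.

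For (i): since $UU^*$ is the orthogonal projection onto $\supp_A\rho$, and each $\ket{u_{i,s}}$ lies in its range, we get $(UU^*\otimes I)\ket{\psi_i}=\sum_s c_{i,s}(UU^*\ket{u_{i,s}})\otimes\ket{v_{i,s}}=\ket{\psi_i}$ for every $i$. Summing $\lambda_i(UU^*\otimes I)\dyad{\psi_i}=\lambda_i\dyad{\psi_i}$ over $i$ gives $(UU^*\otimes I)\rho=\rho$, as required. (One could alternatively argue directly: $(I-UU^*)\tr_B\rho=0$ forces $\tr_B\big((I-UU^*)\rho(I-UU^*)^*\big)=0$, hence $(I-UU^*\otimes I)\rho=0$ since that operator is positive semidefinite with zero partial trace.)

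For (ii): using part (i), $\tr_B\big((I\otimes Y)\rho\big)=\tr_B\big((I\otimes Y)(UU^*\otimes I)\rho\big)=\tr_B\big((UU^*\otimes I)(I\otimes Y)\rho\big)=UU^*\,\tr_B\big((I\otimes Y)\rho\big)$, where the last equality uses that $UU^*$ acts only on the $A$ factor and so commutes through the partial trace over $B$. Hence the range of $\tr_B\big((I\otimes Y)\rho\big)$ is contained in the range of $UU^*$, which is $\supp_A\rho$. I do not anticipate a genuine obstacle here; the only point requiring a little care is the interplay of $UU^*$ with $\tr_B$ in (ii) (it is immediate once one writes everything on the tensor factors) and, in (i), justifying that a positive semidefinite operator whose partial trace vanishes must itself vanish — which follows because $\tr_B$ of a positive operator is zero only if the operator is zero, as its full trace is then zero.
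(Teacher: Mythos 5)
Your proof is correct, and for part (i) it is essentially the paper's argument: the paper likewise reduces to pure states via a Schmidt (local unitary) decomposition, notes that $\supp_A\rho$ of a mixture is the span of the local supports of the pure components ``by semidefiniteness'', and concludes $(UU^*\otimes I)\rho=\rho$. Where you genuinely differ is part (ii). The paper proves (ii) by an explicit computation in the Schmidt basis of each pure component, $\tr_B\big((I\otimes Y)\dyad{\psi_i}\big)=\sum_{s,t}\lambda_s\lambda_t\bra{t}Y\ket{s}\ketbra{s}{t}$, whose range visibly lies in the span of the left Schmidt vectors, and then sums over the mixture. You instead deduce (ii) abstractly from (i), using that $UU^*\otimes I$ commutes with $I\otimes Y$ and that $\tr_B\big((X\otimes I)Z\big)=X\,\tr_B Z$, so $\tr_B\big((I\otimes Y)\rho\big)=UU^*\,\tr_B\big((I\otimes Y)\rho\big)$ has range inside $\ran UU^*=\supp_A\rho$. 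This is a slightly slicker route: it needs no further decomposition once (i) is in hand, and combined with your parenthetical alternative for (i) (the partial trace of $\big((I-UU^*)\otimes I\big)\rho\big((I-UU^*)\otimes I\big)$ vanishes, hence that positive semidefinite operator vanishes) the whole lemma could be done without any Schmidt decomposition. Only a small point of care there: the positivity argument applies to the sandwiched operator, and you then need the standard step that $P\rho P=0$ with $\rho\succeq 0$ and $P$ a projection forces $P\rho=0$ (e.g.\ via $\sqrt{\rho}P=0$); as written, ``$(I-UU^*\otimes I)\rho=0$ since that operator is positive semidefinite'' conflates the two, but the fix is immediate.
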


\begin{proof}
First we check (i) and (ii) for a pure state $\rho$. After a local unitary basis change we can assume that $\rho=\sum_{i,j=0}^{r-1} \lambda_i\lambda_j \ketbra{ii}{jj}$ for $\lambda_i,\lambda_j>0$. 
Then $\tr_B\rho=\sum_{i=0}^{r-1} \lambda_i^2 \dyad{i}$, so $UU^*=\sum_{i=0}^{r-1}\dyad{i}$ and (i) holds.
The range of $\tr_B((I\otimes Y)\rho)=\sum_{i,j=0}^{r-1} \lambda_i\lambda_j \bra{j}Y\ket{i}\ketbra{i}{j}$ is contained in the span of $\{\ket{0},\dots,\ket{r-1}\}$, so (ii) holds.

If $\rho=\sum_k\gamma_k\rho_k$ where $\gamma_k>0$ and $\rho_k$ are pure states, then $\supp_A\rho=\sum_k\supp_A\rho_k$ 
by semidefiniteness. Thus $(UU^*\otimes I)\rho_k=\rho_k$ for all $k$. Also, the range of $\tr_B((I\otimes Y)\rho)=\sum_k \gamma_k\tr_B((I\otimes Y)\rho_k)$ is contained in $\supp_A\rho$ by the previous paragraph. Therefore (i) and (ii) hold for $\rho$.
\end{proof}

\subsection{Block-wise maximally entangled states}\label{sec:max_ent}

In this subsection we see how a synchronicity condition on a mixed bipartite state $\rho$ implies that $\rho$ admits a block diagonal decomposition into maximally entangled states. In particular, Proposition \ref{p:maxent} below is later applied to the state of an optimal strategy for the scenario in Section \ref{sec:bell}.
Within the following proposition 
and the rest of the paper,
we often tacitly reshuffle the order of tensor factors for the sake of notation,

\begin{prop}\label{p:maxent}
Let $E_1,\dots,E_n\in\mtxc{d_A}$ and $F_1,\dots,F_n\in\mtxc{d_B}$ be hermitian matrices, and $\rho\in\mtxc{d_A}\otimes\mtxc{d_B}$ a mixed bipartite state, such that
\begin{equation}\label{e:sync}
(E_j\otimes I)\rho=(I\otimes F_j)\rho \qquad \text{for }j\in[n].
\end{equation}
Then there exist $L\in\N$, $d_\alpha,e_\alpha,f_\alpha\in\N$ for $\alpha \in [L]$, and isometries $U:\bigoplus_\alpha\C^{e_\alpha}\otimes\C^{d_\alpha}\to\C^{d_A}$, $V:\bigoplus_\alpha\C^{f_\alpha}\otimes\C^{d_\alpha}\to\C^{d_B}$ such that:
\begin{enumerate}[(i)]
\item $\ran U=\supp_A\rho$ and $\ran V=\supp_B\rho$;
\item $U^*E_jU\in \bigoplus_\alpha I_{e_\alpha}\otimes\mtxc{d_\alpha}$ and $V^*F_jV\in \bigoplus_\alpha I_{f_\alpha}\otimes\mtxc{d_\alpha}$ for $j\in [n]$;
\item $(U\otimes V)^*\rho(U\otimes V)$ is a mixture of pure states of the form
\begin{align*}
\bigoplus_\alpha \ket{\chi_\alpha}\otimes 
\ket{\varphi_{d_\alpha}}
&\in
\bigoplus_\alpha \left(\C^{e_\alpha}\otimes \C^{f_\alpha}\right)\otimes 
\left(\C^{d_\alpha}\otimes \C^{d_\alpha}\right) \\
&\subset
\left(\bigoplus_\alpha\C^{e_\alpha}\otimes\C^{r_\alpha d}\right)\otimes
\left(\bigoplus_\alpha\C^{f_\alpha}\otimes\C^{r_\alpha d}\right);
\end{align*}
\item if $\rho$ is pure then $U^*E_jU=\left(V^*F_jV\right)^\ti$ for $j\in[n]$.
\end{enumerate}
\end{prop}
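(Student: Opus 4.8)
\textbf{Proof proposal for Proposition \ref{p:maxent}.}

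The plan is to decode the synchronicity condition \eqref{e:sync} through the Schmidt structure of $\rho$ and the algebra it generates. First I would reduce to the case where $\rho$ is pure: if $\rho=\sum_k\gamma_k\ketbraq{\psi_k}$ is its spectral decomposition, then $(E_j\otimes I)\rho=(I\otimes F_j)\rho$ does \emph{not} immediately pass to the individual $\ket{\psi_k}$, so instead I would work with a purification $\ket{\Psi}\in\C^{d_A}\otimes\C^{d_B}\otimes\C^{d_R}$ of $\rho$ and first show that \eqref{e:sync} lifts to $(E_j\otimes I\otimes I)\ket{\Psi}=(I\otimes F_j\otimes I)\ket{\Psi}$; this uses that $\rho$ and $\Psi$ have the same $A$-marginal so the relation on $\rho$ (an equation of operators vanishing on $\supp_A\rho\otimes\C^{d_B}$, after taking adjoints) forces the corresponding vector to lie in the kernel. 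Alternatively, and probably more cleanly, I would invoke the well-known fact that a bipartite mixed state satisfying $(E_j\otimes I)\rho=(I\otimes F_j)\rho$ is a mixture of \emph{pure} states with the same property, each supported on $\supp_A\rho\otimes\supp_B\rho$ — this is exactly the kind of ``folklore'' statement Lemma \ref{l:supp} is setting up, and it follows by writing $\rho$ via its eigenbasis inside $\supp_A\rho\otimes\supp_B\rho$ and using that $E_j\mapsto \hat E_j$, $F_j\mapsto \hat F_j$ on the supports still satisfy the relation by Lemma \ref{l:supp}(i).

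For a single pure state $\ket{\psi}\in\C^{d_A}\otimes\C^{d_B}$ with full local supports, write its Schmidt decomposition $\ket{\psi}=(\mathbb{1}\otimes T)\ket{\varphi_d}$ up to local unitaries, or more invariantly identify $\ket{\psi}$ with an operator and use the standard ``ricochet'' $(X\otimes I)\ket{\psi}=(I\otimes \widetilde X)\ket{\psi}$ where $\widetilde X$ is the conjugate-transpose transported through the Schmidt coefficients. Then \eqref{e:sync} becomes $E_j\mapsto \widetilde E_j = F_j^{\ti}$ (after a local unitary absorbing the Schmidt coefficients, using that they are all nonzero on the support), giving item (iv). Next I would consider the $*$-algebra $\cB=\Alg(E_1,\dots,E_n)\subseteq\mtxc{d_A}$ (hermitian generators, so it's a $*$-algebra) and its Wedderburn decomposition $\cB\cong\bigoplus_\alpha \mtxc{d_\alpha}$, with the corresponding decomposition of $\C^{d_A}=\bigoplus_\alpha \C^{e_\alpha}\otimes\C^{d_\alpha}$ on which $E_j$ acts as $\bigoplus_\alpha I_{e_\alpha}\otimes E_j^{(\alpha)}$; this yields the isometry $U$ and item (ii) on Alice's side, and the analogous argument with $\Alg(F_1,\dots,F_n)$ gives $V$. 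The content of item (iii) is that, in these adapted bases, $\ket{\psi}$ is block-diagonal across $\alpha$ and blockwise maximally entangled: this should follow because the relation $\widetilde E_j=F_j^{\ti}$ forces the commutant structure of $\cB$ on the $A$-side and of $\Alg(F)$ on the $B$-side to match up, so the operator $T$ implementing the Schmidt decomposition must intertwine the two algebra decompositions, hence be block scalar — i.e.\ proportional to identity on each $\C^{d_\alpha}\otimes\C^{d_\alpha}$ factor — after possibly refining the index set so that $d_\alpha$ on both sides agree. Then the $\alpha$-block of $\ket{\psi}$ is $\ket{\chi_\alpha}\otimes\ket{\varphi_{d_\alpha}}$ for some (uncharacterized) $\ket{\chi_\alpha}\in\C^{e_\alpha}\otimes\C^{f_\alpha}$. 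Finally, re-introducing the mixture: $\rho$ becomes a mixture of such pure states, each with the \emph{same} $U,V$ (since all summands are supported in $\supp_A\rho\otimes\supp_B\rho$ and share the algebra structure, which is determined by the $E_j,F_j$, not by the individual summand) — this gives (i) and (iii).

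The main obstacle I anticipate is item (iii), specifically showing that the Schmidt-coefficient operator $T$ is forced to be \emph{block-scalar} rather than merely block-diagonal: one has to argue that on each isotypic component the $A$-side algebra $\mtxc{d_\alpha}$ acting as $E_j^{(\alpha)}$ and the $B$-side algebra acting as $F_j^{(\alpha)}$ are related by transpose-on-a-factor in a way that pins $T$ down. The clean way is representation-theoretic: the relation $(E_j\otimes I)\ket{\psi}=(I\otimes F_j)\ket{\psi}$ says $\ket{\psi}$, viewed as an intertwiner, makes the $A$-representation of the abstract algebra generated by the relations equivalent to the (conjugate of the) $B$-representation, so by Schur's lemma it is, blockwise, a scalar times the canonical maximally entangled intertwiner — and the scalars can be pushed into $\ket{\chi_\alpha}$. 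A secondary subtlety is bookkeeping the relabeling so that $d_\alpha$ matches on the two sides and handling blocks that appear on one side but not the other (these must be absent since the supports are full and the intertwiner is bijective onto its block). I would also need Lemma \ref{l:supp} to ensure every object in sight genuinely lives on the local supports so that ``full support'' is not lost when passing between $\rho$ and its pure summands. The embedding $\C^{d_\alpha}\hookrightarrow\C^{r_\alpha d}$ in the displayed line of (iii) is presumably just anticipating the later application where $d_\alpha=r_\alpha d$ by Proposition \ref{p:rk}; in the proof of this proposition I would simply take $r_\alpha$ to be whatever makes $d_\alpha\le r_\alpha d$, or leave that padding to the point of use.
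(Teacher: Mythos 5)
Your proposal follows essentially the same route as the paper: compress everything to the local supports (Lemma \ref{l:supp}), apply the finite-dimensional C*-algebra (Artin--Wedderburn) structure theorem to the algebras generated by the compressed $E_j$ and $F_j$ to obtain $U,V$ and item (ii), matricize the synchronicity condition into an intertwining relation, and use Schur's lemma to force block-scalar form, which gives the maximally entangled blocks in (iii) and, via invertibility in the pure case, item (iv). Two places where the paper's organization is tighter are worth noting. First, your worry that \eqref{e:sync} does not pass to the spectral components of $\rho$ is unfounded: with $M=E_j\otimes I-I\otimes F_j$ Hermitian, $M\rho=0$ gives $M\rho M=0=\sum_k p_k\, M\dyad{\psi_k}M$, a sum of positive semidefinite terms, so $M\ket{\psi_k}=0$ for every eigenvector with $p_k>0$; the purification detour is unnecessary, and this direct descent is what the paper tacitly uses.

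Second, and more substantively, your ``pure first, then re-introduce the mixture'' ordering leaves one step under-specified: an individual component $\ket{\psi_k}$ of the mixture need not have full local supports, so its matricization is not bijective, and your mechanism for excluding irreducible blocks that appear on only one side (bijectivity of the intertwiner) does not apply componentwise. The paper avoids this by fixing the single decomposition coming from the compressions to $\supp_A\rho$ and $\supp_B\rho$, running the Schur argument simultaneously for all spectral components, and excluding unmatched blocks via the joint kernel condition \eqref{e:kernels}, namely $\bigcap_k\ker\mat(\ket{\psi_k})=0$ and likewise for the transposes; these hold exactly because everything was compressed to the supports, so a block-row that vanishes in every $\mat(\ket{\psi_k})$ would contradict the definition of $\supp_A\rho$. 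You flag this issue and have all the ingredients (Lemma \ref{l:supp} plus the common decomposition determined by $E_j,F_j$), but as written the exclusion argument would fail for mixtures and needs to be replaced by this joint-support argument. Similarly, for (iv) the phrase ``local unitary absorbing the Schmidt coefficients'' hides a small argument (the relation forces the compressed $E_j$ to commute with $\mat(\ket{\psi})\mat(\ket{\psi})^*$, after which the polar part does the job); the paper's route—invertibility of the block scalars forces $e_\alpha=f_\alpha$, hence $U^*E_jU=(V^*F_jV)^\ti$ with the same isometries as in (ii)—is cleaner and automatically compatible with the block structure.
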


\begin{proof}
Let $\hat{E}_j$ and $\hat{F}_j$ denote the compressions of $E_j$ onto $\supp_A\rho$ and $F_j$ onto $\supp_B\rho$, respectively. Also, let $\hat\rho$ denote the compression of $\rho$ onto $\supp_A\rho\otimes\supp_B\rho$.
Since $\hat{E}_j,\hat{F}_j$ are hermitian, the unital algebras generated by $\hat{E}_1,\dots,\hat{E}_n$ and $\hat{F}_1,\dots,\hat{F}_n$ are finite-dimensional C*-algebras. Let us identify $\supp_A\rho=\C^{N_A},\supp_B\rho=\C^{N_B}$. By \cite[Theorem I.11.2]{Tak02} there exist unitaries $U\in\mtxc{N_A},V\in\mtxc{N_B}$ such that
\begin{equation}\label{e:semisimple}
\begin{split}
\check{E_j}:=U^*\hat{E}_jU
&=\bigoplus_{\alpha=1}^L X_{j,\alpha}^{\oplus e_\alpha}
\oplus \bigoplus_{\alpha=1}^{M'} {X'}_{j,\alpha}^{\oplus g_\alpha'}
\qquad \text{for }j\in [n],\\
\check{F_j}^\ti:=(V^*\hat{F}_jV)^\ti
&=\bigoplus_{\alpha=1}^L X_{j,\alpha}^{\oplus f_\alpha}
\oplus \bigoplus_{\alpha=1}^{M''} {X''}_{j,\alpha}^{\oplus g_\alpha''}
\qquad \text{for }j\in [n],
\end{split}
\end{equation}
where $(X_{j,\alpha})_j,({X'}_{j,\alpha})_j,({X''}_{j,\alpha})_j$ are pairwise unitarily non-equivalent irreducible tuples, and $X_{j,\alpha}\in \mtxc{d_\alpha}$.

Let $\check{\rho}:=(U\otimes V)^*\hat{\rho}(U\otimes V)=\sum_{k=1}^K p_k\dyad{\psi_k}$ be a spectral decomposition of $\check{\rho}$, where $\ket{\psi_1},\dots,\ket{\psi_K}\in\C^{N_A}\otimes\C^{N_B}$ are orthogonal states, and $p_1,\dots,p_K>0$. 
Let $\mat$ denote the matricization operator transforming tensors into matrices, determined by $\mat(\ket{ab})=\ket{a}\!\bra{b}$. 
Then
\begin{equation}\label{e:kernels}
\bigcap_{k=1}^K\ker \mat(\ket{\psi_k})=0,\qquad
\bigcap_{k=1}^K\ker \mat(\ket{\psi_k})^\ti=0
\end{equation}
since all operators and states have been compressed to $\supp_A\rho$ and $\supp_B\rho$.
By \eqref{e:sync} we have $(\check{E}_j\otimes I)\ket{\psi_k}=(I\otimes \check{F}_j)\ket{\psi_k}$, and therefore
\begin{equation}\label{e:commut_psi}
\check{E}_j\mat(\ket{\psi_k})=\mat(\ket{\psi_k})\check{F}_j^\ti \qquad	\text{for }j\in [n],\ k\in [K].
\end{equation}
Suppose $M'>0$ holds in \eqref{e:semisimple}. 
By \eqref{e:commut_psi} and Schur's lemma \cite[Corollary 6.1.7]{procesi}, the $(L+1)$\textsuperscript{th} block-row of $\mat(\ket{\psi_k})$ is zero for every $k$, which contradicts \eqref{e:kernels}. Therefore $M'=0$, and analogously $M''=0$.
Hence
\begin{equation}\label{e:semisimple2}
\check{E_j}=\bigoplus_{\alpha=1}^L X_{j,\alpha}^{\oplus e_\alpha},
\qquad
\check{F_j}^\ti=\bigoplus_{\alpha=1}^L X_{j,\alpha}^{\oplus f_\alpha},
\end{equation}
and we can view $\mat(\ket{\psi_k})$ as a block matrix with $\sum_\alpha e_\alpha$ block-rows and $\sum_\alpha f_\alpha$ block-columns according to decompositions \eqref{e:semisimple2}. By \eqref{e:commut_psi}, irreducibility of $(X_{j,\alpha})_j$ and another application of Schur's lemma, a block in $\mat(\ket{\psi_k})$ is a nonzero scalar multiple of the identity matrix if the row and column correspond to the same $(X_{j,\alpha})_j$, and zero otherwise. Hence $\mat(\ket{\psi_k})=\bigoplus_\alpha R_{k,\alpha}\otimes I_{d_\alpha}$ for some $R_{k,\alpha}\in\C^{e_\alpha\times f_\alpha}$, and therefore
$$\check\rho=\sum_{k=1}^K p_k 
\dyad{\psi_k},\qquad
\ket{\psi_k}=\bigoplus_{\alpha=1}^L \sqrt{d_\alpha}\mat^{-1}(R_{k,\alpha})\otimes \ket{\varphi_{d_\alpha}}.
$$
Finally, assume that $\rho$ is pure. Then $\check\rho=\dyad{\psi_1}$ and $\mat(\psi_1)=\bigoplus_\alpha R_{1,\alpha}\otimes I_{d_\alpha}$ is invertible by \eqref{e:kernels}.
This is only possible if $R_{1,\alpha}$ is invertible for every $\alpha$. In particular, $R_{1,\alpha}$ has to be a square matrix, and therefore $e_\alpha=f_\alpha$ for all $\alpha$. Hence $\check{E}_j=\check{F}_j^\ti$ by \eqref{e:semisimple2}.
\end{proof}

\begin{rem}
Proposition \ref{p:maxent}(iv) is a special case of \cite[Corollary 3.6]{Mancinska2024}. On the other hand, purity in (iv) is essential. A counterexample with $n=1$ and $d_A=d_B=3$ is given by
$E_1=1\oplus1\oplus0$, $F_1=1\oplus0\oplus0$ and $\rho=\frac12(\dyad{\psi_1}+\dyad{\psi_2})$
where $\psi_1=\frac1{\sqrt{2}}(\ket{00}+\ket{21})$ and
$\psi_2=\frac1{\sqrt{2}}(\ket{10}+\ket{22})$.
\end{rem}

\section{Optimal strategy analysis}\label{sec:entropy}

This section analyzes optimal strategies for the scenario in Section \ref{sec:bell}, and establishes our main result on certifying maximal randomness (Theorem \ref{t:main}).
Throughout the section let $\cH_A$ and $\cH_B$ be finite-dimensional Hilbert spaces, $\rho$ a mixed bipartite state on $\cH_A\otimes \cH_B$, $A_1^{jk},A_2^{jk},A_j^\povm$ for $(j,k)\in[d^2]\times[d^2]$ with $j<k$ positive semidefinite contractions on $\cH_A$ with $A_1^{jk}+A_2^{jk}\preceq I$, and $B_j$ for $j\in[d^2]$ positive semidefinite contractions on $\cH_B$. In other words, $\rho,A_a^{jk},A_j^\povm,B_j$ determine a quantum model strategy compatible with the scenario in Section \ref{sec:bell}.
Furthermore, given a measurement $X$ on Alice's (or Bob's) side, its compression to $\supp_A\rho$ (or $\supp_B\rho$) is denoted $\hat X$.

\subsection{Measurements in an optimal strategy}
We start by extracting properties of measurements $B_j$ and $A^{jk}_a$ in an optimal strategy for the Bell function \eqref{e:bellfun}.

\begin{prop} \label{prop_relations_measurements_nonSIC}
Assume the Bell function \eqref{e:bellfun} attains $d^2$ at the strategy given by $\rho,A_a^{jk},A_j^\povm,B_j$.
Then
\begin{align}
\label{e:AB_sigurd}
&\sqrt{1-s_{jk}} \big((A^{jk}_1 - A^{jk}_2)\otimes I\big)\rho=\big( I\otimes ( B_j- B_k )\big)\rho \quad\forall j<k,\\
\label{e:ortho_sigurd}
&\big(A^\povm_j\otimes (I - B_j)\big)\rho=0\quad \forall j,
\end{align}
$A^{jk}_1-A^{jk}_2$ preserve $\supp_A\rho$ and $B_j$ preserve $\supp_B\rho$, and the following hold for the compressions of measurements to the local support of $\rho$:
\begin{align}
\label{e:Bproj_sigurd}
&\hat B_j^2 = \hat B_j \quad \forall j, \\
\label{e:Bcomplete_sigurd}
&\sum_j \hat B_j = d I, \\
\label{e:Aproj_sigurd}
&(\hat A^{jk}_1)^2 = \hat A^{jk}_1,\ 
(\hat A^{jk}_2)^2 = \hat A^{jk}_2,\ 
\hat A^{jk}_1\hat A^{jk}_2 = 0 \quad \forall j<k, \\
\label{e:BSIC_sigurd}
&\hat B_j \hat B_k \hat B_j = s_{jk}\hat B_j \quad \forall j \neq k.
\end{align}
\end{prop}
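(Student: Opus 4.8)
The plan is to start from the sum-of-squares identity $d^2 I - W_d = \Theta_d$ established in Proposition~\ref{p:bellmax}, where $\Theta_d$ is the explicit sum of positive semidefinite terms in \eqref{e:SOS_new_finish}. If the Bell function attains its maximal value $d^2$ at the strategy $\rho, A_a^{jk}, A_j^\povm, B_j$, then $\tr(\Theta_d \rho) = 0$, and since each summand of $\Theta_d$ is positive semidefinite, each summand must have zero trace against $\rho$. I would then use the standard fact that if $T \succeq 0$ and $\tr(T\rho)=0$ then $T\rho = 0$ (and $T^{1/2}\rho = 0$, hence $T^{1/2}\sqrt\rho = 0$, so $T$ annihilates the range of $\rho$), applied termwise. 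From the first square term this yields $\bigl[\sqrt{1-s_{jk}}(A_1^{jk}-A_2^{jk}) - (B_j - B_k)\bigr]\rho = 0$, which is exactly \eqref{e:AB_sigurd}; from the second square it gives $\bigl(dI - \sum_j B_j\bigr)\rho = 0$; from the two POVM-positivity terms, $\bigl(A_j^\povm \otimes (I-B_j)\bigr)\rho = 0$ (which is \eqref{e:ortho_sigurd}) and $(B_j - B_j^2)\rho = 0$ for all $j$; and from the Lemma~\ref{l:little} term, $\bigl[A_1^{jk} + A_2^{jk} - (A_1^{jk}-A_2^{jk})^2\bigr]\rho = 0$.

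Next I would turn these "annihilates $\rho$" statements into honest operator identities on the local supports. The key point is Lemma~\ref{l:supp}: since each $B_j$ is a positive contraction, $(B_j - B_j^2)\rho = 0$ together with positivity of $B_j - B_j^2$ forces $B_j$ to act as a projection on $\supp_B\rho$; more carefully, $(B_j - B_j^2)\rho = 0$ implies $B_j$ and hence $B_j^2$ preserve $\supp_B\rho$ (using that $\supp_B\rho$ is the range of $\tr_A\rho$ and that $X\rho$ controls $X$ on the support), and on that subspace $\hat B_j^2 = \hat B_j$, giving \eqref{e:Bproj_sigurd}. Similarly $\bigl(dI - \sum_j B_j\bigr)\rho = 0$ compresses to $\sum_j \hat B_j = dI$ on $\supp_B\rho$, i.e.\ \eqref{e:Bcomplete_sigurd}. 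For Alice's side, from \eqref{e:AB_sigurd} we get that $(A_1^{jk}-A_2^{jk})\otimes I$ acts on $\rho$ the same way as $I \otimes (B_j - B_k)$, and since the latter preserves $\supp_B\rho$, a symmetric use of Lemma~\ref{l:supp} shows $A_1^{jk}-A_2^{jk}$ preserves $\supp_A\rho$; combined with the relation $\bigl[A_1^{jk}+A_2^{jk}-(A_1^{jk}-A_2^{jk})^2\bigr]\rho = 0$ and the equality case of Lemma~\ref{l:little} (the hypothesis $A_1^{jk}+A_2^{jk}\preceq I$ holds by assumption), on $\supp_A\rho$ the compressed $\hat A_1^{jk}, \hat A_2^{jk}$ are orthogonal projections, giving \eqref{e:Aproj_sigurd}.

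Finally, for the BIC relation \eqref{e:BSIC_sigurd}: here I would combine \eqref{e:AB_sigurd} and \eqref{e:Aproj_sigurd}. Squaring the operator identity coming from \eqref{e:AB_sigurd} (now valid as a genuine identity on $\supp_A\rho \otimes \supp_B\rho$, or by carefully tracking the $\rho$-annihilation) gives $(1-s_{jk})(\hat A_1^{jk} - \hat A_2^{jk})^2$ on one side matching $(\hat B_j - \hat B_k)^2$ on the other; using $(\hat A_1^{jk}-\hat A_2^{jk})^2 = \hat A_1^{jk} + \hat A_2^{jk}$ (from orthogonality of the projections) and that this in turn, via \eqref{e:AB_sigurd} again applied at the level of $\rho$, relates back to a scalar multiple, one extracts $(\hat B_j - \hat B_k)^2 = (1-s_{jk})(\hat A_1^{jk}+\hat A_2^{jk})$ up to the appropriate support considerations; then expanding $(\hat B_j - \hat B_k)^2 = \hat B_j + \hat B_k - \hat B_j\hat B_k - \hat B_k\hat B_j$ (using \eqref{e:Bproj_sigurd}) and comparing with the cube relation $(\hat B_j - \hat B_k)^3 = (1-s_{jk})(\hat B_j-\hat B_k)$ from Lemma~\ref{l:alt}'s expansion yields $\hat B_j\hat B_k\hat B_j = s_{jk}\hat B_j$. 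I expect the main obstacle to be the bookkeeping in passing from the "$T\rho = 0$" relations to clean compressed-operator identities: one must be careful that $X\rho = Y\rho$ only controls $X$ and $Y$ on the local supports (and agree there), and that products like $\hat B_j \hat B_k$ of compressions behave correctly — this is precisely where Lemma~\ref{l:supp}(i),(ii) are needed, and the alternative presentation of $\cA_S$ in Lemma~\ref{l:alt} is what lets one avoid directly manipulating the non-polynomial $\sqrt{1-s_{jk}}$ coefficients when deriving \eqref{e:BSIC_sigurd}.
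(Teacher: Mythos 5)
Your overall strategy — extract ``annihilates $\rho$'' relations from $\Theta_d\rho=0$ termwise, promote them to compressed-operator identities via Lemma~\ref{l:supp}, then feed the cube relation into Lemma~\ref{l:alt} — is exactly the paper's. However, two steps in your write-up do not actually go through as stated.

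First, your claim that $\bigl(I\otimes (B_j-B_j^2)\bigr)\rho=0$ by itself forces $B_j$ to preserve $\supp_B\rho$ is false. That relation only says $B_j-B_j^2$ vanishes on $\supp_B\rho$, i.e.\ $B_jv=B_j^2v$ for $v\in\supp_B\rho$, but $B_jv$ need not land back in the support. A concrete counterexample: take $\supp_B\rho=\C e_1\subset\C^2$ and $B_j=\dyad{v}$ with $v=\tfrac{1}{\sqrt2}(e_1+e_2)$; then $B_j-B_j^2=0$ so the relation holds trivially, yet $B_je_1=\tfrac12(e_1+e_2)\notin\C e_1$. The paper instead derives support invariance from two other annihilation relations: taking $\tr_A$ of \eqref{e:AB_sigurd} and invoking Lemma~\ref{l:supp}(ii) gives that every difference $B_j-B_k$ preserves $\supp_B\rho$; then $\bigl(dI-\sum_k B_k\bigr)\rho=0$ gives that $\sum_k B_k$ acts as $dI$ on the support (so preserves it); writing $d^2 B_j=\sum_k(B_j-B_k)+\sum_k B_k$ then yields that $B_j$ preserves $\supp_B\rho$. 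Only with this in hand is it legitimate to compress the nonlinear relations and conclude \eqref{e:Bproj_sigurd}, \eqref{e:BSIC_sigurd}.

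Second, your derivation of the cube relation $(1-s_{jk})(\hat B_j-\hat B_k)=(\hat B_j-\hat B_k)^3$ is not well-formed: you write intermediate identities like ``$(\hat B_j-\hat B_k)^2=(1-s_{jk})(\hat A^{jk}_1+\hat A^{jk}_2)$'' that equate operators on $\supp_B\rho$ with operators on $\supp_A\rho$, and \eqref{e:AB_sigurd} is not ``a genuine identity on $\supp_A\rho\otimes\supp_B\rho$'' — it is a relation that holds only when applied to $\rho$. Also, you cannot take the cube relation ``from Lemma~\ref{l:alt}''; that lemma only converts the cube relation into the BIC relation, it does not supply the cube relation itself. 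What the paper actually does is: left-multiply \eqref{e:AB_sigurd} by $\sqrt{1-s_{jk}}(A^{jk}_1-A^{jk}_2)\otimes I$ to get $(1-s_{jk})\bigl((A^{jk}_1-A^{jk}_2)^2\otimes I\bigr)\rho=\bigl(I\otimes(B_j-B_k)^2\bigr)\rho$; left-multiply once more and use $(\hat A^{jk}_1-\hat A^{jk}_2)^3=\hat A^{jk}_1-\hat A^{jk}_2$ (from \eqref{e:Aproj_sigurd}) and the support invariance to obtain $(1-s_{jk})^{3/2}\bigl((A^{jk}_1-A^{jk}_2)\otimes I\bigr)\rho=\bigl(I\otimes(B_j-B_k)^3\bigr)\rho$; substitute \eqref{e:AB_sigurd} on the left and compress to get the cube relation on $\supp_B\rho$. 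That explicit two-step left-multiplication and the reduction back to the $A$-difference of degree one is the missing ingredient in your argument.
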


\begin{proof}
By Proposition \ref{p:bellmax} and its proof,
the operator $\Theta_d$ from \eqref{e:SOS_new_finish} satisfies $\tr(\Theta_d\rho)=0$. Furthermore, $\Theta_d\rho=0$ because $\Theta_d$ and $\rho$ are positive semidefinite. Since $\Theta_d$ is a sum of hermitian squares in \eqref{e:SOS_new_finish}, it follows that \eqref{e:AB_sigurd}, \eqref{e:ortho_sigurd} and
\begin{align}
\label{e:frombell2}\big(d I\otimes I - \sum_j I\otimes B_j\big)\rho&=0,\\
\label{e:frombell3}\big(I\otimes (B_j - B_j^2)\big)\rho&=0, \\
\label{e:frombell4}\left(\left(A^{jk}_1 + A^{jk}_2 - ( A^{jk}_1 - A^{jk}_2 )^2\right)\otimes I\right)\rho&=0
\end{align}
hold for all $j< k$.
Next, notice that applying partial traces and Lemma \ref{l:supp} to \eqref{e:AB_sigurd} yields
\begin{equation}\label{e:range1}
(B_j-B_k)\supp_B\rho\subseteq\supp_B\rho,\quad(A^{jk}_1-A^{jk}_2)\supp_A\rho\subseteq\supp_A\rho \qquad\text{for all }j< k.
\end{equation}
In particular, the differences $B_j-B_k$ preserve $\supp_B\rho$ for all $j,k$. 
Thus the same holds for $\sum_k (B_j-B_k)=d^2B_j-\sum_kB_k$, which acts as $d^2B_j-dI$ on $\supp_B\rho$ by \eqref{e:frombell2}. Therefore
\begin{equation}\label{e:range2}
B_j\supp_B\rho\subseteq\supp_B\rho\qquad\text{for all }j.
\end{equation}
Thus when measurements are replaced by their compressions to the local support of $\rho$, the above equations become
\begin{align}
\label{e:frombell6} d I - \sum_j \hat B_j&=0,\\
\label{e:frombell7} \hat B_j - \hat B_j^2&=0, \\
\label{e:frombell8} \hat A^{jk}_1 + \hat A^{jk}_2 - (\hat A^{jk}_1 - \hat A^{jk}_2 )^2&=0
\end{align}
for all $j< k$ (more precisely, \eqref{e:range1} and \eqref{e:range2} are needed because $B_j$ and $A^{jk}_1-A^{jk}_2$ appear nonlinearly in \eqref{e:frombell3} and \eqref{e:frombell4}, respectively). In particular, \eqref{e:Bproj_sigurd} and \eqref{e:Bcomplete_sigurd} hold, and \eqref{e:Aproj_sigurd} follows by \eqref{e:frombell8} and Lemma \ref{l:little}.
We are left with proving \eqref{e:BSIC_sigurd}.
Left-multiplying \eqref{e:AB_sigurd} by $\sqrt{1-s_{jk}}  (A^{jk}_1 - A^{jk}_2 ) \otimes I $ yields
\begin{equation}\label{e:frombell10}
\begin{split}
&(1-s_{jk}) \left(( A^{jk}_1 -  A^{jk}_2)^2 \otimes I\right)\rho 
=  \sqrt{1-s_{jk}} \left(( A^{jk}_1 -  A^{jk}_2) \otimes ( B_j - B_k)\right)\rho \\
 =\,&\big( I \otimes ( B_j - B_k) \big)
\left( \sqrt{1-s_{jk}} ( A^{jk}_1 -  A^{jk}_2) \otimes I\right) \rho
=  \left(I \otimes ( B_j -  B_k)^2\right)\rho.
\end{split}
\end{equation}
We left-multiply \eqref{e:frombell10} by $\sqrt{1-s_jk}( A^{jk}_1 -  A^{jk}_2) \otimes I$ one more time, and note that the projectivity and orthogonality of $\hat A^{jk}_a$ as in \eqref{e:Aproj_sigurd} imply that $(\hat A^{jk}_1 - \hat A^{jk}_2)^3 = \hat A^{jk}_1 - \hat A^{jk}_2$. By \eqref{e:range1} we therefore have
\begin{equation*}
\left(\sqrt{1-s_{jk}}^3 (A^{jk}_1 - A^{jk}_2)\otimes I\right)\rho= \left(I \otimes ( B_j - B_k)^3\right)\rho,
\end{equation*}
which together with \eqref{e:AB_sigurd} yields
\begin{equation}\label{e:cube0}
\left(I\otimes (1-s_{jk}) ( B_j - B_k)\right)\rho= \left(I\otimes (B_j - B_k)^3\right)\rho
\end{equation}
for $j<k$. By \eqref{e:range2} and symmetry we have
\begin{equation}\label{e:cube}
(1-s_{jk}) (\hat B_j - \hat B_k)=(\hat B_j - \hat B_k)^3
\end{equation}
for all $j\neq k$. Then \eqref{e:frombell6}, \eqref{e:frombell7}, \eqref{e:cube} together with Lemma \ref{l:alt} show that \eqref{e:BSIC_sigurd} holds.
\end{proof}

In particular, Proposition \ref{prop_relations_measurements_nonSIC} shows that the compressions of $B_j$ from an optimal strategy determine a representation of the C*-algebra $\cA_S$ from Section \ref{sec:cstar}.
Next, we construct operators $C_j$ on Alice's side whose compressions also determine a representation of $\cA_S$. For $j\in[d^2]$ denote
\begin{equation*}
C_j := \frac{1}{d^2} \left(dI+ 
\sum_{k \neq j}\sqrt{1-s_{jk}}( A^{jk}_1 -  A^{jk}_2 ) \right),
\end{equation*}
where we write $ A_a^{jk}= A_a^{kj}$ for $k<j$.
By Proposition \ref{prop_relations_measurements_nonSIC} we have
\begin{equation}\label{e:CandB}
C_j\supp_A\rho\subseteq\supp_A\rho,\quad (C_j \otimes I)\rho = (I \otimes B_j)\rho \quad \forall j.
\end{equation}
Then \eqref{e:CandB} and \eqref{e:Bproj_sigurd}, \eqref{e:Bcomplete_sigurd}, \eqref{e:BSIC_sigurd} in Proposition \ref{prop_relations_measurements_nonSIC} imply that $\hat C_j$ are projections that add up to $d I$, 
and $\hat C_j \hat C_k \hat C_j = s_{jk} \hat C_j$, so $\hC_j$ indeed determine a representation of $\cA_S$.
We can use these operators to deduce some partial information on the remaining measurements of Alice, that is, the $A^\povm_j$ operators.
Recall that every finite-dimensional representation of the C*-algebra $\cA_S$ is a direct sum of irreducible ones, whose dimensions are multiples of $d$ (Proposition \ref{p:rk}).

\begin{prop}\label{p:Apovm}
Assume the Bell function \eqref{e:bellfun} attains $d^2$ at the strategy given by $\rho,A_a^{jk},A_j^\povm,B_j$.
Choose a basis of $\supp_A\rho$ such that
\begin{equation}\label{e:decomp}
\hC_j=\bigoplus_{\alpha=1}^L I_{e_\alpha}\otimes \hC_{j,\alpha}
\end{equation}
where $(\hC_{j,\alpha})_j$ determine pairwise non-isomorphic irreducible representations of $\cA_S$ of dimension $r_\alpha d$ for $r_\alpha \in\N$.
With respect to the decomposition \eqref{e:decomp}, for every $\alpha\in[L]$ the $(\alpha,\alpha)$-block of $\hA_j^\povm$ in $\mtxc{e_\alpha}\otimes\mtxc{r_\alpha d}$ equals
$$I_{e_\alpha}\otimes \frac{1}{d}
\hC_{j,\alpha}+W_{j,\alpha},$$
where $\tr_{\C^{r_\alpha d}}(W_{j,\alpha})=0\in \mtxc{e_\alpha}$.
\end{prop}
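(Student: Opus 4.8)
The plan is to first convert the optimality constraint \eqref{e:ortho_sigurd} into a purely algebraic relation tying the compressed POVM $\hat A_j^\povm$ to the projections $\hat C_j$, and then to exploit the invertibility of the Gram matrix $S$ (Lemma \ref{l:matrix}) to pin down the partial traces of the diagonal blocks. I would begin by establishing that $\hat A_j^\povm=\hat C_j\hat A_j^\povm\hat C_j$ for every $j$. Since $A_j^\povm\otimes I$ and $I\otimes B_j$ commute, \eqref{e:ortho_sigurd} can be written as $(A_j^\povm\otimes I)\rho=(A_j^\povm\otimes I)(I\otimes B_j)\rho$, and substituting $(I\otimes B_j)\rho=(C_j\otimes I)\rho$ from \eqref{e:CandB} gives $\big(A_j^\povm(I-C_j)\otimes I\big)\rho=0$. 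Applying $\tr_B$ yields $A_j^\povm(I-C_j)\tr_B\rho=0$, so $A_j^\povm(I-C_j)$ annihilates $\supp_A\rho=\ran(\tr_B\rho)$. As $C_j$ (hence $I-C_j$) preserves $\supp_A\rho$ by \eqref{e:CandB}, compression is multiplicative in the factor $I-C_j$, so compressing turns this into $\hat A_j^\povm(I-\hat C_j)=0$; taking adjoints (all operators here are hermitian) gives $(I-\hat C_j)\hat A_j^\povm=0$ as well, whence $\hat A_j^\povm=\hat C_j\hat A_j^\povm\hat C_j$, recalling that $\hat C_j$ is a projection by Proposition \ref{prop_relations_measurements_nonSIC}.

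Next I would pass to the blocks of the decomposition \eqref{e:decomp}. Since $\hat C_j$ acts as $I_{e_\alpha}\otimes\hat C_{j,\alpha}$ on the $\alpha$-th summand, the relation just obtained forces the $(\alpha,\alpha)$-block $M_{j,\alpha}$ of $\hat A_j^\povm$ to satisfy $M_{j,\alpha}=(I_{e_\alpha}\otimes\hat C_{j,\alpha})\,M_{j,\alpha}\,(I_{e_\alpha}\otimes\hat C_{j,\alpha})$. Setting $W_{j,\alpha}:=M_{j,\alpha}-I_{e_\alpha}\otimes\tfrac1d\hat C_{j,\alpha}$ and using that $\hat C_{j,\alpha}$ is a projection (so $I_{e_\alpha}\otimes\tfrac1d\hat C_{j,\alpha}$ lies in the same corner), one gets $W_{j,\alpha}=(I_{e_\alpha}\otimes\hat C_{j,\alpha})\,W_{j,\alpha}\,(I_{e_\alpha}\otimes\hat C_{j,\alpha})$. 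Summing over $j$ and using that the $\alpha$-block of the compression of $\sum_jA_j^\povm=I$ equals $I_{e_\alpha}\otimes I_{r_\alpha d}$, together with $\sum_j\hat C_{j,\alpha}=dI_{r_\alpha d}$ (the $\hat C_{j,\alpha}$ form a representation of $\cA_S$), one obtains $\sum_jW_{j,\alpha}=0$.

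Finally, for each fixed $k$ I would apply $\tr_{\C^{r_\alpha d}}\!\big((\,\cdot\,)(I_{e_\alpha}\otimes\hat C_{k,\alpha})\big)$ to the identity $\sum_jW_{j,\alpha}=0$. Using $W_{j,\alpha}=(I_{e_\alpha}\otimes\hat C_{j,\alpha})\,W_{j,\alpha}\,(I_{e_\alpha}\otimes\hat C_{j,\alpha})$, the cyclicity of the partial trace in the $\C^{r_\alpha d}$ factor, the representation relation $\hat C_{j,\alpha}\hat C_{k,\alpha}\hat C_{j,\alpha}=s_{jk}\hat C_{j,\alpha}$ and $\hat C_{j,\alpha}^2=\hat C_{j,\alpha}$, one computes $\tr_{\C^{r_\alpha d}}\!\big(W_{j,\alpha}(I_{e_\alpha}\otimes\hat C_{k,\alpha})\big)=s_{jk}\,\tr_{\C^{r_\alpha d}}(W_{j,\alpha})$ for every $j$. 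Therefore $\sum_j s_{jk}\,\tr_{\C^{r_\alpha d}}(W_{j,\alpha})=0\in\mtxc{e_\alpha}$ for all $k$; since $S=(s_{jk})_{j,k}$ is invertible by Lemma \ref{l:matrix}, this forces $\tr_{\C^{r_\alpha d}}(W_{j,\alpha})=0$ for every $j$ and $\alpha$, which is exactly the assertion.

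The routine part is the bookkeeping with compressions and local supports in the first step; the genuinely load-bearing point is the last one, where the invertibility of the BIC Gram matrix $S$ promotes the single identity $\sum_jW_{j,\alpha}=0$ into the $d^2$ separate trace relations. The main thing to watch is that $C_j$, and hence each $\hat C_{j,\alpha}$, really satisfies all of the defining relations of $\cA_S$ after compression to $\supp_A\rho$ — but this is precisely what Proposition \ref{prop_relations_measurements_nonSIC} and the discussion preceding the statement supply.
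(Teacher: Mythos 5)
Your proof is correct, and it uses the same three ingredients as the paper's: the corner relation $\hat A_j^\povm=\hat C_j\hat A_j^\povm\hat C_j$ obtained from \eqref{e:ortho_sigurd} and \eqref{e:CandB}, the algebra relation $\hat C_{j,\alpha}\hat C_{k,\alpha}\hat C_{j,\alpha}=s_{jk}\hat C_{j,\alpha}$, and the invertibility of $S$ from Lemma \ref{l:matrix}. The bookkeeping is organized differently, though, in a way worth noting. The paper works entrywise in the $\C^{e_\alpha}$ factor: each scalar block $X_{j\alpha pq}$ is expanded as a linear combination of the $\hC_{k,\alpha}$ plus an orthogonal remainder, a $d^2\times d^2$ linear system is solved to show that only the $j$-th coefficient survives, and POVM completeness $\sum_j\hat A^\povm_j=I$ is invoked last to pin that coefficient to $\frac1d$. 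You invert this order: you define $W_{j,\alpha}$ directly as the deviation from $I_{e_\alpha}\otimes\frac1d\hC_{j,\alpha}$, use completeness up front to get $\sum_jW_{j,\alpha}=0$, push the corner relation through the partial trace to obtain $\tr_{\C^{r_\alpha d}}\bigl(W_{j,\alpha}(I_{e_\alpha}\otimes\hC_{k,\alpha})\bigr)=s_{jk}\,\tr_{\C^{r_\alpha d}}(W_{j,\alpha})$, and then one application of $S^{-1}$ finishes. This collapses the paper's two-stage argument into a single block-level computation, avoids the entrywise $(p,q)$ indexing, and makes it transparent that the invertibility of $S$ is the only load-bearing point beyond the SOS-derived relations. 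All intermediate steps check out, including the claim that compression is multiplicative in the factor $I-C_j$ (which holds because $C_j$ preserves $\supp_A\rho$) and the cyclicity of the partial trace in the $\C^{r_\alpha d}$ factor.
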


\begin{proof}
Proposition \ref{prop_relations_measurements_nonSIC} implies
\begin{align*}
\left(\big(A^\povm_j(I - C_j)\big) \otimes I\right)\rho
&=\big(A^\povm_j\otimes I\big)\left((I - C_j) \otimes I\right)\rho\\
&=\big(A^\povm_j\otimes I\big)\left(I \otimes (I - B_j)\right)\rho\\
&=\left(A^\povm_j \otimes (I - B_j)\right)\rho=0
\end{align*}
and, since $C_j$ preserves $\supp_A\rho$, one obtains
$\hA^\povm_j(I - \hC_j)=0$.
Since $\hat A^\povm_j$ and $\hat C_j$ are hermitian, we obtain
\begin{equation*}
\hat A^\povm_j\hat C_j=\hat A^\povm_j=\hat C_j\hat A^\povm_j \quad \forall j.
\end{equation*}
Let $(X_{j\alpha pq})_{p,q=1}^{e_\alpha}$ 
denote the $(\alpha,\alpha)$-block of $\hA_j^\povm$ with respect to the decomposition \eqref{e:decomp}. Then
\begin{equation}\label{e:AC}
X_{j\alpha pq}\hC_{j,\alpha}=X_{j\alpha pq}=\hC_{j,\alpha}X_{j\alpha pq}
\end{equation}
for all $j$ and $p,q\in[e_\alpha]$. 
Note that $\hC_{1,\alpha},\dots,\hC_{d^2,\alpha}$ have traces $r_\alpha$ and are linearly independent by Proposition \ref{p:rk}. Therefore
\begin{equation}\label{e:lincomb}
X_{j\alpha pq}=\sum_{k=1}^{d^2} \lambda_{j\alpha pq,k}\hC_{k,\alpha}+W_{j\alpha pq}
\end{equation}
for unique $\lambda_{j\alpha pq,k}\in\C$ and $W_{j\alpha pq}\in\mtxc{r_\alpha d}$ satisfying $\tr(W_{j\alpha pq}\hC_{\ell,\alpha})=0$ for all $\ell$. In particular, $\tr(W_{j\alpha pq})=0$ because the $\hC_{\ell,\alpha}$ add up to a multiple of identity. Then \eqref{e:AC} and \eqref{e:lincomb} together with $\hC_{j,\alpha}\hC_{k,\alpha}\hC_{j,\alpha}=s_{jk}\hC_{j,\alpha}$ imply
\begin{equation}\label{e:linsys}
s_{\ell j}\tr X_{j\alpha pq}
=\tr\left(\hC_{\ell,\alpha}X_{j\alpha pq}\right)=
\sum_{k=1}^{d^2} \lambda_{j\alpha pq,k}\tr\left(\hC_{\ell,\alpha}\hC_{k,\alpha}\right)
=\sum_{k=1}^{d^2} s_{\ell k}r_\alpha \lambda_{j\alpha pq,k}
\end{equation}
for all $\ell\in[d^2]$. Since the matrix $S$ is invertible by Lemma \ref{l:matrix}, the linear system \eqref{e:linsys} of $d^2$ equations in unknowns $\lambda_{j\alpha pq,1},\dots,\lambda_{j\alpha pq,d^2}$ has a unique solution, namely 
$\lambda_{j\alpha pq,j}= \frac{\tr X_{j\alpha pq}}{r_\alpha}$
and $\lambda_{j\alpha pq,k}=0$ for $k\neq j$.
In particular,
\begin{equation}\label{e:mltp}
X_{japq}=\frac{\tr X_{j \alpha pq}}{r_\alpha} \hC_{j,\alpha}+W_{j\alpha pq}\quad \forall j,p,q.
\end{equation}
Since $\hA^\povm_j$ is a POVM, we have $\sum_j X_{j \alpha pp}=I$ and $\sum_j X_{j \alpha pq}=0$ for $p\neq q$. Then \eqref{e:mltp}, linear independence of the $\hC_{j,\alpha}$, 
$\spa \{\hC_{j,\alpha}\}_j\cap \spa \{W_{j\alpha pq}\}_{j,p,q}=\{0\}$ and the relation $\sum_j\frac{1}{d}\hC_{j,\alpha}=I$ altogether
imply
$\frac{\tr X_{j \alpha pp}}{r_\alpha}=\frac1d$ and $\frac{\tr X_{j \alpha pq}}{r_\alpha}=0$ for $p\neq q$.
\end{proof}

\subsection{State factorization}

The preceding characterization of measurements and state block decomposition allow us to prove that $2 \log(d)$ bits of randomness can be extracted from the outcome of the $\povm$ setting of Alice, by showing that a classical-quantum state between Alice and Eve necessarily factors (Theorem \ref{t:main}).

We start with a technical lemma.

\begin{lem}\label{l:todo}
Let $\cH_C,\cH_D,\cH_E$ be finite-dimensional Hilbert spaces, with decompositions $\cH_C=\bigoplus_\alpha \cH_{C_\alpha}$ and $\cH_D=\bigoplus_\alpha \cH_{D_\alpha}$ indexed by a common index set.
Suppose there are pure states $\ket{\psi}\in\cH_C\otimes \cH_D\otimes \cH_E$ and $\ket{\tau_\alpha}\in\cH_{C_\alpha}$ such that $\tr_E(\dyad{\psi})$ is a mixture of pure states in
$$\bigoplus_\alpha \ket{\tau_\alpha}\otimes \cH_{D_\alpha}.$$
Then
$$
\ket{\psi}\in \bigoplus_\alpha \ket{\tau_\alpha}
\otimes\cH_{D_\alpha}\otimes \cH_E.
$$
\end{lem}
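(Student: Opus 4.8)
The plan is to write $\tr_E(\dyad{\psi}) = \sum_k p_k \dyad{\phi_k}$ where each $\ket{\phi_k} \in \bigoplus_\alpha \ket{\tau_\alpha}\otimes\cH_{D_\alpha}$, and then exploit the fact that any purification of a mixed state has a prescribed form in terms of its spectral decomposition. Concretely, since $\tr_E(\dyad\psi)$ is supported (as an operator on $\cH_C\otimes\cH_D$) on the subspace $\cK := \bigoplus_\alpha \ket{\tau_\alpha}\otimes\cH_{D_\alpha}$, its range is contained in $\cK$. First I would record this: the support of $\tr_E(\dyad\psi)$ lies in $\cK$.

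Next, I would invoke the standard fact that for a pure state $\ket\psi\in\cH_{CD}\otimes\cH_E$ with reduced state $\rho_{CD}=\tr_E(\dyad\psi)$, one can write $\ket\psi = \sum_k \sqrt{\mu_k}\,\ket{g_k}\otimes\ket{h_k}$ (a Schmidt-type decomposition across the $CD$ vs.\ $E$ cut), where $\{\ket{g_k}\}$ is an orthonormal eigenbasis of $\rho_{CD}$ for the nonzero eigenvalues $\mu_k$ and $\{\ket{h_k}\}$ are orthonormal in $\cH_E$. Since each eigenvector $\ket{g_k}$ of $\rho_{CD}$ with nonzero eigenvalue lies in the range of $\rho_{CD}$, and that range is contained in $\cK$, every $\ket{g_k}\in\cK$. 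Therefore $\ket\psi\in\cK\otimes\cH_E = \bigoplus_\alpha \ket{\tau_\alpha}\otimes\cH_{D_\alpha}\otimes\cH_E$, which is exactly the claim.

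The only point requiring a little care is that $\cK$ as written, $\bigoplus_\alpha \ket{\tau_\alpha}\otimes\cH_{D_\alpha}$, is genuinely a subspace of $\cH_C\otimes\cH_D$ and that a mixture of pure states lying in $\cK$ has its entire range (as a positive operator) inside $\cK$ — this is immediate since for $\rho=\sum_k p_k\dyad{\phi_k}$ with $p_k>0$ and $\ket{\phi_k}\in\cK$, and any $\ket{v}$, we have $\rho\ket v = \sum_k p_k\braket{\phi_k}{v}\ket{\phi_k}\in\cK$. I do not expect a serious obstacle here; the statement is essentially the observation that a purification cannot have support outside the support of the state it purifies, combined with unravelling the direct-sum notation. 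The one bookkeeping subtlety is that the $\ket{\tau_\alpha}$ are fixed unit vectors (not varying with the mixture index $k$), so $\cK$ does not depend on $k$ and the argument goes through cleanly.
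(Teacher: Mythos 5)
Your proof is correct, but it takes a genuinely different route from the paper's. You reduce the lemma to two facts: (a) the set $\cK:=\bigoplus_\alpha \C\ket{\tau_\alpha}\otimes\cH_{D_\alpha}$ is a subspace of $\cH_C\otimes\cH_D$ containing the range of $\tr_E(\dyad{\psi})$, and (b) by the Schmidt decomposition across the $CD{:}E$ cut, any purification lies in $\ran\big(\tr_E(\dyad{\psi})\big)\otimes\cH_E$; together these give $\ket{\psi}\in\cK\otimes\cH_E$, which (after the tacit reshuffle of tensor factors the paper also allows itself) is exactly the claimed subspace, and membership there unpacks directly to $\ket{\psi}=\bigoplus_\alpha\ket{\tau_\alpha}\otimes\ket{\chi_\alpha}$. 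The paper argues differently: it writes $\ket{\psi}=\bigoplus_{\alpha,\beta}\ket{\psi_{\alpha\beta}}$ with respect to $\bigoplus_{\alpha,\beta}\cH_{C_\alpha}\otimes\cH_{D_\beta}\otimes\cH_E$, kills the off-diagonal blocks by noting that $\tr_E(\dyad{\psi_{\alpha\beta}})=0$ forces $\ket{\psi_{\alpha\beta}}=0$, and for the diagonal blocks uses that $\tr_{D_\alpha E}(\dyad{\psi_{\alpha\alpha}})$ is proportional to the pure state $\dyad{\tau_\alpha}$, so purity forces the product form $\ket{\psi_{\alpha\alpha}}=\ket{\tau_\alpha}\otimes\ket{\chi_\alpha}$. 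Your argument is shorter and more conceptual, hinging on the standard fact that a purification cannot have support outside $\supp(\rho)\otimes\cH_E$; the paper's block-by-block computation avoids invoking the Schmidt decomposition and produces the explicit per-block product form (which is what gets used in the proof of the main theorem) directly, but the two conclusions coincide, so either proof serves equally well.
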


\begin{proof}
With respect to the identification
$$\cH_C\otimes \cH_D\otimes \cH_E=\bigoplus_{\alpha,\beta} \cH_{C_\alpha}\otimes \cH_{D_\beta}\otimes \cH_E$$
we write $\ket{\psi}=\bigoplus_{\alpha,\beta} \ket{\psi_{\alpha\beta}}$ for $\ket{\psi_{\alpha\beta}}\in \cH_{C_\alpha}\otimes \cH_{D_\beta}\otimes \cH_E$.
Let us consider the diagonal blocks $\dyad{\psi_{\alpha\beta}}$ of $\dyad{\psi}$ with respect to this decomposition.
If $\alpha\neq\beta$, then $\tr_E(\dyad{\psi_{\alpha\beta}})=0$ by the assumption on $\tr_E(\dyad{\psi})$, and so $\ket{\psi_{\alpha\beta}}=0$. If $\alpha=\beta$, then $\tr_E(\dyad{\psi_{\alpha\alpha}})$ is a mixture of pure states in $\ket{\tau_\alpha}\otimes \cH_{D_\alpha}$. In particular, $\tr_{D_\alpha E}(\dyad{\psi_{\alpha\alpha}})$ is proportional to the pure state $\dyad{\tau_\alpha}$. Therefore $\dyad{\psi_{\alpha\alpha}}$ is a product bipartite state on $\cH_{C_\alpha}\otimes (\cH_{D_\alpha}\otimes \cH_E)$, so $\ket{\psi_{\alpha\alpha}}=\ket{\tau_\alpha}\otimes \ket{\chi_\alpha}$ for some $\ket{\chi_\alpha}\in \cH_{D_\alpha}\otimes \cH_E$.
Thus $\ket{\psi}= \bigoplus_\alpha \ket{\tau_\alpha}\otimes \ket{\chi_\alpha}$.
\end{proof}

A pure tripartite state $\ket{\psi}\in\cH_A\otimes\cH_B\otimes\cH_E$ is a purification of the mixed bipartite state $\rho$ on 
$\cH_A\otimes\cH_B$ if $\rho=\tr_E(\dyad{\psi})$.
The device-independent randomness of 
the $\povm$ setting outcome is bounded from below by the conditional von Neumann entropy $H(A|E)_{\rho_{AE}}$ of the classical-quantum state
\begin{equation}\label{e:rhoAE}
\rho_{AE} = \sum_{j=1}^{d^2} \ketbraq{j}_A 
\otimes \tr_{AB}\left[ \ketbraq{\psi} ( A^\povm_j \otimes I_B \otimes I_E)\right]
\end{equation}
where $\ket{\psi}\in\cH_A\otimes\cH_B\otimes\cH_E$ is the worst-case purification of $\rho$; that is, the purification of $\rho$ that gives the lowest value of $H(A|E)_{\rho_{AE}}$. 

\begin{thm}\label{t:main}
Suppose the state $\rho$ and measurement $(A^\povm_j)_j$ appear in an optimal quantum strategy for the Bell function \eqref{e:bellfun}, and let $\ket{\psi}\in\cH_A\otimes\cH_B\otimes\cH_E$ be a purification of $\rho$. Then
$$
\sum_{j=1}^{d^2} \ketbraq{j}_A 
\otimes \tr_{AB}\left[ \ketbraq{\psi} ( A^\povm_j \otimes I_B \otimes I_E)\right]=\left(\frac{1}{d^2}\sum_{j=1}^{d^2}\ketbraq{j}\right)\otimes
\sigma_E$$
for some state $\sigma_E$ on $\cH_E$.

In particular, the maximal violation of the Bell inequality \eqref{e:bellfun} certifies $2 \log(d)$ bits of device-independent randomness from the outcome of the $\povm$ setting of Alice.
\end{thm}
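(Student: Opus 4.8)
The plan is to combine the state-factorization from Proposition \ref{p:maxent} (applied to the synchronicity relations \eqref{e:AB_sigurd} via the operators $C_j$) with the partial characterization of $\hat A^\povm_j$ from Proposition \ref{p:Apovm}, and then feed the result into the product-state lemma, Lemma \ref{l:todo}. First I would pass to the compressed picture: by Lemma \ref{l:supp}, $\rho$ is unchanged under compression onto $\supp_A\rho\otimes\supp_B\rho$, so $\tr_{AB}[\ketbraq{\psi}(A^\povm_j\otimes I_B\otimes I_E)] = \tr_{AB}[\ketbraq{\psi}(\hat A^\povm_j\otimes I_B\otimes I_E)]$ after replacing $\ket{\psi}$ by its analogous compression; thus only $\hat A^\povm_j$ and $\hat\rho$ matter. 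Recall from \eqref{e:CandB} that $(C_j\otimes I)\rho=(I\otimes B_j)\rho$ and the $\hat C_j$ give a representation of $\cA_S$. Apply Proposition \ref{p:maxent} with $E_j=C_j$, $F_j=B_j$: this yields isometries $U,V$ onto the local supports, block decompositions $U^*\hat C_jU=\bigoplus_\alpha I_{e_\alpha}\otimes\hat C_{j,\alpha}$, and — crucially — the fact that $(U\otimes V)^*\hat\rho(U\otimes V)$ is a mixture of states of the form $\bigoplus_\alpha \ket{\chi_\alpha}\otimes\ket{\varphi_{r_\alpha d}}$ living in $\bigoplus_\alpha(\C^{e_\alpha}\otimes\C^{f_\alpha})\otimes(\C^{r_\alpha d}\otimes\C^{r_\alpha d})$.

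Next I would invoke Lemma \ref{l:todo}. Take $\cH_C$ to be Alice's first tensor factor $\bigoplus_\alpha\C^{e_\alpha}$, $\cH_D$ to be the remaining $\bigoplus_\alpha\C^{r_\alpha d}$ on Alice's side tensored with all of Bob's space, and $\cH_E$ Eve's space; the pure state is (the image under $U\otimes V$ of) $\ket{\psi}$, and the states $\ket{\tau_\alpha}$ are the reduced states of $\ket{\chi_\alpha}$ on the $\C^{e_\alpha}$ factor. Since by Proposition \ref{p:maxent}(iii) $\tr_E$ of the compressed $\ket{\psi}$ is a mixture of the stated product-over-$\alpha$ form, Lemma \ref{l:todo} forces $\ket{\psi}$ itself (in the compressed, $U\otimes V$-rotated coordinates) to lie in $\bigoplus_\alpha \ket{\tau_\alpha}\otimes\C^{r_\alpha d}\otimes\cH_{B,\alpha}\otimes\cH_E$. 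Restricting attention to a single block $\alpha$, the state on Alice's $\C^{r_\alpha d}$, Bob's $\C^{r_\alpha d}$, and $\cH_E$ is $\ket{\varphi_{r_\alpha d}}\otimes\ket{\eta_\alpha}$ for some $\ket{\eta_\alpha}\in\cH_E$: the maximally entangled component is a tensor factor, hence completely uncorrelated with Eve. Now compute the classical-quantum state \eqref{e:rhoAE}. In block $\alpha$, the relevant part of $\hat A^\povm_j$ is $I_{e_\alpha}\otimes\frac1d\hat C_{j,\alpha}+W_{j,\alpha}$ with $\tr_{\C^{r_\alpha d}}W_{j,\alpha}=0$ (Proposition \ref{p:Apovm}; off-diagonal blocks of $\hat A^\povm_j$ in $\alpha$ contribute nothing since $\ket\psi$ is block-diagonal after the rotation). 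Tracing over $AB$ against $\ketbraq{\varphi_{r_\alpha d}}$: the maximally entangled state gives $\tr_{\C^{r_\alpha d}}$ of the Alice operator divided by $r_\alpha d$, so $W_{j,\alpha}$ drops out and $\frac1d\hat C_{j,\alpha}$ contributes $\frac{\tr\hat C_{j,\alpha}}{d\,r_\alpha d}=\frac{r_\alpha}{d\,r_\alpha d}=\frac1{d^2}$ (using $\tr\hat C_{j,\alpha}=r_\alpha$ from Proposition \ref{p:rk}), independent of $j$. Summing the pure-state decomposition weights over $\alpha$ and over the mixture, the $E$-register produces some fixed state $\sigma_E$ and the $\uA$-register is $\frac1{d^2}\sum_j\ketbraq{j}$, exactly as claimed.

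Finally, the entropy statement: the classical-quantum state is $\bigl(\frac1{d^2}\sum_j\ketbraq{j}\bigr)\otimes\sigma_E$, a product state whose $\uA$-marginal is uniform on $d^2$ outcomes, so $H(\uA|E)=H(\uA)=\log(d^2)=2\log(d)$. Since $H(\uA|E)$ of the worst-case purification lower-bounds the asymptotic randomness rate and here it equals $2\log(d)$ for \emph{every} purification, the Bell violation certifies $2\log(d)$ bits.

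\textbf{Main obstacle.} The delicate point is bookkeeping the tensor-factor reshuffling so that Lemma \ref{l:todo} applies cleanly: one must verify that the decomposition of $\supp_A\rho$ into $\bigoplus_\alpha\C^{e_\alpha}\otimes\C^{r_\alpha d}$ coming from Proposition \ref{p:maxent} is compatible with the decomposition \eqref{e:decomp} of $\hat C_j$ used in Proposition \ref{p:Apovm} (they are — both are indexed by the irreducibles of $\cA_S$ — but this must be stated), and that the off-diagonal-in-$\alpha$ blocks $\cN_j$ of $\hat A^\povm_j$ genuinely contribute zero once $\ket\psi$ is known to be block-diagonal. A secondary subtlety is that $\ket{\psi}$ is an \emph{arbitrary} purification, not the worst case; but since the conclusion holds for all of them, this only strengthens the result. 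The representation-theoretic inputs (linear independence and traces of the $\hat C_{j,\alpha}$, the structure of $\hat A^\povm_j$) are already packaged in Propositions \ref{p:rk} and \ref{p:Apovm}, so no new hard analysis is needed beyond careful index tracking.
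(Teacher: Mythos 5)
Your plan uses the same three ingredients as the paper --- Proposition~\ref{p:maxent}, Lemma~\ref{l:todo}, Proposition~\ref{p:Apovm} --- and your final computation (averaging $\hat A^\povm_j$ against $\ketbraq{\varphi_{r_\alpha d}}$, killing $W_{j,\alpha}$ by its vanishing partial trace, killing the off-diagonal blocks $\cN_j$ by block-diagonality of the rotated purification, getting $1/d^2$ from $\tr\hC_{j,\alpha}=r_\alpha$) is correct, as is the entropy bookkeeping at the end. The gap is in your application of Lemma~\ref{l:todo}.

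You set $\cH_C=\bigoplus_\alpha\C^{e_\alpha}$ and take $\ket{\tau_\alpha}$ to be ``the reduced states of $\ket{\chi_\alpha}$ on the $\C^{e_\alpha}$ factor.'' This choice fails for two reasons. First, the reduced state of a generic $\ket{\chi_\alpha}\in\C^{e_\alpha}\otimes\C^{f_\alpha}$ on $\C^{e_\alpha}$ is mixed, whereas Lemma~\ref{l:todo} requires the $\ket{\tau_\alpha}$ to be pure vectors. Second, and more seriously, Proposition~\ref{p:maxent}(iii) asserts that the rotated state is a \emph{mixture} of pure states of the form $\bigoplus_\alpha\ket{\chi_\alpha}\otimes\ket{\varphi_{r_\alpha d}}$, and the $\ket{\chi_\alpha}$ may differ from one term of the mixture to another; the only factor guaranteed to be a fixed pure vector common to all terms is $\ket{\varphi_{r_\alpha d}}$. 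Lemma~\ref{l:todo} requires precisely such a common pure factor. The correct application therefore takes $\cH_{C_\alpha}=\C^{r_\alpha d}\otimes\C^{r_\alpha d}$ (one factor on Alice's side, one on Bob's), $\ket{\tau_\alpha}=\ket{\varphi_{r_\alpha d}}$, and $\cH_{D_\alpha}=\C^{e_\alpha}\otimes\C^{f_\alpha}$. With that choice the lemma yields $(U^*\otimes V^*\otimes I_E)\ket{\psi}=\bigoplus_\alpha\ket{\chi_\alpha}\otimes\ket{\varphi_{r_\alpha d}}$ with $\ket{\chi_\alpha}\in\C^{e_\alpha}\otimes\C^{f_\alpha}\otimes\cH_E$, which is exactly what feeds your computation. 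As written, your inference that ``the state on Alice's $\C^{r_\alpha d}$, Bob's $\C^{r_\alpha d}$, and $\cH_E$ is $\ket{\varphi_{r_\alpha d}}\otimes\ket{\eta_\alpha}$'' does not follow from a $\ket{\tau_\alpha}$ living in $\C^{e_\alpha}$: fixing the $\C^{e_\alpha}$ factor says nothing about the $\C^{r_\alpha d}\otimes\C^{r_\alpha d}$ factor.
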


\begin{proof}
The operators $\hC_j$ and $\hB_j$ determine finite-dimensional representations of $\cA_S$, which are direct sums of irreducible ones whose dimensions are multiples of $d$ by Proposition \ref{p:rk}.
By \eqref{e:CandB} and Proposition \ref{p:maxent} there are isometries $U:\C^{D_A d}\to\cH_A$ and $V:\C^{D_B d}\to\cH_B$ with $\ran U=\supp_A\rho$ and $\ran V=\supp_B\rho$ such that
\begin{equation}\label{e:blocks}
U^*C_jU\in\bigoplus_\alpha (I_{e_\alpha}\otimes\mtxc{r_\alpha d}),
\quad
V^*B_jV\in\bigoplus_\alpha (I_{f_\alpha}\otimes\mtxc{r_\alpha d}),
\end{equation}
where $D_A=\sum_\alpha e_\alpha r_\alpha$ and $D_B=\sum_\alpha f_\alpha r_\alpha$. Furthermore, 
$(U\otimes V)^*\rho(U\otimes V)$ is a mixture of pure states in
$$
\bigoplus_\alpha \left(\C^{e_\alpha}\otimes \C^{f_\alpha}\right)\otimes\ket{\varphi_{r_\alpha d}}.
$$
With respect to the decomposition \eqref{e:blocks} let
$$U^*C_jU=\bigoplus_\alpha I_{e_\alpha}\otimes \hC_{j,\alpha}$$
where $\tr \hC_{j,\alpha}=r_\alpha$.
By Proposition \ref{p:Apovm},
\begin{equation}\label{e:apovm}
U^*A^\povm_jU = \cN_j+\bigoplus_\alpha\left(
I_{e_\alpha}\otimes \frac{1}{d}\hC_{j,\alpha}
+W_{j,\alpha}\right),
\end{equation}
where $\tr_{\C^{r_\alpha d}}(W_{j,\alpha})=0$, and all diagonal $(\alpha,\alpha)$-blocks of $\cN_j\in \mtxc{D_A d}$ are zero.

Let $\ket{\psi}$ be a purification of $\rho$. Then $\rho=\tr_E( \ketbraq{\psi} )$ and so
$(U\otimes V)^*\rho(U\otimes V)=\tr_E((U\otimes V\otimes I_E)^* \ketbraq{\psi} (U\otimes V\otimes I_E))$.
By Lemma \ref{l:todo}, up to a suitable shuffle of direct sums and tensor products we have
\begin{equation*}
\ket{\hat{\psi}}:=(U^*\otimes V^* \otimes I_E) \ket{\psi}=\bigoplus_\alpha 
\ket{\chi_\alpha}\otimes\ket{\varphi_{r_\alpha d}}
\end{equation*}
for some $\ket{\chi_\alpha}\in (\C^{e_\alpha}\otimes \C^{f_\alpha})\otimes \cH_E$.
Let us record two observations on the block interactions of the decomposition \eqref{e:blocks}.
In the following calculations, equalities are valid up to a compatible shuffle of tensor products.
Firstly, for all $\alpha$ and $j$ we have
\begin{equation}\label{e:hiccup1}
\begin{split}
&\tr_{AB}\left[
\left(\dyad{\varphi_{r_\alpha d}}\otimes\dyad{\chi_\alpha}\right)
\left(W_{j,\alpha}\otimes I_{f_\alpha r_\alpha d} \otimes I_E \right)
\right] \\
=\,&
\tr_{\C^{e_\alpha}\otimes \C^{f_\alpha}}
\left[\tr_{\C^{r_\alpha d}\otimes \C^{r_\alpha d}}\left[
\left(\dyad{\varphi_{r_\alpha d}}\otimes\dyad{\chi_\alpha}\right)
\left(W_{j,\alpha}\otimes I_{f_\alpha r_\alpha d} \otimes I_E \right)
\right]\right] \\
=\,&
\tr_{\C^{e_\alpha}\otimes \C^{f_\alpha}}
\left[\dyad{\chi_\alpha} \left(\frac{1}{r_\alpha d}\tr_{\C^{r_\alpha d}}\left(W_{j,\alpha}\right)\otimes I_{f_\alpha} \otimes I_E \right)
\right]
=0
\end{split}
\end{equation}
because $\tr_{\C^{r_\alpha d}}(W_{j,\alpha})=0$.
Secondly, for all $j$ we have
\begin{equation}\label{e:hiccup2}
\begin{split}
&\tr_{AB}\left[
\dyad{\hat{\psi}}
\left( \cN_j \otimes I_{D_Bd} \otimes I_E \right)
\right] \\
=\,&
\tr_{AB}\left[
\left(\bigoplus_{\alpha,\beta} 
\ketbra{\varphi_{r_\alpha d}}{\varphi_{r_\beta d}}\otimes \ketbra{\chi_\alpha}{\chi_\beta}
\right)
\left( \cN_j \otimes I_{D_Bd} \otimes I_E \right)
\right] \\
=\,&
\tr_{AB}\left[
\Big(
\sum_\gamma
\left(\ketbra{\varphi_{r_\alpha d}}{\varphi_{r_\gamma d}}\otimes \ketbra{\chi_\alpha}{\chi_\gamma}\right)
\left( (\cN_j)_{\gamma,\beta} \otimes (I_{D_Bd})_{\gamma,\beta} \otimes I_E \right)
\Big)_{\alpha,\beta}
\right] \\
=\,&0
\end{split}
\end{equation}
because $(\cN_j)_{\alpha,\alpha}=0$ for all $\alpha$, and $(I_{D_Bd})_{\alpha,\beta}=0$ for all $\alpha\neq \beta$.
For the sake of readability, $I_B$ denotes the identity on Bob's system or its subsystems (depending on the context) in the following calculations. 
For every $j\in[d^2]$, we can use \eqref{e:apovm}, \eqref{e:hiccup1} and \eqref{e:hiccup2} together with $(UU^* \otimes VV^* \otimes I_E)\ket{\psi} = \ket{\psi}$ from 
Proposition \ref{p:maxent} (and Lemma \ref{l:supp}(i))
to calculate
\begin{align*}
& \tr_{AB}\left[ \ketbraq{\psi}( A^\povm_j \otimes I_B \otimes I_E )\right] \\
=\,& \tr_{AB}\left[
(U^*\otimes V^* \otimes I_E) \ketbraq{\psi} (U\otimes V \otimes I_E )( U^*A^\povm_jU \otimes I_B \otimes I_E )
\right]\\
=\,& \tr_{AB}\left[
\dyad{\hat{\psi}}
\left( \left(\cN_j+\bigoplus_\alpha\left(
I_{e_\alpha}\otimes \frac{1}{d}\hC_{j,\alpha}
+W_{j,\alpha}\right)\right)
\otimes I_B \otimes I_E \right)
\right]\\
=\,& \tr_{AB}\left[
\dyad{\hat{\psi}}
\left(\left(\bigoplus_\alpha
I_{e_\alpha}\otimes \frac{1}{d}\hC_{j,\alpha}
\right)
\otimes I_B \otimes I_E \right)
\right]\\
=\,& \sum_\alpha\tr_{AB}\left[
\left(\dyad{\chi_\alpha}\otimes \dyad{\varphi_{r_\alpha d}}\right)
\left( \left(I_{e_\alpha}\otimes 
\frac{\hC_{j,\alpha}}{d}\right)
\otimes I_B \otimes I_E \right)
\right]\\
=\,& \sum_\alpha 
\tr\left(
\dyad{\varphi_{r_\alpha d}}\left(
\frac{\hC_{j,\alpha}}{d} \otimes I_{r_\alpha d}\right)
\right)\tr_{\C^{e_\alpha}\otimes\C^{f_\alpha}}(\dyad{\chi_\alpha})
\\
=\,& \sum_\alpha \frac{1}{r_\alpha d^2}\tr\left(\hC_{j,\alpha}\right)\tr_{\C^{e_\alpha}\otimes\C^{f_\alpha}}(\dyad{\chi_\alpha})
\\
=\,& 
\sum_\alpha \frac{1}{d^2}\tr_{\C^{e_\alpha}\otimes\C^{f_\alpha}}(\dyad{\chi_\alpha}),
\end{align*}
which is independent of $j$.
Thus the state $\rho_{AE}$ in \eqref{e:rhoAE} can be written as
\begin{align*}
\rho_{AE}
&=\sum_j \ketbraq{j} \otimes \tr_{AB}\left[ \ketbraq{\psi}( A^\povm_j \otimes I_B \otimes I_E )\right] \\
&=\frac{1}{d^2}\sum_j\ketbraq{j}\otimes
\sigma_E
\end{align*}
where $\sigma_E := \sum_\alpha \tr_{\C^{e_\alpha}\otimes\C^{f_\alpha}}(\dyad{\chi_\alpha})$ is a state on $\cH_E$. It therefore follows that
\begin{align*}
H(A|E)_{\rho_{AE}} & \left. = H(AE)_{\rho_{AE}} - H(E)_{\rho_{AE}} = H\left( \frac{1}{d^2} \sum_{j=1}^{d^2} \ketbraq{j} \otimes \sigma_E \right) - H( \sigma_E ) \right. \\
& \left. = H\left( \frac{1}{d^2} \sum_{j=1}^{d^2} \ketbraq{j} \right) + H( \sigma_E ) - H( \sigma_E ) = 2 \log(d)
\right.
\end{align*}
for any purification $\ket{\psi}$ compatible with the observed correlation.
\end{proof}

\section{Classical value of the BIC-POVM Bell function}
\label{app:QCgap}

Let $d\ge 2$, and let $S\in\mtxr{d^2}$ be a matrix induced by a BIC-POVM. 
Consider the Bell function
\begin{equation}\label{e:bellfunA}
\begin{split}
&2\sum_{j<k}\sqrt{1-s_{jk}} \big[ p(1,1|(j,k),j) + p(2,1|(j,k),k) - p(1,1|(j,k),k) - p(2,1|(j,k),j) \big] \\
&- \sum_{j<k}(1-s_{jk}) \big[ p_A(1|(j,k)) + p_A(2|(j,k)) \big] - d(d-2) \sum_{j=1}^{d^2} p_B(1|j) - \sum_{j=1}^{d^2} p(j,\perp|\povm,j)
\end{split}
\end{equation}
introduced in Section \ref{sec:bell}. By Proposition \ref{p:bellmax}, its maximal quantum value equals $d^2$. In this section we provide an expression and an upper bound for its maximal classical value.

\begin{prop}\label{p:bellmaxclas}
Let $S\in\mtxr{d^2}$ be a matrix induced by a BIC-POVM. Then the maximal classical value of the Bell function \eqref{e:bellfunA} equals
\begin{equation}\label{e:bellmaxclas}
\max_{\substack{J\subseteq [d^2],\\ 0<|J|<2d}}
\left(-d(d-2)|J|
+\sum_{j\in J,k\notin J}\left(2\sqrt{1-s_{jk}}-(1-s_{jk})\right)
\right)
\end{equation}
In particular, \eqref{e:bellfunA} is bounded from above by 
$$d^2-\frac14
\left(\min_{\substack{J\subseteq [d^2],\\ 0<|J|<2d}}
\sum_{j\in J,k\notin J}s_{jk}^2\right)
<d^2.$$
\end{prop}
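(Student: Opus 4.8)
The plan is to exploit that the maximal classical (local-hidden-variable) value of a Bell functional is attained at a deterministic local strategy, since the classical correlations form a polytope whose vertices are exactly the deterministic strategies. A deterministic strategy is specified by a subset $J\subseteq[d^2]$ recording the settings $j$ on which Bob outputs $1$ (so $p_B(1|j)=1$ iff $j\in J$), an outcome $a^*\in[d^2]$ Alice returns for the $\povm$ setting, and outcomes $\alpha_{jk}\in\{1,2,\perp\}$ Alice returns for each setting $(j,k)$. Since \eqref{e:bellfunA} is linear and, for fixed $J$, its $(j,k)$-dependent part involves only $\alpha_{jk}$ while the last term involves only $a^*$, one can optimize over the $\alpha_{jk}$ independently pair by pair and over $a^*$ separately.

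First I would handle the $\povm$ term: $-p(j,\perp|\povm,j)$ equals $-1$ if $j=a^*\notin J$ and $0$ otherwise, so choosing any $a^*\in J$ makes this term vanish; this requires $J\neq\emptyset$, and a crude estimate (using $\sqrt{1-s}\ge 1-s$ together with $\sum_{k\neq j}s_{jk}=d-1$ from Lemma \ref{l:matrix}) shows any singleton $J$ already gives value $\ge d>0$, so $J=\emptyset$ is never optimal. For a fixed nonempty $J$, the contribution of setting $(j,k)$ is $2\sqrt{1-s_{jk}}\,([\alpha_{jk}=1]-[\alpha_{jk}=2])\,([j\in J]-[k\in J])-(1-s_{jk})[\alpha_{jk}\neq\perp]$; if $j,k$ are on the same side of the cut this is maximized by $\alpha_{jk}=\perp$, giving $0$, while if exactly one of $j,k$ lies in $J$ the best choice $\alpha_{jk}\in\{1,2\}$ contributes $2\sqrt{1-s_{jk}}-(1-s_{jk})\ge 0$. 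Summing over pairs gives $-d(d-2)|J|+\sum_{j\in J,\,k\notin J}\left(2\sqrt{1-s_{jk}}-(1-s_{jk})\right)$ as the optimal value for $J$, i.e.\ exactly the expression maximized in \eqref{e:bellmaxclas}. To see the maximum may be restricted to $0<|J|<2d$, combine the trivial bound $2\sqrt{1-s}-(1-s)\le 1$ with $|\{(j,k):j\in J,\,k\notin J\}|=|J|(d^2-|J|)$ to get that the value of $J$ is at most $-d(d-2)|J|+|J|(d^2-|J|)=|J|(2d-|J|)$; for $|J|\ge 2d$ this is $\le 0<d$, so such $J$ are never optimal.

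For the numerical bound I would sharpen the per-edge estimate to $2\sqrt{1-s}-(1-s)\le 1-\tfrac14 s^2$ for $s\in[0,1]$, which (writing $u=1-s$ and using $1-(1-\sqrt u)^2=2\sqrt u-u$ and $1-u=(1-\sqrt u)(1+\sqrt u)$) reduces to $(1+\sqrt u)^2\le 4$, with equality iff $s=0$. Applying this on each cut edge and using $|J|(2d-|J|)\le d^2$ yields, for every admissible $J$,
\[
-d(d-2)|J|+\sum_{j\in J,\,k\notin J}\left(2\sqrt{1-s_{jk}}-(1-s_{jk})\right)\ \le\ d^2-\tfrac14\sum_{j\in J,\,k\notin J}s_{jk}^2 .
\]
Taking the maximum over $J$ on the left and the minimum over $J$ on the right gives the claimed bound $d^2-\tfrac14\min_{0<|J|<2d}\sum_{j\in J,\,k\notin J}s_{jk}^2$, and this is strictly below $d^2$ because the minimum is positive: for any proper nonempty $J$, connectivity of the underlying graph of $S$ (Lemma \ref{l:matrix}) forces some $j\in J,\ k\notin J$ with $s_{jk}\neq 0$.

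The main obstacle I anticipate is not any single inequality — the two I need ($\sqrt{1-s}\ge 1-s$ for the singleton lower bound, $(1+\sqrt{1-s})^2\le 4$ for the per-edge upper bound) are both elementary — but the bookkeeping in the per-pair optimization and the accompanying argument that the optimal $J$ satisfies $0<|J|<2d$, so that the clean expression in \eqref{e:bellmaxclas} genuinely records the classical value. Once that is in place, strictness of the final bound comes for free from the connectivity of the graph of $S$ established in Lemma \ref{l:matrix}.
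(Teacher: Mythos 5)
Your proposal is correct and follows essentially the same approach as the paper's proof: reduce to deterministic strategies, parametrize by Bob's accepting set $J$, optimize per pair (and for the $\povm$ term) to obtain $v(J)$, then use the elementary estimates $1-t\le 2\sqrt{1-t}-(1-t)\le 1-\tfrac14t^2$ together with $|J|(2d-|J|)\le d^2$ and connectivity of $S$ from Lemma \ref{l:matrix} to justify the restriction to $0<|J|<2d$ and the strict upper bound. The only cosmetic difference is that the paper applies both sides of the estimate at once, while you use the weaker per-edge bound $2\sqrt{1-s}-(1-s)\le1$ to discard $|J|\ge 2d$ before sharpening to $1-\tfrac14s^2$ for the final numerical bound; the substance is identical.
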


\begin{proof}
Since \eqref{e:bellfunA} is a convex optimization problem, its solution is attained at a deterministic strategy.
Thus it suffices to solve the optimization problem
\begin{equation}\label{e:opt0}
\begin{split}
\max\quad & 2\sum_{j<k}\sqrt{1-s_{jk}}(a^{jk}_1-a^{jk}_2)(b_j-b_k)
-\sum_{j<k}(1-s_{jk})(a^{jk}_1+a^{jk}_2) \\
&-d(d-2)\sum_j b_j-\sum_j a^\povm_j(1-b_j) \\
\text{subject to:}\quad & a^{jk}_1,a^{jk}_2,b_j,a^\povm_j\in\{0,1\},
\ a^{jk}_1a^{jk}_2=0, \text{ exactly one of }a_j^\povm\text{ is nonzero}.
\end{split}
\end{equation}
It is easier to analyze \eqref{e:opt0} with auxiliary notation $a^{jk}_i=a^{kj}_i$ for $j> k$ that makes \eqref{e:opt0} equivalent to
\begin{equation}\label{e:opt}
\begin{split}
\max\quad & \sum_{j,k}\left(\sqrt{1-s_{jk}}(a^{jk}_1-a^{jk}_2)(b_j-b_k)
-\frac{1-s_{jk}}{2}(a^{jk}_1+a^{jk}_2)\right)\\
&-d(d-2)\sum_j b_j-\sum_j a^\povm_j(1-b_j) \\
\text{subject to:}\quad & 
a^{jk}_1=a^{kj}_1,a^{jk}_2=a^{kj}_2,b_j\in\{0,1\},
\ a^{jk}_1a^{jk}_2=0, \text{ exactly one of }a_j^\povm\text{ is nonzero}.
\end{split}
\end{equation}
First, observe that $2\sqrt{1-t}-(1-t)$ is a monotone decreasing function $(0,1)\to (0,1)$. Let $J\subseteq[d^2]$ be arbitrary. Suppose we fix the $b_j$ arguments in the objective function of \eqref{e:opt} as $b_j=1$ if $j\in J$ and $b_j=0$ if $j\notin J$. 
To maximize the value of this partially evaluated objective function, it is then necessary to set:
\begin{itemize}
    \item $a_1^{jk}=1$ if $j\in J$ and $k\notin J$, 
    \item $a_2^{jk}=1$ if $j\notin J$ and $k\in J$, 
    \item $a_1^{jk}=a_2^{jk}=0$ if $j,k\in J$ or $j,k\notin J$, 
    \item $a_j^\povm=1$ for some $j\in J$, unless $J=\emptyset$ in which case the choice of $j$ is irrelevant.
\end{itemize}
For these arguments, the objective function of \eqref{e:opt} evaluates as
\begin{equation}\label{vJ}
v(J):=-d(d-2)|J|
+\sum_{j\in J,k\notin J}\left(2\sqrt{1-s_{jk}}-(1-s_{jk})\right)
\end{equation}
if $J\neq\emptyset$, and as $-1$ if $J=\emptyset$.
By applying the estimate
$$1-t<2\sqrt{1-t}-(1-t)<1-\frac{t^2}{4} \qquad \text{for } t\in (0,1)$$ 
in \eqref{vJ} we see that
$$
v(J) <-d(d-2)|J|
+\sum_{j\in J,k\notin J}\left(1-\tfrac{1}{4}s_{jk}^2\right)
=|J|(2d-|J|)-\tfrac14 \sum_{j\in J,k\notin J}s_{jk}^2.
$$
In particular, $v(J)<0$ for $\abs{J}=0$ and $\abs{J}\geq 2d$. 
On the other hand, for $J=\{j\}$ one has
$$v(\{j\})=-d(d-2)
+\sum_{k\neq j}\left(2\sqrt{1-s_{jk}}-(1-s_{jk})\right)
> -d(d-2)
+\sum_{k\neq j}(1-s_{jk})=d.
$$
Therefore the classical value equals $\max_{0<|J|<2d}v(J)$. Moreover, the expression $|J|(2d-|J|)$ is at most $d^2$, leading to the following upper bound on the classical value,
\begin{equation*}
    d^2-\tfrac{1}{4}\min_{0<|J|<2d}\sum_{j\in J,k\notin J}s_{jk}^2.
\end{equation*}
Note that this value is strictly less than $d^2$ because $\sum_{j\in J,k\notin J}s_{jk}^2>0$ for every $0<|J|<d^2$ by Lemma \ref{l:matrix}.
\end{proof}

\begin{exa}
Let us give a complete study of the classical value in the case of BIC-POVMs on $\C^2$. A routine calculation shows that up to unitary similarity, every quadruple of projections adding to $2I$ is of the form
\begin{align*}
P_1&=\begin{pmatrix}
1&0\\0&0
\end{pmatrix},\quad 
P_2=\begin{pmatrix}
1-t_1-t_2&-y_1-y_2\\-y_1-y_2&t_1+t_2
\end{pmatrix},\\
P_3&=\begin{pmatrix}
t_1&y_1+iz\\y_1-iz&1-t_1
\end{pmatrix},\quad 
P_4=\begin{pmatrix}
t_2&y_2-iz\\y_2+iz&1-t_2
\end{pmatrix},\\
y_j&=\frac{t_j\sqrt{1-t_1-t_2}}{\sqrt{t_1+t_2}},\quad 
z=\pm\frac{\sqrt{t_1t_2}}{\sqrt{t_1+t_2}}
\end{align*}
where $t_1,t_2\ge0$ and $t_1+t_2\le 1$.
Let $S=(\tr(P_jP_k))_{i,j=1}^4$. Then
$$S=\begin{pmatrix}
1 & 1-t_1-t_2 & t_1 & t_2 \\
1-t_1-t_2 & 1 & t_2 & t_1 \\
t_1 & t_2 & 1 & 1-t_1-t_2 \\
t_2 & t_1 & 1-t_1-t_2 & 1 \\
\end{pmatrix}$$
is invertible for $t_1,t_2,1-t_1-t_2\neq0$. Therefore BIC-POVMs on $\C^2$ are (up to unitary similarity) given by $(\frac12P_j)_j$ as above for parameters $t_1,t_2$ satisfying $t_1,t_2>0$ and $t_1+t_2<1$.
Note that $t_1=t_2=\frac13$ yields a SIC-POVM.
A case-by-case analysis of the formula in Proposition \ref{p:bellmaxclas} shows that the classical game value of the Bell function \eqref{e:bellfunA} associated with the BIC-POVM given by $(t_1,t_2)$ equals
$$
v(t_1,t_2)=2\cdot\left\{
\begin{array}{lr}
t_1+t_2+2(\sqrt{1-t_1}+\sqrt{1-t_2}-1) & 0< t_2\le \frac{1-t_1}{2},\ 0< t_1\le \frac{1-t_2}{2};\\
2(\sqrt{1-t_1}+\sqrt{t_1+t_2})-1-t_2 & 0<t_1\le t_2,\ t_2\ge \frac{1-t_1}{2},\ t_1+t_2<1;\\
2(\sqrt{1-t_2}+\sqrt{t_1+t_2})-1-t_1 & 0<t_2\le t_1,\ t_1\ge \frac{1-t_2}{2},\ t_1+t_2<1.\\
\end{array}
\right.
$$
\begin{figure}[!ht]
\centering
\includegraphics[width=2 in]{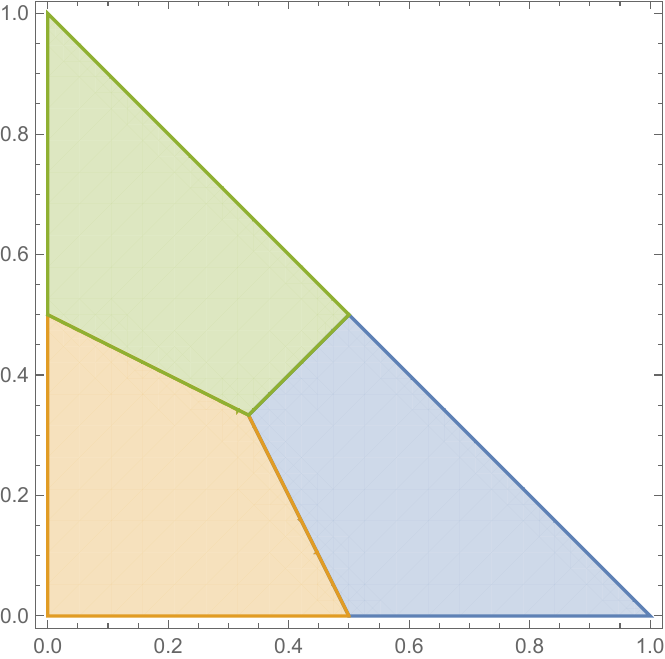}
\includegraphics[width=3 in]{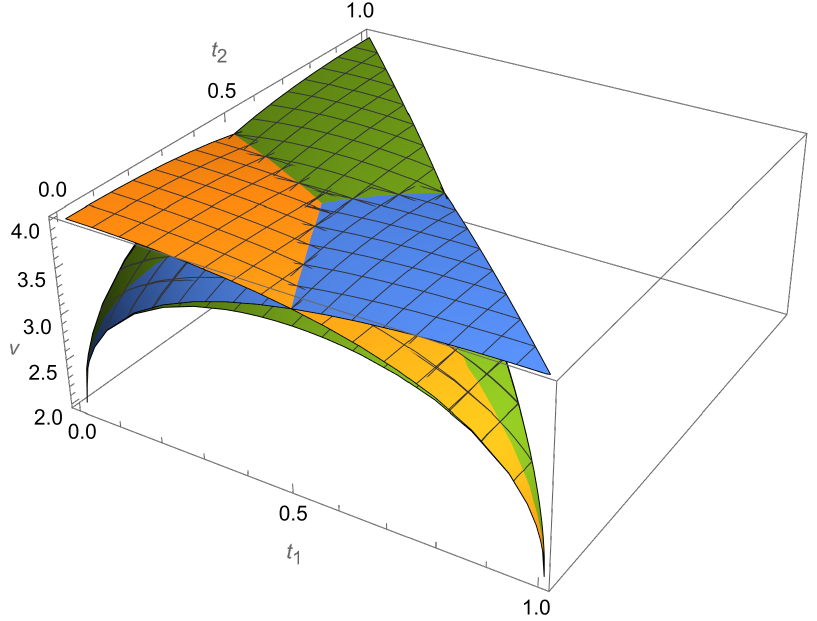}
\caption{The regions and branches of the classical value function $v(t_1,t_2)$.}
\end{figure}

In particular, the formula for $v(t_1,t_2)$ leads to the following observations.
\begin{enumerate}[(i)]
    \item $\lim_{t\to 0}v(t,t)=4$, so the classical value can gets arbitrarily close to the quantum value;
    \item $\lim_{t\to\frac12}v(t,t)=1+2\sqrt{2}<\frac83(\sqrt{6}-1)=v(\frac13,\frac13)$, so a SIC-POVM does not give the largest gap between quantum and classical value;
    \item $1+2\sqrt{2}=\inf\{v(t_1,t_2)\colon t_1,t_2>0,t_1+t_2<1\}$, so the gap between quantum and classical value is at most $3-2\sqrt{2}\approx 0.172$.
\end{enumerate}
\end{exa}

\section{Further remarks on BIC-POVMs}\label{a:icpovm} 

This appendix collects examples and statements which, while not required for the derivation of the main results, are relevant to the broader theme of this paper.

\subsection{BIC-POVMs versus rank-one IC-POVMS}

By definition, a BIC-POVM on $\C^d$ is a $d^2$-outcome IC-POVM of rank-one matrices with trace $\frac1d$.
The $\Z_d\times\Z_d$ covariant IC-POVMs from Section \ref{sec:icpovm} are automatically BIC-POVMs, and moreover consists of pairwise non-orthogonal matrices. However, this is not the case for general rank-one IC-POVMs.

\begin{exa}
The quadruple
$$
\begin{pmatrix}
\frac12&0\\0&0
\end{pmatrix},
\begin{pmatrix}
\frac18&\frac{-i}{2\sqrt{6}}\\ \frac{i}{2\sqrt{6}}&\frac13
\end{pmatrix},
\begin{pmatrix}
\frac18&\frac{-1}{2\sqrt{6}}\\ \frac{-1}{2\sqrt{6}}&\frac13
\end{pmatrix},
\begin{pmatrix}
\frac14&\frac{1+i}{2\sqrt{6}}\\ \frac{1-i}{2\sqrt{6}}&\frac13
\end{pmatrix}
$$
is a rank-one IC-POVM of pairwise non-orthogonal matrices, but not a BIC-POVM (since not all traces are the same).
\end{exa}

\begin{exa}
Consider the states
$\ket{\psi_1},\dots,\ket{\psi_9}\in\C^3$ given as
\begin{align*}
&\begin{pmatrix}
1\\ 0\\ 0
\end{pmatrix},
\begin{pmatrix}
0\\ 1\\ 0
\end{pmatrix},
\begin{pmatrix}
\sqrt{\frac27}\\ \sqrt{\frac27}\\ \sqrt{\frac37}
\end{pmatrix},
\begin{pmatrix}
-\sqrt{\frac27}\\ \sqrt{\frac27}\\ \sqrt{\frac37}
\end{pmatrix},
\begin{pmatrix}
e^{2it_0}\sqrt{\frac27}\\ e^{2it_1}\sqrt{\frac27}\\ \sqrt{\frac37}
\end{pmatrix},
\\
&\begin{pmatrix}
e^{-2it_0}\sqrt{\frac27}\\ \sqrt{\frac27}\\ \sqrt{\frac37}
\end{pmatrix},
\begin{pmatrix}
-\sqrt{\frac27}\\ e^{2it_1}\sqrt{\frac27}\\ \sqrt{\frac37}
\end{pmatrix},
\begin{pmatrix}
\sqrt{\frac27}\\ e^{2it_2}\sqrt{\frac27}\\ \sqrt{\frac37}
\end{pmatrix},
\begin{pmatrix}
\sqrt{\frac27}\\ e^{2it_3}\sqrt{\frac27}\\ \sqrt{\frac37}
\end{pmatrix}
\end{align*}
for
$$t_0=\tfrac{\pi}{3},\ t_1=\arctan(\tfrac{7}{\sqrt{3}}),\ 
t_2=\arctan(\sqrt{3}(14+\sqrt{217})),\ 
t_3=\arctan(\sqrt{3}(14-\sqrt{217})).$$
Then $M_j=\frac13 \dyad{\psi_j}$ form a BIC-POVM on $\C^3$ that does not arise from the construction in Section \ref{sec:icpovm} since $M_1M_2=0$.
\end{exa}

Furthermore, $d^2$-outcome rank-one IC-POVMs on $\C^d$ without additional restrictions can be constructed in a very haphazard way. For example, if $\ket{\psi_1},\dots,\ket{\psi_{d^2}}\in\C^d$ are sufficiently generic, then $\dyad{\psi_1},\dots,\dyad{\psi_{d^2}}$ form a basis of $\mtxc{d}$. Their sum is positive definite, and thus factors as $K^2$ for a positive definite $K\in\mtxc{d}$. Then $K^{-1}\dyad{\psi_1}K^{-1},
\dots,K^{-1}\dyad{\psi_{d^2}}K^{-1}$ is a rank-one IC-POVM.

Some more details are needed for a generic construction of BIC-POVMs. First observe that BIC-POVMs on $\C^d$ (up to unitary similarity) are in one-to-one correspondence with hermitian matrices $G\in\mtxc{d^2}$ such that $G_{jj}=1$ for all $j$, the Schur product of $G$ and $\overline{G}$ is invertible, and $\frac1d G$ is a projection. Concretely, $(\frac1d \dyad{\psi_j})_j$ is a BIC-POVM if and only if the matrix $G=(\braket{\psi_j}{\psi_k})_{j,k}$ satisfies the above properties. Thus one can construct BIC-POVMs as follows. Start with a full-rank $d^2\times d$ complex matrix $K_0$. After column orthonormalization of $K_0$ we obtain $K_1$. Then $K_2=K_1K_1^*$ is a projection of rank $d$. Then there exists an effectively computable unitary $U\in\mtxc{d^2}$ such that $K_3=UK_2U^*$ has uniform diagonal entries $\frac{\tr K_2}{d^2}=\frac1d$. Let $G=dK_3$; if $K_0$ was sufficienly generic, the Schur product of $G$ and $\overline{G}$ is invertible. Therefore $G$ has the desired properties, and one can extract a BIC-POVM out of $G$ using unitary diagonalization.

\subsection{BIC-POVM C*-algebra}

Let $S\in\mtxr{d^2}$ be a matrix induced by a BIC-POVM. 
For the sake of simplicity let us furthermore assume that all the entries of $S$ are nonzero (which is for example true for BIC-POVMs from Section \ref{sec:icpovm}).
In Section \ref{sec:cstar} we introduced the C*-algebra $\cA_S$ whose $d$-dimensional irreducible representations correspond to BIC-POVMs that induce S (Proposition \ref{p:lowdim}). However, $\cA_S$ might have other irreducible representations when $d>2$; see Example \ref{ex:counterex} below.
This motivates the introduction of the universal C*-algebra
\begin{alignat*}{2}
\cB_S={\rm C}^*\bigg\langle x_1,\dots,x_{d^2}\colon 
&& x_j=x_j^*=x_j^2\ \forall j,\ \sum_{j=1}^{d^2}x_j=d,\ 
x_jx_kx_j=s_{jk}x_j\ \forall j, k,&\\
&& \big[x_1x_{j_1}x_{j_2}x_1,\,x_1x_{j_3}x_{j_4}x_1\big]=0\ \forall j_1,\dots,j_4&
\bigg\rangle.
\end{alignat*}
Note that $\cB_S$ is a quotient of $\cA_S$.
A straightforward inspection shows that $\cB_S=\cA_S$ for $d\le 2$; however, $\cB_S\neq\cA_S$ in general (see Example \ref{ex:counterex} below).

\begin{lem}\label{l:commutative}
The C*-subalgebra $x_1\cdot \cB_S\cdot x_1$ is abelian.
\end{lem}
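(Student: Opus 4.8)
The plan is to show that the compressed generators $x_1 x_j x_1$, restricted to the corner $x_1 \cB_S x_1$, all commute with one another, and that they generate this corner as a C*-algebra. The key observation is that for any projection $x_j$, the element $x_1 x_j x_1$ lies in $x_1 \cB_S x_1$ and is self-adjoint; moreover since $x_1 x_j x_1 = s_{1j} x_1 + \big(x_1 x_j x_1 - s_{1j} x_1\big)$, and the relation $x_1 x_j x_1 = s_{1j} x_1$ fails in $\cB_S$ in general, these elements are genuinely not just multiples of $x_1$. The defining relation $\big[x_1 x_{j_1} x_{j_2} x_1,\, x_1 x_{j_3} x_{j_4} x_1\big] = 0$ for all indices is exactly a commutativity statement for a spanning family of $x_1 \cB_S x_1$; the work is to check that these elements really do span (a dense subset of) the corner.

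First I would argue that $x_1 \cB_S x_1$ is spanned (as a closed subspace) by elements of the form $x_1 w x_1$ where $w$ ranges over all words in $x_1,\dots,x_{d^2}$. This is immediate from the fact that $\cB_S$ is the closed span of such words (it is a universal C*-algebra on these generators) and that $x\mapsto x_1 x x_1$ is linear and continuous. Next, using $x_1^2 = x_1$, any word $x_1 w x_1$ can be rewritten: consecutive repetitions of a generator collapse by idempotency, so we may assume $w = x_{i_1} x_{i_2} \cdots x_{i_m}$ with no two adjacent indices equal. The crucial reduction is that $x_1 x_j x_1$ for various $j$ generate everything: I would show by induction on word length that $x_1 x_{i_1} \cdots x_{i_m} x_1$ lies in the C*-algebra generated by $\{x_1 x_j x_1 : j\}$. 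The inductive step uses the sandwiching trick: between any two occurrences of a generator one can insert $x_1$ up to controlled error, or more cleanly, use the relations $x_j x_k x_j = s_{jk} x_j$ to "peel off" generators. Actually the cleanest route is: since $x_1$ is a projection, $x_1 \cB_S x_1 = \overline{\operatorname{span}}\{x_1 x_{i_1} x_{i_2} \cdots x_{i_{m}} x_1\}$, and repeatedly applying $x_j x_1 x_j = s_{j1} x_j$ (valid since all $s_{jk}\neq 0$ was assumed) lets one split a long word at each appearance of $x_1$; but since the generators $x_j$ need not equal $x_1$, one instead shows that the subalgebra generated by $x_1 x_j x_1$ contains all $x_1 x_j x_k x_1$ because $x_1 x_j x_k x_1 = s_{1j}^{-1}(x_1 x_j x_1)(x_1 x_j x_k x_1)$... this needs care.

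The honest version: I would show $x_1 \cB_S x_1$ is generated by $\{x_1 x_{j} x_{k} x_1 : j,k\}$ using the universal relations, then observe that each $x_1 x_j x_k x_1$ is a self-adjoint-up-to-conjugation combination and that the relation $[x_1 x_{j_1} x_{j_2} x_1, x_1 x_{j_3} x_{j_4} x_1] = 0$ is precisely the statement that these generators pairwise commute. Since a C*-algebra generated by a family of pairwise-commuting \emph{normal} elements is abelian, it remains to note $x_1 x_j x_k x_1$ is normal in the corner: indeed $(x_1 x_j x_k x_1)^* = x_1 x_k x_j x_1$, and $(x_1 x_j x_k x_1)(x_1 x_k x_j x_1)$ versus $(x_1 x_k x_j x_1)(x_1 x_j x_k x_1)$ agree by the commutation relation (taking $j_1=j,j_2=k,j_3=k,j_4=j$).

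\textbf{Main obstacle.} The delicate point is verifying that $\{x_1 x_j x_k x_1 : j,k \in [d^2]\}$ really does generate $x_1 \cB_S x_1$ as a C*-algebra — i.e., that longer sandwiched words $x_1 x_{i_1}\cdots x_{i_m} x_1$ reduce to products of these. The idea is: given such a word, if some $i_\ell = 1$ for $1 < \ell < m$, split the word at that point as $(x_1 x_{i_1}\cdots x_{i_\ell})(x_{i_\ell}\cdots x_{i_m} x_1) = (x_1 x_{i_1}\cdots x_1)(x_1 \cdots x_{i_m} x_1)$ using $x_1^2=x_1$, reducing to shorter words by induction. If no interior index equals $1$, one uses $x_j x_k x_j = s_{jk} x_j$ with $s_{jk}\neq 0$ to shorten: any word with a repeated \emph{separated} index, say $i_\ell = i_n$ with $\ell < n$, can have the middle collapsed — but this requires the indices strictly between to reduce first. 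A clean induction on $m$ handling the base cases $m\le 2$ directly, plus the observation that $x_1 x_{i_1} x_{i_2} x_{i_3} x_1 = s_{1 i_1}^{-1}(x_1 x_{i_1} x_1)(x_1 x_{i_1} x_{i_2} x_{i_3} x_1)$ is circular, so instead one writes $x_1 x_{i_1} x_{i_2} x_{i_3} x_1$ and notes $x_{i_1} x_{i_3}$-type reductions aren't available in general. The resolution is that one does \emph{not} need all words — one only needs that $x_1 \cB_S x_1$ equals the closed span of $\{x_1 x_{j_1} x_{j_2} x_1\}$, and this follows because $\cB_S$ itself, being generated by projections $x_j$ summing to $dI$, has the property that $x_1 x x_1$ for arbitrary $x$ is a limit of combinations of $x_1 x_j x_k x_1$: writing $x_1 = \frac{1}{d} x_1 (\sum_k x_k) x_1$ repeatedly to pad words to even length and invoking the relations. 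I would present this reduction carefully as the technical heart of the lemma, then conclude abelianness from the pairwise-commutation hypothesis and normality.
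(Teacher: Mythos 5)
You have the right architecture: identify $B = \{x_1 x_j x_k x_1\}$ as a self-adjoint, pairwise-commuting family (pairwise commutativity and $B = B^*$ give abelianness of $\mathrm{C}^*(B)$), and then show $\mathrm{C}^*(B)$ exhausts the corner $x_1 \cB_S x_1$. But the technical heart — reducing a sandwiched word $x_1 x_{i_1}\cdots x_{i_m} x_1$ to a product of elements of $B$ — is left unresolved, and your attempted identities are circular, as you yourself notice. The reason they are circular is that you always split at a \emph{boundary} generator: for instance in
$x_1 x_{i_1} x_{i_2} x_{i_3} x_1 = s_{1 i_1}^{-1}(x_1 x_{i_1} x_1)(x_1 x_{i_1} x_{i_2} x_{i_3} x_1)$
the second factor is the word you started with. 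The fix is to split at an \emph{interior} generator, which is exactly what the paper does. From $x_j x_1 x_j = s_{j1} x_j$ and $x_1^2 = x_1$ one gets, for any words $u, v$,
\begin{equation*}
x_1\, u\, x_j\, v\, x_1 \;=\; \tfrac{1}{s_{j1}}\,(x_1\, u\, x_j\, x_1)\,(x_1\, x_j\, v\, x_1),
\end{equation*}
valid since all $s_{jk}\neq 0$ under the standing assumption of this subsection. When $u$ and $v$ are both nonempty (i.e.\ $x_j$ sits in an interior position of a word of length $\ge 3$), both factors on the right are strictly shorter than the left-hand side, so a straightforward induction on word length shows every $x_1 x_{i_1}\cdots x_{i_m} x_1$ is a product of elements of $B$. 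Your fallback idea of padding via $x_1 = \tfrac{1}{d} x_1 (\sum_k x_k) x_1$ is not needed and does not by itself yield the required length reduction. With the interior-splitting identity in hand, the rest of your argument (commutation relation for $B$, $B = B^*$, conclude abelianness) is sound, and the normality check you spell out is also subsumed: a $*$-closed commuting family always generates an abelian C*-algebra.
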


\begin{proof}
Observe that $B=\{x_1x_jx_kx_1\colon j,k\}$ generates $x_1\cdot \cB_S\cdot x_1$ as a C*-algebra. Indeed, this follows by induction using
$x_1ux_jvx_1 =\frac{1}{s_{j1}}x_1ux_jx_1\cdot x_1x_jvx_1$.
Therefore $\cB_S$ is abelian since $B$ is a commuting family and $B=B^*$.
\end{proof}

\begin{prop}\label{p:irrB}
Irreducible representations of $\cB_S$ correspond to BIC-POVMs that induce $S$.
\end{prop}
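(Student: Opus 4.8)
The plan is to prove the stated correspondence in both directions, building on Proposition~\ref{p:lowdim}, which already identifies BIC-POVMs inducing $S$ with the $d$-dimensional representations of $\cA_S$, together with Lemma~\ref{l:commutative}. Throughout I take ``representation'' to mean finite-dimensional, in keeping with the conventions of Section~\ref{sec:cstar}.

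For the (routine) forward direction I would start from a BIC-POVM $(\tfrac1d P_j)_j$ inducing $S$. By Proposition~\ref{p:lowdim} the tuple $(P_j)_j$ yields a $d$-dimensional representation $\pi$ of $\cA_S$ with $\pi(x_j)=P_j$; since $\cB_S$ is a quotient of $\cA_S$, it only remains to verify that $\pi$ satisfies the extra commutation relations defining $\cB_S$. By Proposition~\ref{p:rk} we have $\tr\pi(x_1)=1$, so $\pi(x_1)$ is a rank-one projection and hence $\pi(x_1)\,Y\,\pi(x_1)\in\C\,\pi(x_1)$ for every $Y\in\mtxc{d}$. In particular the operators $\pi(x_1x_{j_1}x_{j_2}x_1)$ all lie in the line $\C\,\pi(x_1)$, so they pairwise commute, and $\pi$ descends to a representation of $\cB_S$. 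This representation is irreducible because the $\pi(x_j)$ span $\mtxc{d}$ (Proposition~\ref{p:rk}), so the algebra they generate is all of $\mtxc{d}$.

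The substantive direction is the converse. Let $\pi\colon\cB_S\to\mtxc{D}$ be irreducible; composing with the quotient map $\cA_S\to\cB_S$ and invoking Proposition~\ref{p:rk}, $d\mid D$ and $P:=\pi(x_1)$ is a projection of rank $r:=D/d\ge 1$. As $\pi$ is an irreducible finite-dimensional representation of a C*-algebra, $\pi(\cB_S)=\mtxc{D}$, hence
\[
\pi\bigl(x_1\cdot\cB_S\cdot x_1\bigr)=P\,\mtxc{D}\,P\cong\mtxc{r}.
\]
By Lemma~\ref{l:commutative} this C*-algebra is abelian, which forces $r=1$, i.e.\ $D=d$. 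Thus $\pi$ is a $d$-dimensional representation of $\cA_S$, which by Proposition~\ref{p:lowdim} corresponds to a BIC-POVM inducing $S$ (concretely, the $\pi(x_j)$ are then rank-one projections, linearly independent by Proposition~\ref{p:rk}, and the relations $x_jx_kx_j=s_{jk}x_j$ give $\tr(\pi(x_j)\pi(x_k))=s_{jk}$); since the two assignments are mutually inverse already at the level of $\cA_S$, this proves the claim. I expect no genuine obstacle here: the one step worth isolating is the compression argument just used, where cutting the full matrix algebra $\mtxc{D}$ down by the rank-$(D/d)$ projection $\pi(x_1)$ produces $\mtxc{D/d}$, so that the commutativity supplied by Lemma~\ref{l:commutative} collapses the dimension all the way to $d$; everything else is bookkeeping with Propositions~\ref{p:rk} and~\ref{p:lowdim}.
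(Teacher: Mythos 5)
Your argument is correct, and the forward direction (rank-one $\pi(x_1)$ makes the corner scalar, so the extra commutation relations of $\cB_S$ hold; irreducibility from the $\pi(x_j)$ spanning $\mtxc{d}$) coincides with the paper's opening observation. For the substantive converse, however, you take a genuinely different and shorter route than the paper. You use irreducibility at the outset: a finite-dimensional irreducible $*$-representation satisfies $\pi(\cB_S)=\mtxc{D}$, so the corner $\pi(x_1\cdot\cB_S\cdot x_1)=P\,\mtxc{D}\,P\cong\mtxc{r}$ with $r=\rk P=D/d$ (Proposition~\ref{p:rk}), and Lemma~\ref{l:commutative} forces $r=1$, hence $D=d$ and Proposition~\ref{p:lowdim} finishes. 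The paper instead proves a stronger structural statement without assuming irreducibility: from a common eigenvector $\ket{\psi}$ of the abelian corner algebra it builds the subspace $\cK=\spa\{X_1\ket{\psi},\dots,X_{d^2}\ket{\psi}\}$, verifies invariance under all $X_j$ by reducing to the relations, and computes via the Gram matrix identity $GDG=dG$ that $\dim\cK=d$; thus \emph{every} representation of $\cB_S$ contains a $d$-dimensional sub-representation, which in particular pins down the irreducible ones. Your approach buys brevity and transparency (the whole converse is the single observation that an abelian corner of a full matrix algebra has rank at most one), at the cost of leaning on the standard fact that finite-dimensional irreducible representations of C*-algebras are surjective onto the full matrix algebra and of saying nothing about reducible representations; the paper's construction is longer but explicit and yields the extra information that arbitrary (finite-dimensional) representations decompose around $d$-dimensional blocks. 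Your explicit restriction to finite-dimensional representations is consistent with the paper's conventions (and is in fact also implicitly needed in the paper's common-eigenvector step), so no gap arises from it.
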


\begin{proof}
First observe that if $\pi$ is a $d$-dimensional representation of $\cA_S$, then $\pi(x_i)$ have rank one, so $\pi(x_j)M\pi(x_j)$ is a scalar multiple of $\pi(x_j)$ for every $M\in\mtxc{d}$ and $j$. Therefore $\pi$ is also a representation of $\cB_S$.
By Propositions \ref{p:rk} and \ref{p:lowdim} it thus suffices to show that every representation of $\cB_S$ has a sub-representation of dimension $d$.
Let $\pi:\cB_S\to B(\cH)$ be a representation of $\cB_S$ on a (nonzero) Hilbert space $\cH$, and denote $X_j=\pi(x_j)$. Since the $X_j$ add up to a multiple of the identity operator, at least one of them is nonzero; since $X_jX_1X_j=s_{j1}X_j$ for all $j$, we in particular have $X_1\neq0$.
Let $\cC=x_1\cdot \cB_S\cdot x_1$. By Lemma \ref{l:commutative}, the C*-algebra $\cC$ is abelian, and nonzero since $X_1\neq0$. Therefore the restriction of $\pi$ to $\cC$ has a one-dimensional sub-representation. That is, there exists a unit vector $\ket{\psi}\in\cH$ that is an eigenvector for every element of $\pi(\cC)$; namely, for every tuple $(j_1,\dots,j_m)\in\{1,\dots,d^2\}^m$ there is $\lambda_{j_1\dots j_m}\in\C$ such that
$$X_1X_{j_1}\cdots X_{j_m}X_1\ket{\psi}=\lambda_{j_1\dots j_m}\ket{\psi}.$$
In particular, $\ket{\psi}$ lies in the range of $X_1$.
Let $\cK\subset\cH$ be the span of 
$\{X_1\ket{\psi},\cdots,X_{d^2}\ket{\psi}\}$.

Firstly, we claim that $\cK$ is an invariant subspace for $X_1,\dots,X_{d^2}$, and therefore gives rise to a finite-dimensional sub-representation of $\pi$.
Observe that $N\in B(\cH)$ is zero if and only if $X_1X_jN=0$ for $j=1,\dots,d^2$ (because $X_jX_1X_j=s_{j1}X_j$, and $X_j$ add up to a nonzero multiple of the identity). Let $k,\ell\in\{1,\dots,d^2\}$ be arbitrary; we will show that
\begin{equation}\label{e:xkxl}
dX_kX_\ell \ket{\psi}=\sum_{j=1}^{d^2}
\frac{\lambda_{jk\ell}}{s_{1j}} X_j\ket{\psi}.
\end{equation}
By the preceding observation, \eqref{e:xkxl} is equivalent to
\begin{equation}\label{e:xixkxl}
dX_1X_iX_kX_\ell \ket{\psi}=\sum_{j=1}^{d^2}
\frac{\lambda_{jk\ell}}{s_{1j}} X_1X_iX_j\ket{\psi}
\qquad \text{for all }i=1,\dots,d^2.
\end{equation}
The choice of $\ket{\psi}=X_1\ket{\psi}$ and the defining relations for $X_j$ imply
\begin{align*}
\sum_{j=1}^{d^2}\frac{\lambda_{jk\ell}}{s_{1j}} X_1X_iX_j\ket{\psi}
&=\sum_{j=1}^{d^2} \frac{1}{s_{1j}}X_1X_iX_jX_1X_jX_kX_\ell X_1\ket{\psi}
=\sum_{j=1}^{d^2} X_1X_iX_jX_kX_\ell X_1\ket{\psi} \\
&=X_1X_i\left(\sum_{j=1}^{d^2} X_j\right)X_kX_\ell X_1\ket{\psi}
=dX_1X_iX_kX_\ell \ket{\psi}.
\end{align*}
Therefore \eqref{e:xixkxl} holds and consequently \eqref{e:xkxl} holds, so $\cK$ is an invariant subspace of $X_1,\dots,X_{d^2}$.

Secondly, we claim that $\dim\cK=d$. Let $G=(\bra{\psi}X_iX_j\ket{\psi})_{i,j}=(\lambda_{ij})_{i,j} \in\mtxc{d^2}$ be the Gram matrix of the spanning set of $\cK$, and let $D\in\mtxr{d^2}$ be the diagonal matrix whose $k$\textsuperscript{th} diagonal entry equals $\frac{1}{s_{1k}}$. Note that $\dim\cK=\rk G$, and $D$ is positive definite.
Observe that $GDG=dG$. Indeed, the $(i,j)$-entry of $GDG$ equals
\begin{align*}
\sum_{k=1}^{d^2}\frac{\lambda_{ik}\lambda_{kj}}{s_{1k}}
&=\sum_{k=1}^{d^2}\frac{1}{s_{1k}}
\bra{\psi}X_1X_iX_kX_1X_kX_jX_1\ket{\psi}
=\sum_{k=1}^{d^2}
\bra{\psi}X_1X_iX_kX_jX_1\ket{\psi} \\
&=\bra{\psi}X_1X_i\left(\sum_{k=1}^{d^2} X_k\right)X_jX_1\ket{\psi}
=\bra{\psi}X_1X_iX_jX_1\ket{\psi}=d\lambda_{ij}.
\end{align*}
Therefore $\frac1d \sqrt{D}G\sqrt{D}$ is a projection, so
\begin{align*}
\dim\cK&=\rk G=\rk \left(\frac1d \sqrt{D}G\sqrt{D}\right) 
=\tr \left(\frac1d \sqrt{D}G\sqrt{D}\right)
=\frac1d \tr (GD)\\
&=\frac1d\sum_{j=1}^{d^2}\frac{\lambda_{jj}}{s_{1j}}
=\frac1d\sum_{j=1}^{d^2}\frac{1}{s_{1j}}\bra{\psi}X_1X_jX_1\ket{\psi}
=\frac1d \left(\sum_{j=1}^{d^2}\bra{\psi}X_1\ket{\psi}\right)=d,
\end{align*}
as desired.
\end{proof}

Let $S$ be a $d^2\times d^2$ matrix whose diagonal entries equal 1 and off-diagonal entries equal $\frac{1}{d+1}$.
In view of Proposition \ref{p:irrB}, irreducible representations of $\cB_S$ are in one-to-one correspondence with SIC-POVMs (symmetric IC-POVMs).
An analog of Proposition \ref{p:irrB} pertaining to mutually unbiased bases is given in \cite{npa2012,gribling22}.
On the other hand, $\cA_S$ may have representations that do not arise from SIC-POVMs, as it is shown by the following example.

\begin{exa}\label{ex:counterex}
Let $d=3$ and let $S$ be the $9\times 9$ matrix with 1 on the diagonal and $\frac{1}{4}$ elsewhere. In other words, $S$ is induced by a SIC-POVM on $\C^3$. We will show $\cA_S\neq \cB_S$ by producing a 6-dimensional irreducible representation of $\cA_S$ (on the other hand, all irreducible representations of $\cB_S$ are 3-dimensional by Proposition \ref{p:irrB}).
Let $\xi=e^{\frac{\pi i}{12}}$ be the principal 24\textsuperscript{th} root of unity, and consider nine $6\times 6$ matrices $X_1,\dots,X_9$
\begin{align*}
& I_2\oplus 0_4,
\begin{pmatrix}
 \frac{1}{4} & 0 & \frac{-\sqrt{3}}{4} & 0 & 0 & 0 \\
 0 & \frac{1}{4} & 0 & \frac{-\sqrt{3}}{4} & 0 & 0 \\
 \frac{-\sqrt{3}}{4} & 0 & \frac{3}{4} & 0 & 0 & 0 \\
 0 & \frac{-\sqrt{3}}{4} & 0 & \frac{3}{4} & 0 & 0 \\
 0 & 0 & 0 & 0 & 0 & 0 \\
 0 & 0 & 0 & 0 & 0 & 0 
\end{pmatrix},
\begin{pmatrix}
 \frac{1}{4} & 0 & \frac{\sqrt{3}}{4} & 0 & 0 & 0 \\
 0 & \frac{1}{4} & 0 & \frac{\sqrt{3}}{4} & 0 & 0 \\
 \frac{\sqrt{3}}{4} & 0 & \frac{3}{4} & 0 & 0 & 0 \\
 0 & \frac{\sqrt{3}}{4} & 0 & \frac{3}{4} & 0 & 0 \\
 0 & 0 & 0 & 0 & 0 & 0 \\
 0 & 0 & 0 & 0 & 0 & 0 
\end{pmatrix}, \\ 
&
\begin{pmatrix}
 \frac{1}{4} & 0 & \frac{\xi^6}{4} & 0 & \frac{1}{\sqrt{8} \xi^5} & 0 \\
 0 & \frac{1}{4} & 0 & \frac{\xi^6}{4} & 0 & \frac{-1}{\sqrt{8} \xi^5} \\
 \frac{1}{4\xi^6} & 0 & \frac{1}{4} & 0 & \frac{-\xi}{\sqrt{8}} & 0 \\
 0 & \frac{1}{4\xi^6} & 0 & \frac{1}{4} & 0 & \frac{\xi}{\sqrt{8}} \\
 \frac{\xi^5}{\sqrt{8}} & 0 & \frac{-1}{\sqrt{8}\xi} & 0 & \frac{1}{2} & 0 \\
 0 & \frac{-\xi^5}{\sqrt{8}} & 0 & \frac{1}{\sqrt{8}\xi} & 0 & \frac{1}{2}
\end{pmatrix},
\begin{pmatrix}
 \frac{1}{4} & 0 & \frac{1}{4 \xi^6} & 0 & \frac{\xi^5}{\sqrt{8}} & 0 \\
 0 & \frac{1}{4} & 0 & \frac{1}{4 \xi^6} & 0 & \frac{-\xi^5}{\sqrt{8}} \\
 \frac{\xi^6}{4} & 0 & \frac{1}{4} & 0 & \frac{-1}{\sqrt{8} \xi} & 0 \\
 0 & \frac{\xi^6}{4} & 0 & \frac{1}{4} & 0 & \frac{1}{\sqrt{8} \xi} \\
 \frac{1}{\sqrt{8} \xi^5} & 0 & \frac{-\xi}{\sqrt{8}} & 0 & \frac{1}{2} & 0 \\
 0 & \frac{-1}{\sqrt{8} \xi^5} & 0 & \frac{\xi}{\sqrt{8}} & 0 & \frac{1}{2}
\end{pmatrix}, \\
&
\begin{pmatrix}
 \frac{1}{4} & 0 & 0 & \frac{\xi^6}{4} & \frac{1}{4} & \frac{1}{4 \xi^6} \\
 0 & \frac{1}{4} & \frac{\xi^6}{4} & 0 & \frac{\xi^6}{4} & \frac{-1}{4} \\
 0 & \frac{1}{4\xi^6} & \frac{1}{4} & 0 & \frac{1}{4} & \frac{\xi^6}{4} \\
 \frac{1}{4\xi^6} & 0 & 0 & \frac{1}{4} & \frac{1}{4 \xi^6} & \frac{-1}{4} \\
 \frac{1}{4} & \frac{1}{4\xi^6} & \frac{1}{4} & \frac{\xi^6}{4} & \frac{1}{2} & 0 \\
 \frac{\xi^6}{4} & \frac{-1}{4} & \frac{1}{4\xi^6} & \frac{-1}{4} & 0 & \frac{1}{2}
\end{pmatrix},
\begin{pmatrix}
 \frac{1}{4} & 0 & 0 & \frac{1}{4} & \frac{(-1-\sqrt{3})}{8} & \frac{(1-\sqrt{3})}{8} \\
 0 & \frac{1}{4} & \frac{-1}{4} & 0 & \frac{(1-\sqrt{3})}{8} & \frac{(1+\sqrt{3})}{8} \\
 0 & \frac{-1}{4} & \frac{1}{4} & 0 & \frac{(\sqrt{3}-1)}{8} & \frac{(-1-\sqrt{3})}{8} \\
 \frac{1}{4} & 0 & 0 & \frac{1}{4} & \frac{(-1-\sqrt{3})}{8} & \frac{(1-\sqrt{3})}{8} \\
 \frac{-1-\sqrt{3}}{8} & \frac{1-\sqrt{3}}{8} & \frac{\sqrt{3}-1}{8} & \frac{-1-\sqrt{3}}{8} & \frac{1}{2} & 0 \\
 \frac{1-\sqrt{3}}{8} & \frac{1+\sqrt{3}}{8} & \frac{-1-\sqrt{3}}{8} & \frac{1-\sqrt{3}}{8} & 0 & \frac{1}{2}
\end{pmatrix},\\
&
\begin{pmatrix}
 \frac{1}{4} & 0 & 0 & \frac{1}{4\xi^6} & \frac{1}{4} & \frac{\xi^6}{4} \\
 0 & \frac{1}{4} & \frac{1}{4\xi^6} & 0 & \frac{1}{4 \xi^6} & \frac{-1}{4} \\
 0 & \frac{\xi^6}{4} & \frac{1}{4} & 0 & \frac{1}{4} & \frac{1}{4 \xi^6} \\
 \frac{\xi^6}{4} & 0 & 0 & \frac{1}{4} & \frac{\xi^6}{4} & \frac{-1}{4} \\
 \frac{1}{4} & \frac{\xi^6}{4} & \frac{1}{4} & \frac{1}{4\xi^6} & \frac{1}{2} & 0 \\
 \frac{1}{4\xi^6} & \frac{-1}{4} & \frac{\xi^6}{4} & \frac{-1}{4} & 0 & \frac{1}{2}
\end{pmatrix},
\begin{pmatrix}
 \frac{1}{4} & 0 & 0 & \frac{-1}{4} & \frac{(-1-\sqrt{3})}{8} & \frac{(\sqrt{3}-1)}{8} \\
 0 & \frac{1}{4} & \frac{1}{4} & 0 & \frac{(\sqrt{3}-1)}{8} & \frac{(1+\sqrt{3})}{8} \\
     0 & \frac{1}{4} & \frac{1}{4} & 0 & \frac{(\sqrt{3}-1)}{8} & \frac{(1+\sqrt{3})}{8} \\
 \frac{-1}{4} & 0 & 0 & \frac{1}{4} & \frac{(1+\sqrt{3})}{8} & \frac{(1-\sqrt{3})}{8} \\
 \frac{-1-\sqrt{3}}{8 } & \frac{\sqrt{3}-1}{8 } & \frac{\sqrt{3}-1}{8 } & \frac{1+\sqrt{3}}{8 } & \frac{1}{2} & 0 \\
 \frac{\sqrt{3}-1}{8 } & \frac{1+\sqrt{3}}{8 } & \frac{1+\sqrt{3}}{8 } & \frac{1-\sqrt{3}}{8 } & 0 & \frac{1}{2}
\end{pmatrix}.
\end{align*}
A direct yet tedious calculation shows that $X_1,\dots,X_9$ are projections, $\sum_jX_j=3I$ and $X_jX_kX_j=\frac14 X_j$ for $j\neq k$. Therefore $\pi(x_j)=X_j$ defines a representation $\pi:\cA_S\to \mtxc{6}$.
Furthermore, one can verify that the span of $\{X_{j_1}X_{j_2}\colon 1\le j_1,j_2\le 9\}$ has dimension 25. Hence $\pi$ is not a 2-fold inflation of a $3$-dimensional representation or a direct sum of two $3$-dimensional representations (since $25>9,18$), so $\pi$ is an irreducible representation by Proposition \ref{p:rk}.
\end{exa}

\end{appendices}

\printbibliography

\end{document}